\theoremstyle{plain}
\newtheorem{theorem}{Theorem}[section]
\newtheorem{proposition}[theorem]{Proposition}
\newtheorem{lemma}[theorem]{Lemma}
\theoremstyle{definition}
\newtheorem{remark}[theorem]{Remark}
\newtheorem{example}[theorem]{Example}
\newtheorem{assumption}[theorem]{Assumption}
\theoremstyle{remark}
\renewenvironment{thebibliography}[1]{%
\begin{oldthebibliography}{#1}%
\setlength{\baselineskip}{.9em}
\linespread{1}
\small
\setlength{\parskip}{0.3ex}%
\setlength{\itemsep}{.5em}%
}%
{%
\end{oldthebibliography}%
}
\newcommand{\E}{\mathbb{E}}
\newcommand{\F}{\mathbb{F}}
\newcommand{\N}{\mathbb{N}}
\renewcommand{\P}{\mathbb{P}}
\newcommand{\Q}{\mathbb{Q}}
\newcommand{\R}{\mathbb{R}}
\renewcommand{\S}{\mathbb{S}}
\newcommand{\cF}{\mathcal{F}}
\newcommand{\cG}{\mathcal{G}}
\newcommand{\cH}{\mathcal{H}}
\newcommand{\cM}{\mathcal{M}}
\newcommand{\cP}{\mathcal{P}}
\newcommand{\fP}{\mathfrak{P}}
\newcommand{\as}{\mbox{-a.s.}}
\newcommand{\ov}{\overline}
\numberwithin{equation}{section}
\begin{document}

\title{\vspace{-5em} 
Pathwise superhedging on prediction sets
\date{\today}
\author{
  Daniel Bartl%
  \thanks{
  Department Mathematics, University of Vienna, \texttt{daniel.bartl@univie.ac.at}.
  }
  \and
    Michael Kupper%
     \thanks{
     Department Mathematics and Statistics, University of Konstanz, \texttt{kupper@uni-konstanz.de}.
     }
  \and
  Ariel Neufeld%
   \thanks{
   Division of Mathematical Sciences, NTU Singapore, \texttt{ariel.neufeld@ntu.edu.sg}.   }
 }
}
\maketitle \vspace{-1.2em}

\begin{abstract}
In this paper we provide a pricing-hedging duality for the model-independent superhedging price with respect to a prediction set $\Xi\subseteq C[0,T]$, where  the superhedging property needs to hold pathwise, but only for paths lying in $\Xi$. For any Borel measurable claim $\xi$ which is bounded from below, the superhedging price coincides with the supremum over all pricing functionals $\E_\Q[\xi]$ with respect to martingale measures $\Q$ concentrated on the prediction set $\Xi$. This allows to include beliefs in future paths of the price process expressed by the set $\Xi$, while  eliminating all those which are seen as impossible. Moreover, we provide several examples to justify our setup.
\end{abstract}

\vspace{.9em}

{\small
\noindent \emph{Keywords} Model-Independent Superhedging; Pricing-Hedging Duality; Modeling Beliefs

\noindent \emph{AMS 2010 Subject Classification}
91B24; 91G20;  60G44 
}

\section{Introduction}
In this paper we study the problem of pathwise superhedging on a prediction set $\Xi\subseteq C[0,T]$
of continuous price paths, i.e.~finding a predictable trading strategy which super-replicates a given contingent claim $\xi:C[0,T]\to\mathbb{R}$ simultaneously for all possible future price paths in $\Xi$.

Unlike the famous Black-Scholes model, most financial models cannot exactly replicate every contingent claim. This phenomenon, called incompleteness of the market is equivalent to the failure of uniqueness of equivalent local martingale measures. Since there is not an unique price which financial agents are willing to accept, the concept of superhedging starting with \cite{ElKarouiQuenez.95} has been well established in the financial literature. Here one wants to  find the smallest initial capital for which a trading strategy exists which superhedges the claim $\xi$.

To be more precise, in classical finance, one assigns probabilities to all events
by fixing a probability measure $\P$. Then, the superhedging property is required to hold $\P$-a.s.

Recently, motivated by the early works of \cite{Knight.21,Ellsberg.61}, 
one started to consider a set of probability measures $\cP$, rather than an unique one, where each element represents the candidates for the possible right law. In the so-called \textit{quasi-sure setting}, one then requires the superhedging property to hold true $\cP$-quasi surely, which means $\P$-a.s. for all $\P\in \cP$. This problem under volatility uncertainty was motivated by the early works of \cite{AvellanedaLevyParas.95,Lyons.95} and later has also been solved in \cite{DenisMartini.06,SonerTouziZhang.2010bsde,NeufeldNutz.12,PossamaiRoyerTouzi.13}.

In the \textit{model-independent} (or \textit{pathwise}) approach, one wants to go away from the classical assumption of assigning probabilities to events related to the financial market by fixing one probability measure, or a set of probability measures allowing for model ambiguity. In this setting, superhedging is required to hold true  for every possible future path in $C[0,T]$ of the price process. Such an approach has started with the seminal work \cite{Hobson.98} and has been recently lead to attention in various other works; we refer to \cite{AcciaioBeiglbockPenknerSchachermayer.16,BartlKupperProemelTangpi.17,BurzoniFrittelliMaggis.15,DolinskySoner.12}, to name but a few. 

However, it  turns out that the concept of superhedging is too robust leading to too high prices. In fact, for stochastic volatility or rough volatility models, it turns out that the classical superhedging price coincides with the model-independent one and is so high that for Markovian payoffs of the form $\Gamma(S_T)$, like e.g.\ the European Call and Put option, the optimal superhedging
strategy can be chosen to be of buy-and-hold type, see  \cite{DolinskyNeufeld.16,Neufeld.17}.  To reduce the model-independent superhedging price,  inspired by the work of \cite{Mykland.03}, \cite{HouObloj.15} introduced the concept of prediction sets, where agents may allow to exclude paths which they consider to be impossible to model future price paths. Hence they require the superhedging property only to hold true on every path in $\Xi\subseteq C[0,T]$ of their prediction set. 


Whereas the pricing-hedging duality is well-understood for the pathwise superhedging with respect to all paths in $C[0,T]$, it turns out that the problem becomes considerably more difficult when requiring the superhedging property only to hold true on the prediction set $\Xi\subseteq C[0,T]$.
To illustrate the difficulty, consider the examples where the agent may believe in the Black-Scholes model, or is uncertain about the volatility like in the G-expectation (see \cite{Peng.07}) and hence models his/her beliefs by requiring
\begin{equation*}
\Xi_{BS}:=\big\{ \omega \in C_0[0,T]:\  d\langle \omega \rangle_t =\sigma^2 \omega(t)^2\,dt\big\}, \quad\mbox{or} \quad  \Xi_{G}:=\big\{ \omega \in C_0[0,T]:\, \tfrac{d\langle \omega \rangle_t}{dt} \in [\underline \sigma^2,\overline \sigma^2]\big\}.
\end{equation*}
Observe that these sets are neither closed, nor $\sigma$-compact.
In \cite{HouObloj.15}, they get an asymptotic pricing-hedging duality result. More precisely, the asymptotic price being defined as the limit (when  $\varepsilon \to 0$) of superhedging prices on $\varepsilon$-varied prediction sets $\Xi^\varepsilon$ turns out to  coincide with the limit of the supremum of the pricing functionals with respect to the martingale measures having support on the $\varepsilon$-varied prediction sets $\Xi^\varepsilon$. However, typically, $\Xi$ is not closed and hence $\Xi^\varepsilon$ might be far away from the original set $\Xi$. Indeed, one can show that in the canonical example of the paths $\Xi_{BS}$ of the Black-Scholes model, $\Xi^\varepsilon=C[0,T]$ for any $\varepsilon>0$.

In \cite{BartlKupperProemelTangpi.17}, they obtain a superhedging duality with respect to a prediction set $\Xi \subseteq C[0,T]$, where the superhedging price coincides with the supremum over all pricing functionals with respect to martingale measures concentrated on the prediction set $\Xi$.  
 As trading strategies, they use simple strategies and define the gain process to be the limit inferior of the discrete integral with respect to the simple strategies; we refer to \cite{PerkowskiPromel.16,BeiglbockCoxHuesmannPerkowskiProemel.17,PerkowskiPromel.16,Vovk.12,Vovk.16} which also applied this setup in the context of superhedging. 
 However, they need to impose the crucial assumption that the prediction set $\Xi$ is $\sigma$-compact in a topology which is at least as fine as the usual sup-norm; a property which is in general not satisfied in the examples of paths motivated by financial applications, like e.g. $\Xi_{BS}$ or $\Xi_{G}$.


In this paper, we extend the work of \cite{HouObloj.15,BartlKupperProemelTangpi.17}. 
We do not require any {strong} topological properties on $\Xi$ such that our  pricing-hedging duality also covers e.g., $\Xi_{BS}$ or $\Xi_{G}$ as examples.  
{For our first results, which are stated in Theorem~\ref{thm:main.stopped}, Theorem~\ref{thm:main.stopped.generlized.integrals}, and Theorem~\ref{thm:main.with.Z},
we stick to the formulation of \cite{BartlKupperProemelTangpi.17} for the superhedging price. In Theorem~\ref{thm:main.stopped} and Theorem~\ref{thm:main.with.Z}, we allow to trade next to the stock $S$ also in the iterated integral $\mathbb{S}:=\int S \,dS$, which seems to be a bit artificial in the context of superhedging at first glance. However, we point out that the $d\mathbb{S}$-integral should not be seen as an artificial traded asset but rather as allowing
a larger set of admissible trading strategies. Indeed when enlarging the set of admissible strategies adequately, one directly obtains a superhedging duality result in the formulation of \cite{BartlKupperProemelTangpi.17} where one is only allowed to trade in the stock $S$; see Theorem~\ref{thm:main.stopped.generlized.integrals}.}

{Note that the superhedging duality results in Theorem~\ref{thm:main.stopped}, Theorem~\ref{thm:main.stopped.generlized.integrals}, and Theorem~\ref{thm:main.with.Z} hold for claims $\xi$ which are semicontinuous. We then continue the idea of enlarging the set of admissible strategies by following an idea of \cite{Vovk.16}. 
The enlargement of the admissible strategies has the consequence that our pricing functional becomes enough regular such that, with an application of Choquet's capacitability theorem in the functional form, we derive our desired superhedging duality result also for Borel measurable claims; we refer to Theorem~\ref{thm:main.stopped.liminf.clsoure} for our main result. 
}

{Summing up, our contribution is twofold. First, we derive a pathwise superhedging duality where the superhedging property only needs to hold for paths lying in a given prediction set $\Xi$ which does not require any strong topological properties. In particular, compared to previous results in the literature, we are now able to establish 
a duality result for prediction sets which enforces conditions to hold both on the price path and its quadratic variation; see Remark~\ref{rem:stopped-admissible} for a more detailed explanation also in relation to Assumption~\ref{ass:A}. Second, we obtain a pathwise superhedging duality for measurable claims. This extension from semicontinuous claims to Borel measurable ones is to the best of our knowledge the first one in the context of pathwise superhedging duality and allows to consider financial derivatives such as, e.g., digital options or financial derivatives where the regularity is not known, for example American options evaluated at optimal exercise times (see also \cite{NutzZhang.15}).}

%

{To remove strong topological requirements on the prediction set $\Xi$}, we lift the superhedging problem to the product space $\ov \Omega:=\Omega\times \Omega$, where  the first coordinate represents the original price process and the second one represents its quadratic variation; 
we refer to \cite{Haussmann.86,TanTouzi.11,LiuNeufeld.16} where similar enlarged spaces $\ov \Omega$ were considered. 
On the enlarged space $\ov  \Omega$,  we then prove  the desired superhedging duality and can then conclude the desired result on the original space. This works, roughly speaking, by observing that for any probability measure $\ov\Q$ on the enlarged space for which the first coordinate is a local martingale with the second one as its quadratic variation, the $\ov \Q$-completed natural filtration coincides with the $\ov \Q$-completed one generated only by the first coordinate. This then leads to a one-to-one correspondence to the original space.  
{For a further discussion explaining the advantage of lifting the original problem to an enlarged one, we refer to Remark~\ref{rem:stopped-admissible}.}
We point out that our trading strategies are defined with respect to the (right-continuous) natural filtration, without any completion with respect to a probability measure, such that we retain the framework of pathwise superhedging without any probabilistic beliefs. 

The remainder of this paper is organized as follows. In Section~\ref{sec:setup.and.main},
we introduce the setup and state our main results of this paper. Then in Section~\ref{sec:examples}, we provide several examples to motivate our theorems. In Section~\ref{sec:Proof}, we provide the proof of our main results. Finally, in Section~\ref{sec:technical-Results}, we attach some technical results required in the proof of our theorems.

\section{Setup and main results}
\label{sec:setup.and.main}

\subsection{Setup}
\label{subsec:setup}

Fix a finite time horizon $T\in(0,\infty)$, and let $C[0,T]$ be the space of all continuous paths $\omega\colon[0,T]\to\mathbb{R}$, which as usual is endowed with the sup-norm $\|\omega\|_\infty:=\sup_{0\leq t \leq T} |\omega(t)|$. Denote by $S=(S_t)_{0\leq t \leq T}$ the canonical process $S_t(\omega)=\omega(t)$, and define for each $m\geq1$ the sequence $\sigma^m_0:=0$,
  \[
    \sigma^m_{k+1}:=\inf\big\{t\geq \sigma^m_k : |S_t-S_{\sigma^m_k}|\geq 2^{-m}\big\},\quad k\geq0.
  \]
  Since $S$ has continuous paths, $\lim_{k\to \infty} \sigma^m_k(\omega)=\infty$ holds for all $\omega\in C[0,T]$.
  Moreover, let $\mathbb{S}\colon[0,T]\times C[0,T]\to\mathbb{R}$ be the process defined by  
  \[
  \mathbb{S}_t:=\liminf_{m\to\infty}\mathbb{S}^m_t
  \quad\text{where}\quad
    \mathbb{S}^m_t:= \sum_{k=0}^\infty S_{\sigma^m_k}\big(S_{\sigma^m_{k+1}\wedge t}-S_{\sigma^m_k\wedge t}\big).
  \]
Define a pathwise quadratic variation
$\langle\cdot\rangle\colon C[0,T] \to C[0,T]$ by 
\[\langle\omega\rangle:=\begin{cases} S^2(\omega)-S^2_0(\omega)-2\mathbb{S}(\omega) &\mbox{if }\omega\in \Omega \\ 0 &\mbox{else,}\end{cases}\]
where $\Omega$ is the Borel set of all $\omega\in C[0,T]$ such that $\mathbb{S}^m(\omega)\to\mathbb{S}(\omega)$ in the sup-norm
and $ S^2(\omega)-S^2_0(\omega)-2\mathbb{S}(\omega)$ is nondecreasing.
The space $\Omega$ is endowed with the relative topology
and equipped with the corresponding relative Borel $\sigma$-field $\cF$. Moreover, we denote by $\fP(\Omega)$ the set of Borel probability measures on $(\Omega,\cF)$. Furthermore, for any $\omega \in \Omega$ and $t \in[0,T]$ we denote by $\omega^t(s):= \omega(t\wedge s)$, $s \in [0,T]$, the stopped path of $\omega$ at time $t$.
{Finally, denote by $\F=(\cF_t)_{0\leq t \leq T}$ the raw filtration generated by the canonical process $S$ on $\Omega$, i.e.\ $\cF_t= \sigma(S_s, s\leq t)$, and by $\F_+$ its right-continuous version, $\cF_{t+}=\cap_{s>t}\cF_{s\wedge T}$, for each $t$. }

\begin{remark}
  \label{rem:Omega}
  The construction of the pathwise quadratic variation $\langle\cdot\rangle$ is similar to
  \cite{NeufeldNutz.13a} and goes back to \cite{Karandikar.95,NutzSoner.10}. For every $\Q\in\mathfrak{P}(\Omega)$ under which the canonical process is a semimartingale in the raw filtration, it is a consequence of the Burkholder-Davis-Gundy inequalities that
    \[
      \sup_{0\leq t\leq T}\bigg| \mathbb{S}^m_t-\sideset{^{(\Q)\hspace{-7pt}}}{}{\int_0^t}S_{s}\,dS_s\bigg|\to 0\quad \Q\as.
    \]
  Hence, $\langle S\rangle =S^2-S_0^2 - 2\,{}^{(\Q)\hspace{-5pt}}\int S\,dS  = \langle S\rangle^{(\Q)}$ holds $\Q\as$ as an application of the integration-by-parts formula for the It\^o integral. In particular, $\Q(\Omega)=1$.
{Here $^{(\Q)}\!\!\int$ and $\langle \cdot\rangle^{(\Q)}$ are the usual stochastic integral and quadratic variation, respectively, defined under the semimartingale meausre $\mathbb{Q}$.} 
  Notice that $\mathbb{S}_t$ coincides with F\"ollmer's pathwise stochastic integral \cite{Follmer.81} of $\int_0^t S\,dS$ on $\Omega$.
\end{remark}

Let $\mathcal{H}$ be the set of all simple processes
$H\colon [0,T]\times \Omega\to\mathbb{R}$ of the form 
$H=\sum_{l=1}^{L}  h_l1_{( \tau_l,\tau_{l+1}]}$
where $L\in\mathbb{N}$, $0\leq \tau_1\leq\dots\leq \tau_{L+1}\leq T$ are stopping times 
w.r.t.\ the filtration $\F_+$, and $ h_l\colon \Omega\to\mathbb{R}$ 
are bounded $\cF_{\tau_l+}$-measurable functions. 
For any $H\in\mathcal{H}$ the pathwise stochastic integral
\[ 
  (H\cdot S)_t({\omega}):=\sum_{l=1}^{L}  h_l({\omega}) ( S_{ \tau_{l+1}({\omega})\wedge t}({\omega})- S_{ \tau_{l}({\omega})\wedge t}({\omega}))
\]
is well-defined for all $t\in[0,T]$ and all ${\omega} \in \Omega$. 
Similarly the integral $(H\cdot \mathbb{S})_t$ is well-defined.
%
\subsection{Superhedging duality for prediction sets closed under stopping}
\label{subsec:result-stopped}
Our goal is to identify the pathwise superhedging price when the superhedging property only needs to hold for a given prediction set $\Xi\subseteq \Omega$ of price paths. We will mostly work with a prediction set which satisfies the following assumptions.
\begin{assumption}\label{ass:A}
$\Xi \subseteq \Omega$ is a nonempty set of paths of the form
\begin{equation*}
\Xi = \big\{\omega \in \Omega\colon (\omega,\langle w \rangle)\in \overline{\Xi}\big\}
\end{equation*}
for a nonempty set $\overline{\Xi} \subseteq C[0,T]\times C[0,T]$ satisfying

\vspace*{0.15cm}
\noindent
\textbf{(A1)} $\overline{\Xi}$ is the countable union of compact sets,\\
\textbf{(A2)} $(\omega,\nu) \in \overline{\Xi}$ implies that $\nu(0)=0$ and $\nu$ is nondecreasing,\\
\textbf{(A3)} for any $t \in [0,T]$ we have that $(\omega,\nu)\in \overline{\Xi}$ implies that $(\omega^t,\nu^t) \in \overline{\Xi}$.
\end{assumption} 
{We discuss these assumptions below, see Remark~\ref{rem:stopped-admissible}.}
%
To identify the pathwise superhedging price with respect to a given prediction set $\Xi\subseteq \Omega$, we denote by $\mathcal{M}(\Xi)$ the set of local martingale measures for $S$ concentrated on $\Xi$, i.e.\
\begin{equation*}
\mathcal{M}(\Xi):= \big\{\Q \in \fP(\Omega)\colon \mbox{$S$ is a $\Q$-$\F$-local martingale and } \Q(\Xi)=1 \big \}.
\end{equation*}
Notice that $\Xi \subseteq \Omega$ satisfying Assumption~\ref{ass:A} ensures that $\mathcal{M}(\Xi)$ is nonempty, as condition (A3) enforces $\Xi$ to contain constant paths.
Our first main result is the following.
\begin{theorem}
\label{thm:main.stopped}
Let $\Xi\subseteq \Omega$ be a prediction set satisfying Assumption~\ref{ass:A}. Then
  \begin{align*}
   \Phi(\xi):&= \inf\left\{ \lambda\in\mathbb{R} : 
    \begin{array}{ll}
    \text{there are sequences $(H^n)$ and  $(G^n)$ in ${\mathcal{H}}$ such that }\\
     \lambda + (H^n\cdot S)_t+(G^n\cdot  \mathbb{S})_t\geq 0 \text{ on } \Xi \text{ for all } n,t \\
  \lambda+ \liminf_{n \to \infty}\big( (H^n\cdot S)_T +( G^n\cdot  \mathbb{S})_T \big)\geq  \xi \text{ on } \Xi
  \end{array}
  \!\right\}\\
  &= \sup_{\mathbb{Q}\in\mathcal{M}(\Xi)} \mathbb{E}_\mathbb{Q}  [\xi]
  \end{align*}
  for every function $\xi\colon C[0,T]\to[0,\infty]$ of the form $\xi(\omega)=\liminf_{n\to \infty}\ov\xi_n(\omega,\langle\omega\rangle)$
  where $\ov\xi_n\colon C[0,T]\times C[0,T]\to[0,\infty)$ are bounded and upper-semicontinuous {(in particular, every upper/lower-semicontinuous function $\xi$).}
\end{theorem}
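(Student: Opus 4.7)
Fix $\Q\in\cM(\Xi)$ and any $\lambda$ that is admissible with associated sequences $(H^n),(G^n)\subset\cH$. By Remark~\ref{rem:Omega} the pathwise integral $\mathbb{S}$ agrees $\Q$-a.s.\ with the It\^o integral $\int S\,dS$, so each $V^n:=\lambda+H^n\cdot S+G^n\cdot\mathbb{S}$ is a $\Q$-local martingale; since $\Q(\Xi)=1$ it is nonnegative $\Q$-a.s., hence a $\Q$-super\-martingale. Fatou then gives $\E_\Q[\xi]\leq\E_\Q[\liminf_n V^n_T]\leq\liminf_n\E_\Q[V^n_T]\leq\lambda$, and optimizing over $\lambda$ and $\Q$ yields $\sup_{\Q\in\cM(\Xi)}\E_\Q[\xi]\leq\Phi(\xi)$.

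\textbf{Lifting.} For the reverse inequality the plan is to lift the problem to $\ov\Omega:=C[0,T]\times C[0,T]$ with canonical process $(\ov S,A)$ and right-continuous filtration $\ov\F_+$, and to consider the analogous superhedging price $\ov\Phi$ using simple $\ov\F_+$-predictable strategies that trade against both $d\ov S$ and $dA$, together with $\ov{\cM}(\ov\Xi):=\{\ov\Q:\ov S\text{ local martingale},\ A=\langle\ov S\rangle,\ \ov\Q(\ov\Xi)=1\}$. The identity $\mathbb{S}=(S^2-S_0^2-\langle S\rangle)/2$ on $\Omega$ makes trading $(dS,d\mathbb{S})$ on $\Xi$ equivalent pathwise to trading $(d\ov S,dA)$ on $\ov\Xi$, while the map $\Q\mapsto\Q\circ(S,\langle S\rangle)^{-1}$ is a bijection $\cM(\Xi)\to\ov{\cM}(\ov\Xi)$ (the completed filtrations of $\ov S$ and $(\ov S,A)$ coincide under any such $\ov\Q$, as indicated in the introduction). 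Setting $\ov\xi:=\liminf_n\ov\xi_n$, one obtains $\Phi(\xi)=\ov\Phi(\ov\xi)$ and $\sup_{\cM(\Xi)}\E_\Q[\xi]=\sup_{\ov{\cM}(\ov\Xi)}\E_{\ov\Q}[\ov\xi]$, reducing the theorem to the same statement on $\ov\Omega$, where $\ov\Xi$ is $\sigma$-compact by (A1).

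\textbf{Compact pieces and passage to $\liminf$.} Write $\ov\Xi=\bigcup_j K_j$ with $K_j$ compact. For any bounded upper-semicontinuous $f\geq 0$ supported on $\ov\Xi$ (in particular each $\ov\xi_n$), the duality $\ov\Phi(f)=\sup_{\ov{\cM}(\ov\Xi)}\E_{\ov\Q}[f]$ follows by a standard route: on each $K_j$ approximate $f$ from above by continuous functions, apply a Hahn--Banach/Sion minimax on $C(K_j)$ (using weak-$*$ compactness of martingale measures supported on $K_j$) to get duality for continuous claims, push to upper-semicontinuous claims via Dini, and pass $j\to\infty$. To upgrade to $\ov\xi=\liminf_n\ov\xi_n$, set $\psi_N:=\inf_{n\geq N}\ov\xi_n$; each $\psi_N$ is bounded upper-semicontinuous and $\psi_N\uparrow\ov\xi$, so monotone convergence and the sup--sup swap give $\sup_N\ov\Phi(\psi_N)=\sup_{\ov{\cM}(\ov\Xi)}\E_{\ov\Q}[\ov\xi]$. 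For $\lambda$ strictly larger than this quantity, the semicontinuous duality furnishes, for each $N$, sequences $(H^{N,k},G^{N,k})_k$ super-replicating $\psi_N$ from $\lambda$ in the $\liminf_k$ sense. One then combines them via a diagonal $(N_n,k_n)$ with both indices tending to infinity into a single sequence satisfying $\lambda+\liminf_n(H^n\cdot\ov S+G^n\cdot A)_T\geq\ov\xi$ on $\ov\Xi$, hence $\ov\Phi(\ov\xi)\leq\lambda$.

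\textbf{Main obstacle.} The delicate point is the diagonal selection in the last step: the pointwise bound $\liminf_k\text{wealth}_T^{N,k}(\omega)\geq\psi_N(\omega)-\lambda$ only provides $\omega$-dependent thresholds, whereas the $k_n$ in the diagonal must be chosen independently of $\omega$. This is where the $\sigma$-compactness of $\ov\Xi$ and the upper semicontinuity of the relevant functionals are indispensable: on each $K_j$ a Dini-type uniformization bounds the thresholds uniformly in $\omega\in K_j$, and a further diagonalization across $j$ yields an $\omega$-free choice of $k_n$. Matching the $\liminf$ in the definition of $\ov\Phi$ with the $\liminf$ defining $\ov\xi$ is precisely what the lift to $\ov\Omega$ and the $\liminf$-admissibility of simple strategies are designed to enable; without either device one would be confined to semicontinuous claims on topologically well-behaved prediction sets, which is exactly what the theorem aims to escape.
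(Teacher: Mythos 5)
Your weak duality and lifting steps match the paper (Remark~\ref{rem:weak.duality}, Lemmas~\ref{lem:transfer.integrals}--\ref{lem:M-M-large-trans}), but the final step contains a genuine gap that your own ``main obstacle'' paragraph does not resolve. You first prove duality for bounded upper-semicontinuous claims in terms of the liminf-functional $\ov\Phi$ over all of $\ov\Xi$, and then, for $\ov\xi=\liminf_n\ov\xi_n$, you obtain for each $N$ a \emph{sequence} $(H^{N,k},G^{N,k})_k$ whose $\liminf_k$-wealth dominates $\psi_N-\lambda$, and propose to merge these into a single sequence by a diagonal $(N_n,k_n)$. The pointwise bound $\liminf_k W^{N,k}_T(\omega)\geq\psi_N(\omega)-\lambda$ gives thresholds in $k$ that depend on $\omega$, and the ``Dini-type uniformization on each $K_j$'' you invoke to make them uniform is unjustified: the terminal wealths $W^{N,k}_T$ of simple strategies (with $\F_+$-stopping times and merely measurable $h_l$) are not continuous, are not monotone in $k$, and a pointwise $\liminf$ inequality does not upgrade to a uniform one on compacts. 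An iterated $\liminf_N\liminf_k$ cannot in general be collapsed into a single $\liminf$ with an $\omega$-independent index choice; this is exactly the kind of construction that the liminf-\emph{closure} $\mathcal{G}^\Xi_\lambda$ of Theorem~\ref{thm:main.stopped.liminf.clsoure} is designed to absorb, but Theorem~\ref{thm:main.stopped} allows only one sequence, so the step as written fails.

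The paper circumvents this by never producing a double-indexed family: on each compact, stopping-closed piece $\ov\Xi_n$ (Lemma~\ref{lem:stopped.prediction.set}, which uses (A3)) it proves duality for the functional $\ov\Phi_n$ defined with a \emph{single} simple strategy pair and no admissibility constraint, via continuity from above (Dini on the compact), the nonlinear Daniell--Stone theorem, and Lemma~\ref{lem:polar} to identify the dual measures as martingale measures. For $\lambda>\sup_n\ov\Phi_n(\ov\xi_n')$ this yields, for each $n$, one pair $(\ov H^n,\ov G^n)$ superhedging $\ov\xi_n'$ on $\ov\Delta\cap\ov\Xi_n$; Lemma~\ref{lem:integral.positive} (again exploiting (A3)) converts it, at the cost of $\varepsilon$, into a strategy admissible on all of $\ov\Xi$ at all times, and then the $\liminf$ over the single index $n$ directly dominates $\liminf_n\ov\xi_n'=\ov\xi$ on $\bigcup_n\ov\Xi_n=\ov\Xi$. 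To repair your argument you would either have to prove duality on each compact piece with a single strategy (as the paper does) rather than with the liminf-functional, or work with the liminf-closure as in Theorem~\ref{thm:main.stopped.liminf.clsoure}; note also that your outline nowhere uses (A3), which is needed both for nonemptiness of $\mathcal{M}(\Xi_n)$ and for the admissibility conversion on all of $\Xi$.
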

%
%

\begin{remark}
\label{rem:liminf}
  Using gain processes of the form $\liminf_n H^n\cdot S$ for a sequence of simple integrands $(H^n)$ can be seen as the pathwise analogue of the classical gain process in mathematical finance being a stochastic integral with respect to $dS$ (under a given measure $\P$). 
  Indeed, the construction of the classical stochastic integral is accomplished by a $L^2(\P)$-limit procedure with respect to simple integrands. 
  We refer to \cite{PerkowskiPromel.16} for further discussions regarding pathwise stochastic integrals applied to model-independent finance.

{
  Nevertheless, an evident question in the present context is whether the $\liminf$ in the definition of $\Phi$ is actually needed.
  While the results of \cite{DolinskySoner.12,HouObloj.15} suggest that for uniformly continuous $\xi$ this might not be the case (at least without or with very ``regular'' prediction sets), we provide at the end of Section~\ref{sec:examples} an  example
  showing the necessity of the $\liminf$ in the present setting; we refer to Example~\ref{ex:duality-gap} for further details.}
\end{remark}

In Theorem~\ref{thm:main.stopped}, we do not only allow such gain processes with respect to $dS$, but also  with respect to $d\mathbb{S}$. 
At first glance, this might look artificial from a financial point of view. 
However, the $d\mathbb{S}$-integral in the definition of $\Phi$ should not be seen as an artificial traded asset but rather as allowing a larger set of admissible trading strategies. 
Indeed, by F\"ollmer~\cite{Follmer.81}, $\mathbb{S}=(S\cdot S)$ can be defined pathwise on $\Omega$ which implies that $(G\cdot  \mathbb{S})=(GS\cdot S)$ can also be defined pathwise on $\Omega$ for $G\in\mathcal{H}$.
Thus, instead of allowing to trade in $S$ and $\mathbb{S}$ with simple strategies, one could also allow to trade only in $S$, but with those integrands for which the integral can be defined pathwise and coincides with the classical integral under every martingale measure on $\Omega$ (or, at least, integrands of the form $H+GS$ for $H$ and $G$ simple). 
{In particular, denoting by $\mathcal{H}^{{\mathrm{gen}}}:=\{ H + GS : H,G\in\mathcal{H}\}$ the set of generalized simple processes, this implies that Theorem \ref{thm:main.stopped} can be restated as follows:}
{
\begin{theorem}
\label{thm:main.stopped.generlized.integrals}
Let $\Xi\subseteq \Omega$ be a prediction set satisfying Assumption~\ref{ass:A}. Then
  \begin{align*}
   \Phi^{{\mathrm{gen}}}(\xi):=\inf\left\{ \lambda\in\mathbb{R} : 
    \begin{array}{ll}
    \text{there is a sequence $(H^n)$ in ${\mathcal{H}^{\mathrm{gen}}}$ such that }\\
     \lambda + (H^n\cdot S)_t\geq 0 \text{ on } \Xi \text{ for all } n,t \\
  \lambda+ \liminf_{n \to \infty}(H^n\cdot S)_T \geq  \xi \text{ on } \Xi
  \end{array}
  \!\right\}
  &= \sup_{\mathbb{Q}\in\mathcal{M}(\Xi)} \mathbb{E}_\mathbb{Q}  [\xi]
  \end{align*}
  for every function $\xi\colon C[0,T]\to[0,\infty]$ of the form $\xi(\omega)=\liminf_{n\to \infty}\ov\xi_n(\omega,\langle\omega\rangle)$
  where $\ov\xi_n\colon C[0,T]\times C[0,T]\to[0,\infty)$ are bounded and upper-semicontinuous.
\end{theorem}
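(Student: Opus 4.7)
The plan is to reduce Theorem~\ref{thm:main.stopped.generlized.integrals} directly to the already-stated Theorem~\ref{thm:main.stopped}. The key observation is that on $\Omega$ the pathwise integral of $GS$ against $S$ is well-defined for simple $G\in\mathcal{H}$ and coincides with $G\cdot\mathbb{S}$, so that the two apparently different families of admissible gain processes in the two theorems are in fact identical.

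First, I would fix the pathwise interpretation of the integral for integrands in $\mathcal{H}^{{\mathrm{gen}}}=\{H+GS: H,G\in\mathcal{H}\}$: for $H,G\in\mathcal{H}$ and $\omega\in\Omega$, $t\in[0,T]$, set
\[
  ((H+GS)\cdot S)_t(\omega):=(H\cdot S)_t(\omega)+(G\cdot\mathbb{S})_t(\omega).
\]
This is natural because, by Remark~\ref{rem:Omega}, $\mathbb{S}$ coincides on $\Omega$ with F\"ollmer's pathwise integral $\int S\,dS$, so that $G\cdot\mathbb{S}=(GS)\cdot S$ in the pathwise sense; equivalently, one may regard an element of $\mathcal{H}^{{\mathrm{gen}}}$ as a pair $(H,G)\in\mathcal{H}\times\mathcal{H}$ with integral given by the above formula (the decomposition being irrelevant for the value of the integral). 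Moreover, by the same remark, under every $\Q\in\mathcal{M}(\Xi)$ this pathwise integral agrees $\Q$-almost surely with the usual It\^o integral of $H+GS$ against the semimartingale $S$.

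With this interpretation in place, the admissibility conditions $\lambda+(\tilde H^n\cdot S)_t\geq 0$ on $\Xi$ and $\lambda+\liminf_n(\tilde H^n\cdot S)_T\geq \xi$ on $\Xi$ appearing in the definition of $\Phi^{{\mathrm{gen}}}(\xi)$ for a sequence $\tilde H^n=H^n+G^nS\in\mathcal{H}^{{\mathrm{gen}}}$ (with $H^n,G^n\in\mathcal{H}$) are precisely the admissibility conditions in the definition of $\Phi(\xi)$ for the pair of sequences $(H^n,G^n)$ in $\mathcal{H}$; conversely, every admissible pair $(H^n,G^n)$ in Theorem~\ref{thm:main.stopped} yields a sequence $H^n+G^nS\in\mathcal{H}^{{\mathrm{gen}}}$ giving the same superhedging value. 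Hence the two infima coincide, $\Phi^{{\mathrm{gen}}}(\xi)=\Phi(\xi)$, and the identity $\Phi^{{\mathrm{gen}}}(\xi)=\sup_{\Q\in\mathcal{M}(\Xi)}\E_\Q[\xi]$ is then immediate from Theorem~\ref{thm:main.stopped}. The only point demanding any care is the justification that $((H+GS)\cdot S)_t$ is a sensible pathwise integral on $\Omega$, but this is already provided by the pathwise construction of $\mathbb{S}$ in Section~\ref{subsec:setup} together with Remark~\ref{rem:Omega}; beyond this cosmetic reformulation, no further analytic work is needed.
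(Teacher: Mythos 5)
Your reduction is correct and is exactly how the paper handles Theorem~\ref{thm:main.stopped.generlized.integrals}: it is presented as an immediate restatement of Theorem~\ref{thm:main.stopped}, using that for $G\in\mathcal{H}$ the F\"ollmer/pathwise identity $(G\cdot\mathbb{S})=(GS\cdot S)$ holds on $\Omega$, so the admissible gain processes for $\Phi^{\mathrm{gen}}$ and $\Phi$ coincide and the duality follows from Theorem~\ref{thm:main.stopped}. Your brief remark that the value of the integral does not depend on the decomposition $H+GS$ is the only point needing (easy) verification, and it matches the paper's implicit treatment.
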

}

  {One can take this idea even one step further: 
  Notice that $(G\cdot \mathbb{S})=(GS\cdot S)$ is the (uniform) limit of the sequence $(GS^m\cdot S)$, where $S^m:=\sum_{k=0}^\infty S_{\sigma^m_k} 1_{(\sigma^m_{k},\sigma^m_{k+1}]}$ with $(\sigma^m_k)$ being the stopping  times from Subsection \ref{subsec:setup}. 
  In other words, $(G\cdot  \mathbb{S})$ is the limit of a sequence of $dS$-integrals with respect to this intermediate class of integrands.
  Therefore, taking this class of integrands as admissible trading strategies, we found yet another reformulation of Theorem \ref{thm:main.stopped} (which we do not state explicitly).
  In Theorem~\ref{thm:main.stopped.liminf.clsoure} below we work out the idea of enlarging the trading strategies once more, but with the goal to obtain duality for Borel measurable claims.}
\begin{remark}
\label{rem:stopped-admissible}
{
  %
  There are not many different strategies to prove model-free superhedging duality in mathematical finance. One of the first techniques developed in  \cite{DolinskySoner.12}, where no prediction sets were involved, is to approximate the duality problem by a  discretization in time and space to reduce  the original duality problem to ones in finite-dimensions, where one can  apply classical duality theorems to derive the result.
  %
  However, it seems that when restricting the superhedging property to hold true only on a given prediction set, that this technique comes at the price of enforcing a lot of regularity on the prediction set  to obtain duality; see \cite{HouObloj.15}.
  Another  established strategy developed in \cite{AcciaioBeiglbockPenknerSchachermayer.16} and its many successors is what we refer to the topological approach, where 
  to overcome technicalities implied by the absence of a reference probability measure  requires continuous functions to be rich enough to determine the behavior of the functional, as well as some compactness of the problem.
  More concretely, this method requires the prediction set $\Xi$ to be $\sigma$-compact, i.e.\
  the countable union of compacts.
  Under the requirement of $\sigma$-compactness of the prediction set, \cite{BartlKupperProemelTangpi.17} provides a pathwise superhedging duality  in the present continuous-time setting.
  {However, for prediction sets involving the quadratic variation (see Section~\ref{sec:examples}), $\sigma$-compactness generally fails or cannot be verified, since
   the map
   $C[0,T] \ni\omega \mapsto \langle \omega \rangle \in C[0,T]$  is highly not continuous.  
  The immediate idea of changing the topology and making $\omega\mapsto \langle\omega\rangle$ continuous does not work either as compact sets in this topology are very ``small'' and one looses separability of $C[0,T]$.
}
}

  {
  To overcome this problem, we shall work in the enlarged space $\ov\Omega :=C[0,T]\times C[0,T]$ where the first component will play the role of the price process $S$ and the second component will play the role of its quadratic variation. 
  On this enlarged space, assumption (A1) (which requires the lifted prediction set $\ov{\Xi}$ to be the countable union of compacts) implies that the afore mentioned topological proof works on the enlarged space. 
  The dual objects $\ov{\mathbb{Q}}$ are now martingale measures for the first and second component, and assumption (A2) ensures that every martingale measure for the first component on the enlarged space will have the second component as its quadratic variation. 
  This is crucial to go back and forth between the original space and the enlarged space.}

 {Only condition (A3) seems artificial at first and in fact has nothing to do with the above discussion.
However, it clears the way for the (nicer) formulation of the superhedging price $\Phi$, compared to the superhedging price $\Phi^Z$ in Theorem \ref{thm:main.with.Z} where (A3) is not assumed. Moreover, it naturally provides a relation to American options, see Remark \ref{rem:ass-stopped-american}.
In addition, due to condition (A3), the admissibility condition imposed in the definition of $\Phi$ could be replaced by $\lambda+(H^n\cdot S)_T+(G^n\cdot  \mathbb{S})_T\geq 0$, see Lemma~\ref{lem:integral.positive} for the precise statement.}
  
\end{remark}
\begin{remark}
\label{rem:ass:nonneg-main-stopped}
  The nonnegativity assumption in Theorem~\ref{thm:main.stopped} imposed on $\xi$ could be relaxed by instead requiring $\xi\colon C[0,T] \to [-\infty,\infty]$ to be bounded from below on $\Xi$, if the admissibility condition $\lambda +(H^n\cdot S)_t+(G^n\cdot  \mathbb{S})_t\geq 0$ in the definition of $\Phi$ is replaced by $\lambda +(H^n\cdot S)_t+(G^n\cdot  \mathbb{S})_t\geq \inf_{\omega\in\Xi}\xi(\omega)$.
\end{remark}
\begin{remark}
\label{rem:ass-stopped-american}
  Condition (A3) in Assumption~\ref{ass:A} implies for every Borel measurable function $\xi\colon C[0,T] \to (-\infty,\infty]$ which is bounded from below the following identity
  \begin{equation*}
  \sup_{\Q\in\mathcal{M}(\Xi)} \mathbb{E}_\mathbb{Q}[\xi]
  =\sup_{\tau \text{ is }\mathbb{F}\text{-stopping time }} \sup_{\Q\in\mathcal{M}(\Xi)} \mathbb{E}_\mathbb{Q}[\xi(S^\tau)].
  \end{equation*}
  Indeed,  by choosing $\widehat{\tau}=T$, it follows that the left-hand side is smaller than or equal to the right hand side. 
  As for the reverse inequality, observe that for any $\mathbb{F}$-stopping time $\tau$ and $\Q\in\mathcal{M}(\Xi)$ one has that $\Q^\tau:=\Q\circ (S^\tau)^{-1}$ defines a local martingale measure for $S$ which by condition (A3)  also satisfies $\Q^\tau(\Xi)=1$. 
  Hence $\Q^\tau \in \mathcal{M}(\Xi)$, which in turns implies the reverse inequality.

  Moreover, for every given $\Xi$ satisfying conditions (A1)-(A2), the set
  \[\Xi^{\mathrm{stop}}:=\big\{\omega^t : \omega\in\Xi \text{ and }t\in[0,T]\big\}\]
  satisfies (A1)-(A3).
  Indeed, since for all $t\in[0,T]$ we have by construction of the pathwise quadratic variation that $\langle\omega\rangle^t=\langle\omega^t\rangle$, we see that 
  \[\ov\Xi^{\mathrm{stop}}:=\big\{(\omega^t,\nu^t) : (\omega,\nu)\in\ov \Xi \text{ and }t\in[0,T]\big\}\]
  satisfies $\Xi^{\mathrm{stop}}=\{\omega\in\Omega: (\omega,\langle\omega\rangle)\in\ov\Xi^{\mathrm{stop}}\}$.
  By construction, $\ov\Xi^{\mathrm{stop}}$ satisfies (A2)-(A3).
  To see (A1), let $\ov\Xi_n$ be compact sets such that $\ov\Xi=\bigcup_n\ov\Xi_n$. 
  Then, as the map 
  $\mathrm{stop}\colon C[0,T]\times[0,T]\to C[0,T]$, $(\omega,t)\mapsto \omega^t$ is continuous, one has $\ov\Xi^{\mathrm{stop}}=\bigcup_n\mathrm{stop}(\ov\Xi_n\times[0,T])$ and  $\mathrm{stop}(\ov\Xi_n\times[0,T])$ are compact sets, as the continuous image of compact sets.
\end{remark}
We continue the previous idea of enlargement of the set of trading strategies in the definition of the superhedging price. 
More precisely, following an idea of Vovk \cite{Vovk.16}, we define for every $\Xi \subseteq \Omega$ and $\lambda\geq0$ the set
\begin{equation*}\mathcal{G}^{\Xi}_\lambda:=\liminf\text{-closure of }\big\{\lambda + (H\cdot S)_T: H\in\mathcal{H}
\text{ and } \lambda+(H\cdot S)_t\geq 0 \text{ on } \Xi \text{ for all }t \big\},
\end{equation*}
i.e.~$\mathcal{G}^\Xi_\lambda$ is the smallest set of functions $X\colon C[0,T] \to [-\infty,\infty]$ which contains $\lambda + (H\cdot S)_T$ for $H\in\mathcal{H}$
with $\lambda+(H\cdot S)_t\geq 0$ on $\Xi$ for all $t$,
such that $\liminf_n X_n\in\mathcal{G}^\Xi_\lambda$ whenever $X_n\in\mathcal{G}^\Xi_\lambda$ for every $n$. 
We refer to \cite{Vovk.16} for more details regarding the liminf closure.  
Then we obtain the following result.
\begin{theorem}
\label{thm:main.stopped.liminf.clsoure}
  Let $\Xi\subseteq \Omega$ be a  prediction set satisfying Assumption~\ref{ass:A}.
  Then it holds that
  \begin{align*}
  \Phi^{\mathrm{cl}}(\xi):&=\inf\big\{ \lambda\in\mathbb{R} : 
    \text{there is } Y\in\mathcal{G}^\Xi_\lambda \text{ such that } Y\geq \xi \text{ on } \Xi
  \big\} \\
  &=  \sup_{\mathbb{Q}\in\mathcal{M}(\Xi)} \mathbb{E}_\mathbb{Q}  [\xi]
  \end{align*}
  for every Borel function $\xi\colon C[0,T]\to[0,\infty]$.
\end{theorem}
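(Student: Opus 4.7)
The plan is to derive the Borel case from Theorem~\ref{thm:main.stopped.generlized.integrals} (which settles the duality for bounded upper-semicontinuous claims) via the functional form of Choquet's capacitability theorem. Set $\mathcal{E}(\xi):=\sup_{\mathbb{Q}\in\mathcal{M}(\Xi)}\mathbb{E}_\mathbb{Q}[\xi]$. The program will have four steps: establish weak duality $\mathcal{E}\le\Phi^{\mathrm{cl}}$ for all Borel $\xi\ge 0$; prove the reverse inequality $\Phi^{\mathrm{cl}}\le\mathcal{E}$ for bounded upper-semicontinuous $\xi$; verify that $\Phi^{\mathrm{cl}}$ and $\mathcal{E}$ are monotone and sup-continuous along pointwise increasing sequences of nonnegative Borel functions; and finally apply Choquet capacitability and truncate.

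For weak duality, fix $\mathbb{Q}\in\mathcal{M}(\Xi)$ and $\lambda\ge 0$ and consider the subclass $\widetilde{\mathcal{G}}:=\{Y\in\mathcal{G}^\Xi_\lambda:\mathbb{E}_\mathbb{Q}[Y]\le\lambda\}$. Every base element $\lambda+(H\cdot S)_T$ with $H\in\mathcal{H}$ admissible lies in $\widetilde{\mathcal{G}}$, since it is a nonnegative $\mathbb{Q}$-local martingale on $\Xi$ (using $\mathbb{Q}(\Xi)=1$), hence a supermartingale with initial value $\lambda$. Moreover $\widetilde{\mathcal{G}}$ is closed under the $\liminf$ operation by Fatou's lemma. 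The minimality in the definition of $\mathcal{G}^\Xi_\lambda$ therefore forces $\widetilde{\mathcal{G}}=\mathcal{G}^\Xi_\lambda$, so $\mathbb{E}_\mathbb{Q}[\xi]\le\mathbb{E}_\mathbb{Q}[Y]\le\lambda$ for any $Y\in\mathcal{G}^\Xi_\lambda$ dominating $\xi$ on $\Xi$.

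For the inequality on bounded upper-semicontinuous $\xi$, it will suffice by Theorem~\ref{thm:main.stopped.generlized.integrals} to show $\Phi^{\mathrm{cl}}\le\Phi^{\mathrm{gen}}$. Given any $\lambda$-admissible $H+GS\in\mathcal{H}^{\mathrm{gen}}$, the remark following Theorem~\ref{thm:main.stopped.generlized.integrals} gives that $(GS^m\cdot S)$ converges uniformly on $\Omega$ to $(GS\cdot S)=(G\cdot\mathbb{S})$, while each $H+GS^m$ is a simple process. After localizing by stopping the integrand as soon as the running gain falls below $-1/m$ (to enforce admissibility for each $m$), one obtains a sequence in $\mathcal{H}$ whose terminal gains converge pointwise on $\Xi$ to $\lambda+((H+GS)\cdot S)_T$; a single further $\liminf$ then places the latter in $\mathcal{G}^\Xi_\lambda$, whence $\Phi^{\mathrm{cl}}(\xi)\le\Phi^{\mathrm{gen}}(\xi)=\mathcal{E}(\xi)$.

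Monotonicity of both functionals is immediate. Sup-continuity of $\mathcal{E}$ along $\xi_k\uparrow\xi$ follows from monotone convergence under each $\mathbb{Q}$ combined with the interchange $\sup_\mathbb{Q}\sup_k=\sup_k\sup_\mathbb{Q}$, and sup-continuity of $\Phi^{\mathrm{cl}}$ is built into its definition: setting $\lambda:=\sup_k\Phi^{\mathrm{cl}}(\xi_k)$ and fixing $\varepsilon>0$, choose $Y_k\in\mathcal{G}^\Xi_{\lambda+\varepsilon}$ with $Y_k\ge\xi_k$ on $\Xi$; then $\liminf_k Y_k\in\mathcal{G}^\Xi_{\lambda+\varepsilon}$ by $\liminf$-closure and satisfies $\liminf_k Y_k\ge\sup_k\xi_k=\xi$ on $\Xi$, so $\Phi^{\mathrm{cl}}(\xi)\le\lambda+\varepsilon$. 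The functional Choquet capacitability theorem will then extend the equality $\Phi^{\mathrm{cl}}=\mathcal{E}$ from bounded upper-semicontinuous claims to all bounded Borel claims on the Polish space $C[0,T]$, and the truncation $\xi\wedge N\uparrow\xi$ handles $[0,\infty]$-valued $\xi$. The main obstacle will be the admissibility-preserving approximation in step three, since $GS^m$ need not be bounded and the localization must cooperate with the pointwise limit on $\Xi$.
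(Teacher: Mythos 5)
Your overall architecture (weak duality, duality for bounded upper-semicontinuous claims, continuity from below, Choquet) parallels the paper's, but the Choquet step as you set it up has a genuine gap. The functional form of Choquet's capacitability theorem invoked here (\cite{BartlCheriditoKupper.17}, Proposition~2.1) requires, besides monotonicity and continuity from below on bounded Borel functions, that the functional be continuous from above on $U_b(C[0,T])$, i.e.\ $\Upsilon(\xi_k)\downarrow\Upsilon(0)$ whenever $\xi_k\downarrow 0$ pointwise. You never verify this for $\Phi^{\mathrm{cl}}$ and $\mathcal{E}=\sup_{\Q\in\mathcal{M}(\Xi)}\E_\Q[\cdot]$, and for the full prediction set it is false in general: $\Xi$ is only $\sigma$-compact and $\mathcal{M}(\Xi)$ need not be tight. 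For a ``large'' $\Xi$ as in Remark~\ref{rem:duality-meas-large} (all H\"older paths with H\"older quadratic variation), take $\xi_k(\omega):=1\wedge(\|\omega\|_\infty-k)^+\downarrow 0$; scaled Brownian motions with arbitrarily large volatility give $\sup_{\Q\in\mathcal{M}(\Xi)}\E_\Q[\xi_k]=1$ for every $k$, so neither $\mathcal{E}$ nor (by your own u.s.c.\ duality) $\Phi^{\mathrm{cl}}$ is continuous from above on $U_b$. This is precisely why the paper does not apply Choquet on $\Xi$ directly: it first proves, for each compact stopping-stable piece $\Xi_n$ of the exhaustion from Lemma~\ref{lem:stopped.prediction.set}, the u.s.c.\ duality $\Phi_n^{\mathrm{cl}}(\xi)=\sup_{\Q\in\mathcal{M}(\Xi_n)}\E_\Q[\xi]$ (superhedging required only on $\Xi_n$, admissibility on $\Xi$), where continuity from above holds by Dini's lemma on the compact $\ov\Xi_n$ (Step~(b) of Proposition~\ref{prop:lifted.dual.with.stopped.paths}); Choquet is applied at this localized level, and only afterwards is the result assembled via $\Phi^{\mathrm{cl}}(\xi)\le\sup_n\Phi_n^{\mathrm{cl}}(\xi\wedge n)=\sup_n\sup_{\Q\in\mathcal{M}(\Xi_n)}\E_\Q[\xi\wedge n]\le\sup_{\Q\in\mathcal{M}(\Xi)}\E_\Q[\xi]$, closed by weak duality. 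Without this localization your capacitability argument cannot be carried out.

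A second, repairable issue concerns your claim that $\lambda+((H+GS)\cdot S)_T$ lands in $\mathcal{G}^\Xi_\lambda$. Stopping when the running gain ``falls below $-1/m$'' either stops too early on paths of $\Xi$ where the true gain legitimately dips below $-1/m$ but stays above $-\lambda$ (destroying the claimed pointwise convergence of the stopped terminal gains on $\Xi$), or, if the threshold is taken at $-\lambda-1/m$, produces strategies admissible only at capital $\lambda+1/m$, which are not base elements of $\mathcal{G}^\Xi_\lambda$, so the final $\liminf$ does not obviously belong to $\mathcal{G}^\Xi_\lambda$. The paper's Lemma~\ref{lem:liminf.contains.fat.S} fixes one cushion $\varepsilon>0$ for all $m$: since $(GS^m\cdot S)\to(G\cdot\mathbb{S})$ uniformly on $\Omega$, the stopping times at level $\lambda+\varepsilon$ diverge on $\Xi$, and one concludes only $Y\in\mathcal{G}^\Xi_{\lambda+\varepsilon}$ --- harmless in the infimum defining $\Phi^{\mathrm{cl}}$, but not literal membership in $\mathcal{G}^\Xi_\lambda$. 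Moreover, since $GS^m$ is an infinite sum with unbounded coefficients (the obstacle you flag but do not resolve), a second, inner $\liminf$ over the finite truncations $S^{m,l}$ is needed to remain within $\mathcal{H}$ and within the liminf-closure, exactly as in that lemma.
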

{As in Remark \ref{rem:ass:nonneg-main-stopped}, the non-negativity of $\xi$ can be relaxed.}
%
\begin{remark}\label{rem:duality-meas-large}
{
To the best our knowledge, this is the first pathwise superhedging duality result in a continuous-time setting for measurable claims. Hence  the result is not only of interest 
for a ``small'' prediction set, but  also for 
``large'' ones, e.g.,~the set $\Xi$ of all paths which possess a quadratic variation and are H\"older continuous. During the reviewing process of this work, a similar duality theorem in the context of martingale optimal transport on the space of c\`adl\`ag paths has been obtained in \cite{cheridito2019martingale}.
Under a suitable constructed topology on the space of c\`adl\`ag paths, the duality result in \cite{cheridito2019martingale}  has also been established by  enlarging the set of admissible strategies using the $\liminf$- closure in the spirit of \cite{Vovk.16} and then extending the duality result from semicontinuous claims to Borel measurable ones following the approach in \cite{Kellerer84} to apply  Choquet's capacitability theorem.}
\end{remark}


  

\begin{remark}
\label{rem:quasi-sure}
  As a direct consequence it follows that a Borel set $A \subseteq C[0,T]$ satisfies $\mathbb{Q}(A)=0$ for all $\mathbb{Q}\in\mathcal{M}(\Xi)$ if and only if for every $\varepsilon>0$ there exists $Y^\varepsilon\in\mathcal{G}^\Xi_\varepsilon$ such that $1_A\leq Y^\varepsilon$.
\end{remark}

\begin{remark}
\label{rem:weak.duality}
  For any prediction set $\Xi$, trading strategy $H\in\mathcal{H}$, and martingale measure $\mathbb{Q}$, it holds that  $\mathbb{E}_{\mathbb{Q}}[\lambda + (H\cdot S)_T] =\lambda$ for all $\lambda\in\mathbb{R}$.
  Fatou's lemma implies that for every sequence $H^n\in\mathcal{H}$ such that
  $\lambda + (H^n\cdot S)_t\geq 0$ on $\Xi$ for every $n$ and $t$, one has 
  $\mathbb{E}_{\mathbb{Q}}[\lambda + \liminf_n(H^n\cdot S)_T]\leq \lambda$.
  This immediately implies weak duality, i.e.~that $\Phi(\xi)\geq \sup_{\Q\in\mathcal{M}(\Xi)} \mathbb{E}_{\Q}[\xi]$ for every Borel $\xi\colon C[0,T]\to[0,\infty]$.

  As for the liminf-closure, fix some $\lambda\in\mathbb{R}$ and notice that the set 
  $\mathcal{A}:=\{X \geq 0 : \mathbb{E}_\mathbb{Q}[X]\leq \lambda \text{ for all }\Q\in\mathcal{M}(\Xi)\}$ contains $\lambda + (H\cdot S)_T$ for $H\in\mathcal{H}$ such that
  $\lambda + (H\cdot S)_t\geq 0$ on $\Xi$ for every $t$.
  As $\mathcal{A}$ is closed under liminf (by Fatou's lemma), one has $\mathcal{G}^{\Xi}_\lambda\subseteq\mathcal{A}$. This again implies weak duality, 
  i.e.~$\Phi^\mathrm{cl}(\xi)\geq\sup_{\Q\in\mathcal{M}(\Xi)} \mathbb{E}_\mathbb{Q}[\xi]$
  for every Borel $\xi\colon C[0,T]\to[0,\infty]$.
\end{remark}

\subsection{Superhedging duality for prediction sets not closed under stopping}
\label{subsec:result-general}
In this subsection we seek to characterize the pathwise superhedging price with respect to a given prediction set $\Xi \subseteq \Omega$ which is not necessarily closed under stopping. To that end we impose the following conditions on the prediction set.
 
 \begin{assumption}\label{ass:B}
 	$\Xi \subseteq \Omega$ is a nonempty set of paths of the form
 	\begin{equation*}
 	\Xi = \big\{\omega \in \Omega\colon \overline{Z}(\omega,\langle \omega \rangle)<\infty\big\}
 	\end{equation*}
 for some function $\overline{Z}\colon C[0,T]\times C[0,T]\to [1,\infty]$ which satisfies that
 	
 	\vspace*{0.15cm}
 	\noindent
 	\textbf{(B1)} $\overline{Z}$ has compact sublevel sets,\\
 	\textbf{(B2)} for all $(\omega,\nu)$ we have that  $\overline{Z}(\omega,\nu)<\infty$ implies that $\nu(0)=0$ and $\nu$ is nondecreasing,\\
 	\textbf{(B3)} $\ov Z(\omega,\nu)\geq \|\omega\|_\infty + \|\nu\|_\infty$ for all $(\omega,\nu)$.
 \end{assumption} 

We denote for any given set of paths $\Xi \subseteq \Omega$ and any Borel function $Z\colon \Omega \to [1,\infty]$ the set
$\mathcal{M}_{Z}(\Xi)$ of Borel probability measures defined by
\begin{equation*}
\mathcal{M}_Z(\Xi):= \big\{\Q \in \fP(\Omega)\colon \mbox{$S$ is a $\Q$-$\F$-local martingale},\, \Q(\Xi)=1,\text{ and } \E_{\Q}[Z]<\infty \big \}.
\end{equation*}

\begin{theorem}
\label{thm:main.with.Z}
Let $\Xi\subseteq \Omega$ be a prediction set satisfying Assumption~\ref{ass:B}, let 
$Z\colon \Omega \to [1,\infty]$ be the function defined by
$Z(\omega):=\overline{Z}(\omega,\langle\omega\rangle)$, 
$\omega \in \Omega$, and assume that $\mathcal{M}_Z(\Xi)$ is nonempty.
  Then   
  \begin{align*}
    \Phi^Z(\xi):&=\inf\left\{ \lambda\in\mathbb{R} : 
    \begin{array}{l}
    \text{there is  $c\ge 0$ and sequences $( H^n)$, $(G^n)$ in ${\mathcal{H}}$}
    \text{ such that}\\
    \lambda+( H^n\cdot S)_T+(G^n\cdot  \mathbb{S})_T\geq -cZ \text{ on } \Xi\text{ for all n},\\
   \lambda+ \liminf_{n \to \infty}\big( (H^n\cdot S)_T +( G^n\cdot  \mathbb{S})_T \big)\geq  \xi \text{ on } \Xi
    \end{array}
    \!\right\}\\
    &= \sup_{\Q\in\mathcal{M}_{Z}(\Xi)} \E_\Q[\xi]
    \end{align*}
    for every function $\xi\colon C[0,T] \to(-\infty,\infty]$ bounded from below which is  of the form $\xi(\omega)=\liminf_{n\to \infty}\ov\xi_n(\omega,\langle\omega\rangle)$
    where $\ov\xi_n\colon C[0,T]\times C[0,T]\to\R$ are bounded and upper-semicontinuous.
    Moreover, if $\Phi^Z(\xi)<\infty$, then the infimum is attained.
\end{theorem}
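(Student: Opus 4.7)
The plan is to follow the strategy sketched in Remark~\ref{rem:stopped-admissible}, lifting the problem to the enlarged space $\ov\Omega := C[0,T]\times C[0,T]$, where the first coordinate plays the role of the price process and the second coordinate plays the role of its quadratic variation. On $\ov\Omega$ the penalty $\ov Z$ has compact sublevel sets by (B1), so the prediction set $\ov\Xi = \{\ov Z < \infty\}$ is $\sigma$-compact and the topological/functional-analytic machinery becomes available, whereas $Z$ itself generally does not have compact sublevel sets on $\Omega$.

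First I would establish weak duality. For any $\Q\in\mathcal{M}_Z(\Xi)$ and any admissible pair $(H^n,G^n)$ with $\lambda+(H^n\cdot S)_T+(G^n\cdot \mathbb{S})_T\ge -cZ$, the process $\lambda+(H^n\cdot S)+(G^n\cdot\mathbb{S})$ agrees $\Q$-a.s.\ with the stochastic integral $\lambda + \int(H^n + G^n S)\,dS$ by Remark~\ref{rem:Omega}, which is a $\Q$-local martingale bounded below by the $\Q$-integrable random variable $-cZ$, hence a true supermartingale. Taking expectations, using $\E_\Q[Z]<\infty$, and applying (a conditional form of) Fatou's lemma to the liminf yields $\E_\Q[\xi]\le\lambda$, so $\sup_{\Q}\E_\Q[\xi]\le\Phi^Z(\xi)$.

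For strong duality, I would set up the enlarged problem. On $\ov\Omega$ let $\ov S,\ov A$ denote the coordinate processes, and define
\[
\ov{\mathcal{M}}_{\ov Z}(\ov\Xi) := \bigl\{\ov\Q \in \fP(\ov\Omega) : \ov S \text{ is a local martingale with } \langle \ov S\rangle = \ov A,\ \ov\Q(\ov\Xi)=1,\ \E_{\ov\Q}[\ov Z]<\infty\bigr\}.
\]
The map $\Q\mapsto \Q\circ(\id,\langle\cdot\rangle)^{-1}$ is a bijection between $\mathcal{M}_Z(\Xi)$ and $\ov{\mathcal{M}}_{\ov Z}(\ov\Xi)$ (this uses that under any local martingale measure the natural filtration is completed-equivalently generated by $\ov S$ alone, a fact already exploited in the discussion around Remark~\ref{rem:Omega}). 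I would then prove a lifted dual representation on $\ov\Omega$: for the functional $\ov\Phi^{\ov Z}$ defined analogously with simple integrands against $\ov S$ and admissibility $\ge -c\ov Z$, I expect
\[
\ov\Phi^{\ov Z}(\ov\xi) \;=\; \sup_{\ov\Q \in \ov{\mathcal{M}}_{\ov Z}(\ov\Xi)} \E_{\ov\Q}[\ov\xi]
\]
for $\ov\xi = \liminf_n\ov\xi_n$ with bounded u.s.c.\ $\ov\xi_n$ bounded below. The proof proceeds by the topological/minimax argument in the spirit of \cite{AcciaioBeiglbockPenknerSchachermayer.16,BartlKupperProemelTangpi.17}: write $\ov\Phi^{\ov Z}$ as a sublinear functional on bounded continuous functions, check monotone continuity exploiting $\sigma$-compactness of $\ov\Xi$ and the weight $\ov Z$ (which, together with $\ov Z\ge\|\omega\|_\infty+\|\nu\|_\infty$, controls tightness of sublevel sets of the dual objects), apply Daniell–Stone/Riesz, and identify the dual elements as elements of $\ov{\mathcal{M}}_{\ov Z}(\ov\Xi)$ by testing against suitable martingale increments $(\ov S_{\tau_2}-\ov S_{\tau_1})f$ and $(\ov S^2_{\tau_2}-\ov S^2_{\tau_1}-(\ov A_{\tau_2}-\ov A_{\tau_1}))f$ built from simple integrands. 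The class of payoffs extends from bounded continuous to bounded u.s.c.\ via monotone/Choquet-type approximation, and then to liminfs via Fatou in the definition of $\ov\Phi^{\ov Z}$.

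To transfer back, given $(\ov H^n,\ov G^n)$ on $\ov\Omega$ one sets $H^n(\omega) := \ov H^n(\omega,\langle\omega\rangle)$ and $G^n(\omega) := \ov G^n(\omega,\langle\omega\rangle)$; the stopping times used in simple integrands are compatible because they are $\F_+$-stopping times generated by the first coordinate, so this produces admissible strategies in $\mathcal{H}$ on $\Omega$ with the required pathwise inequalities on $\Xi$. Combined with the bijection of measures, this yields $\Phi^Z(\xi) \le \ov\Phi^{\ov Z}(\ov\xi) = \sup_{\ov\Q}\E_{\ov\Q}[\ov\xi] = \sup_{\Q\in\mathcal{M}_Z(\Xi)}\E_\Q[\xi]$, closing the duality. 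Finally, attainment when $\Phi^Z(\xi)<\infty$: the sublevel set $\{\ov\Q : \E_{\ov\Q}[\ov Z]\le c\}$ is tight (compact sublevel sets of $\ov Z$ plus Markov's inequality) and weakly closed; upper semicontinuity of $\ov\Q\mapsto \E_{\ov\Q}[\ov\xi]$ on this set (inherited from u.s.c.\ of $\ov\xi_n$ and reverse Fatou) then gives a maximizer via Prohorov.

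I expect the main obstacle to be the rigorous execution of the topological duality on $\ov\Omega$ with the weight $\ov Z$: verifying the continuity-from-above of $\ov\Phi^{\ov Z}$ on $C_b(\ov\Omega)$ requires carefully combining $\sigma$-compactness of the sublevel sets of $\ov Z$ with the a priori bound $\ov Z\ge\|\omega\|_\infty+\|\nu\|_\infty$ to obtain uniform integrability of the simple integrals along minimizing sequences, and then identifying the Daniell–Stone representing measure as a martingale measure with quadratic variation $\ov A$ concentrated on $\ov\Xi$. The other delicate point is the handling of the $\liminf$ in the primal: one must make sure that Fatou transfers through the weighted admissibility $\ge -c\ov Z$ both under the dual measures (weak duality) and in the approximation by bounded u.s.c.\ claims (strong duality).
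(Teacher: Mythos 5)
Your overall strategy — lifting to $\ov\Omega=C[0,T]\times C[0,T]$, identifying $\cM_Z(\Xi)$ with martingale measures on the enlarged space (the paper's Lemma~\ref{lem:M-M-large-trans}), proving a dual representation there via a Daniell--Stone type argument, and transferring back through $\psi$ and $\ov S$ (Lemmas~\ref{lem:primal-transition} and \ref{lem:dual-transition}) — is exactly the paper's route, and your weak-duality sketch is essentially Step~(a) of the paper's proof. However, there is a genuine gap at precisely the point you flag as ``the main obstacle'': continuity from above on $C_b(\ov\Omega)$ of the functional $\ov\Phi^{\ov Z}$ with weighted admissibility $\geq -c\ov Z$ does \emph{not} follow from $\sigma$-compactness of $\ov\Xi$ (Dini's lemma needs a single compact set), and you offer no mechanism to obtain it. The paper's key device, which your proposal lacks, is to introduce the penalized family
$\ov\Phi^{\ov Z}_n$, in which the superhedging requirement is relaxed to $\lambda+(\ov H\cdot\ov S)_T+(\ov G\cdot\ov{\mathbb S})_T\geq \ov\xi-\ov Z/n$ with a \emph{constant} admissibility bound $-c$. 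The additive slack $\ov Z/n$ lets one absorb a bounded claim off the compact sublevel set $\{\ov Z\leq bn\}$ (condition (B1)), so Dini applies and each $\ov\Phi^{\ov Z}_n$ is continuous from above; the conjugate is then computed to equal $\E_{\ov\Q}[\ov Z]/n$ on $\ov{\cM}_{\ov Z}(\ov\Xi)$ and $+\infty$ otherwise (using Lemma~\ref{lem:polar} and the lower semicontinuity of $\ov Z$), Daniell--Stone gives duality for each $n$, and the theorem follows by taking $\sup_n$ against $\ov\xi_n':=\inf_{m\geq n}\ov\xi_m$. Without this penalization, the direct application of Daniell--Stone to $\ov\Phi^{\ov Z}$ that you propose does not go through.

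Second, your treatment of the attainment statement addresses the wrong object. The theorem claims that the \emph{primal} infimum $\Phi^Z(\xi)$ is attained, i.e.\ that there exist $c\geq0$ and sequences $(H^n),(G^n)$ realizing the superhedge at the level $\lambda=\Phi^Z(\xi)$; the paper obtains this constructively in Step~(d), by taking the strategies coming from $\ov\Phi^{\ov Z}_n(\ov\xi_n')\leq m+\tfrac1n$ and observing that they satisfy the $-c\ov Z$ admissibility and superhedge $\ov\xi$ in the liminf at level $m=\sup_n\ov\Phi^{\ov Z}_n(\ov\xi_n')$. Your Prohorov/upper-semicontinuity argument would (at best) produce a maximizing \emph{dual} measure, which is not what is asserted; moreover $\ov\Q\mapsto\E_{\ov\Q}[\ov\xi]$ is generally not upper semicontinuous here, since $\ov\xi$ is only a $\liminf$ of bounded upper-semicontinuous functions, so even that argument would not close.
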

\begin{remark}
\label{rem:comaprison-Ass-A-B}
The necessity in introducing the growth function $Z$ (induced by $\overline{Z}$) is a purely technical feature and lies in the admissibility condition required to ensure that gain processes are supermartingales.

Observe that Assumption~\ref{ass:B} implies the conditions (A1)-(A2) of Assumption~\ref{ass:A}. Indeed, Assumption~\ref{ass:B} ensures that $\Xi=\{\omega \in \Omega\colon (\omega,\langle \omega \rangle) \in \overline \Xi\}$ for
\begin{equation*}
\overline{\Xi}:= \big\{(\omega,\nu) \in C[0,T]\times C[0,T]\colon \overline{Z}(\omega,\nu)<\infty\big\}.
\end{equation*}

Conversely, given a prediction set $\Xi\subseteq \Omega$ of the form $\Xi=\{\omega \in \Omega\colon (\omega,\langle \omega \rangle) \in \overline \Xi\}$ for some $\ov\Xi \subseteq C[0,T]\times C[0,T]$,
 one may ask for sufficient conditions such that  Assumption~\ref{ass:B} is satisfied. To that end, notice first that if 
$\ov\Xi$ is compact one can  set $\overline{Z}(\omega,\nu):=c+\infty 1_{\Xi^c}(\omega)$ for a suitable constant $c$ to ensure that Assumption~\ref{ass:B} is satisfied.
Having this in mind, the strategy to find the corresponding $\overline{Z}$
when $\ov\Xi$ is not necessarily compact
is to search for a function $\overline{Z}$ such that for  $Z(\omega):=\overline{Z}(\omega,\langle \omega\rangle)$, $\omega \in \Omega$, one has that $\mathcal{M}(\Xi)=\mathcal{M}_Z(\Xi)$.
While this might not always possible, it turns out that conditions (A1)-(A2) of Assumption~\ref{ass:A} are not far from being a sufficient condition to guarantee that Assumption~\ref{ass:B} is satisfied, provided that one restricts $\ov \Xi$ to contain only H\"older paths. For the precise statement, we refer to the following proposition 
whose proof is presented at the end of Subsection~\ref{subsec:proof.theorem.Z}.
\end{remark}

In the following consider the space of all H\"older continuous functions, 
\[C^{\textup{H\"older}}[0,T]=\bigcup_{n\in\mathbb{N}} \left\{\omega\in C[0,T]:\|\omega\|_{\frac{1}{n}}\le n\right\},\]
where, for $\alpha\in(0,1]$, the $\alpha$-H\"older norm is given by
$\|\omega\|_\alpha:=|\omega(0)|+\sup_{s\neq t}\frac{|\omega(t)-\omega(s)|}{|t-s|^\alpha}$.

\begin{proposition}
\label{prop:sufficient.assumption.for.prediction.set}
  Let $\emptyset\neq \Xi \subseteq \Omega$ be 
  of the form $\Xi=\{\omega \in \Omega\colon (\omega,\langle \omega \rangle) \in \overline \Xi\}$ for some $\ov\Xi \subseteq C[0,T]\times C[0,T]$ such that the conditions (A1)-(A2) of Assumption \ref{ass:A} hold and 
 
  \vspace*{0.15cm}
  \noindent
  $\bullet$  $\ov \Xi  \subseteq C^{\textup{H\"older}}[0,T] \times C^{\textup{H\"older}}[0,T]$,\\
  $\bullet$ $\mathcal{M}(\Xi)$ is nonempty and there exists a constant $c>0$ such that for all $\Q \in \mathcal{M}(\Xi)$
   and $s,t \in [0,T]$ it holds
   \begin{equation}\label{eq:cond:integ:le:M-MZ}
  \mathbb{E}_{\mathbb{Q}}[|S_0|^4]\leq c
   \quad\text{ and } \quad
  \mathbb{E}_{\mathbb{Q}}\big[ |S_t-S_s|^4\big] + \mathbb{E}_{\mathbb{Q}}\big[ |\langle S\rangle_t-\langle S\rangle_s|^4\big] 
   \leq c |t-s|^2.
   \end{equation}
  Then there exists a function $\ov Z\colon C[0,T]\times C[0,T]\to [1,\infty]$ such that Assumption~\ref{ass:B} is satisfied and
    $\mathcal{M}(\Xi)=\mathcal{M}_Z(\Xi)$ for $Z\colon\Omega \to [1,\infty]$ defined by $Z(\omega):=\ov Z(\omega,\langle \omega \rangle)$, $\omega \in \Omega$.
\end{proposition}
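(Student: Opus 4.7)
The plan is to build $\ov Z$ as the sum of a functional $\phi$ encoding membership in the compact exhaustion of $\ov\Xi$, a H\"older-index functional $\beta$, and a sup-norm term. The principal analytic input is Kolmogorov's continuity criterion applied to the fourth-moment bound \eqref{eq:cond:integ:le:M-MZ}, which yields uniform control on H\"older norms of $(S,\langle S\rangle)$ under all $\Q\in\mathcal{M}(\Xi)$.

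First I would apply Kolmogorov--Chentsov: for any fixed $\alpha\in(0,1/4)$ there exists a constant $C_\alpha=C_\alpha(c,T)<\infty$ with $\E_\Q[\|S\|_\alpha^4]+\E_\Q[\|\langle S\rangle\|_\alpha^4]\leq C_\alpha$ for every $\Q\in\mathcal{M}(\Xi)$, so in particular the family of joint laws of $(S,\langle S\rangle)$ is uniformly tight on $C[0,T]\times C[0,T]$. Using (A1), write $\ov\Xi=\bigcup_n K_n^{(0)}$ with $K_n^{(0)}$ compact and increasing, and define the lower semicontinuous functions (with $\inf\emptyset:=\infty$)
\[\phi(\omega,\nu):=\inf\{n\geq 1:(\omega,\nu)\in K_n^{(0)}\},\quad\beta(\omega,\nu):=\inf\{n\geq 1:\|\omega\|_{1/n}\leq n,\,\|\nu\|_{1/n}\leq n\},\]
both LSC since $\{\phi\leq n\}=K_n^{(0)}$ is closed and $\{\beta\leq n\}$ is a finite union of sup-norm-closed H\"older balls (compact by Arzel\`a--Ascoli). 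Setting
\[\ov Z(\omega,\nu):=\phi(\omega,\nu)+\beta(\omega,\nu)+\|\omega\|_\infty+\|\nu\|_\infty+1,\]
I would check: $\ov Z$ is LSC and $\{\ov Z\leq M\}\subseteq K_M^{(0)}\cap\bigl(\bigcup_{n\leq M}\{\|\omega\|_{1/n}\leq n,\,\|\nu\|_{1/n}\leq n\}\bigr)$, which is compact, giving (B1); (B2) holds because $\phi<\infty$ forces $(\omega,\nu)\in\ov\Xi$ where (A2) applies; (B3) is immediate. The identity $\Xi=\{\omega\in\Omega:\ov Z(\omega,\langle\omega\rangle)<\infty\}$ follows because the hypothesis $\ov\Xi\subseteq C^{\textup{H\"older}}[0,T]\times C^{\textup{H\"older}}[0,T]$ makes $\beta(\omega,\langle\omega\rangle)<\infty$ automatic for $\omega\in\Xi$, while (A1) makes $\phi$ finite there as well.

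The hard part will be verifying $\mathcal{M}(\Xi)=\mathcal{M}_Z(\Xi)$, i.e.\ that $\E_\Q[\ov Z(S,\langle S\rangle)]<\infty$ for every $\Q\in\mathcal{M}(\Xi)$. The H\"older-driven pieces are straightforward: the comparison $\|\omega\|_{1/n}\leq \|\omega\|_\alpha(1+T^{\alpha-1/n})$ valid for $n\geq 1/\alpha$ shows that $\beta(S,\langle S\rangle)$ is dominated by an affine function of $\|S\|_\alpha+\|\langle S\rangle\|_\alpha$, and $\|S\|_\infty\leq (1+T^\alpha)\|S\|_\alpha$, so these contributions are uniformly $L^1(\Q)$ via the first step. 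The genuinely delicate term is $\E_\Q[\phi]$, since the rate $\Q((S,\langle S\rangle)\notin K_n^{(0)})\to 0$ is $\Q$-dependent. To overcome this I would refine the exhaustion by replacing $K_n^{(0)}$ with the smaller compact $\tilde K_n:=K_n^{(0)}\cap\{\|\omega\|_\alpha\leq n,\,\|\nu\|_\alpha\leq n\}$ in the definition of $\phi$; then two of the three terms bounding $\Q(\phi>n)$ decay polynomially and uniformly in $\Q$ by Markov applied to the uniform moment bound, while the residual term $\Q((S,\langle S\rangle)\notin K_n^{(0)})$ can be made summable for each fixed $\Q$ by exploiting inner regularity of Borel measures on the Polish space $C[0,T]\times C[0,T]$ (combined, if needed, with a diagonal selection from the dense countable subfamily of $\mathcal{M}(\Xi)$). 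This interplay between the $\sigma$-compactness of $\ov\Xi$ and the uniform tightness delivered by Kolmogorov's estimate is the technical heart of the argument, and is precisely why the H\"older hypothesis is essential.
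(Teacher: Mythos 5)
Your opening step (Kolmogorov's criterion giving $\sup_{\Q\in\mathcal{M}(\Xi)}\E_\Q[\|S\|_\alpha^4+\|\langle S\rangle\|_\alpha^4]<\infty$) and your handling of the $\beta$- and sup-norm terms agree with the paper and are fine. The genuine gap is the membership-index term $\phi$. For the conclusion $\mathcal{M}(\Xi)=\mathcal{M}_Z(\Xi)$ you must prove $\E_\Q[\phi(S,\langle S\rangle)]<\infty$ for \emph{every} $\Q\in\mathcal{M}(\Xi)$, and neither of your proposed devices can deliver this: the moment bound \eqref{eq:cond:integ:le:M-MZ} controls H\"older norms, not the index of the first set of a fixed compact exhaustion of $\ov\Xi$ containing the path, so after passing to $\tilde K_n$ the residual term $\Q\big((S,\langle S\rangle)\notin K_n^{(0)}\big)$ tends to $0$ at a $\Q$-dependent and possibly arbitrarily slow rate, and $\sum_n\Q(\phi>n)$ may diverge. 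Inner regularity produces, for one fixed $\Q$ and $\varepsilon$, \emph{some} compact of mass $1-\varepsilon$, which need not be contained in any $K_n^{(0)}$, and $\ov Z$ must be fixed before $\Q$ is; nor is there a countable subfamily of $\mathcal{M}(\Xi)$ whose control would propagate to all of $\mathcal{M}(\Xi)$, since expectations of unbounded functions do not pass to limits of measures. The failure is structural, not a missing detail: take $\ov\Xi=\{(c_k,0):k\in\N\}$, where the $c_k\in[0,1]$ are constant paths with $c_k\uparrow c_\infty$ and $(c_\infty,0)\notin\ov\Xi$. All hypotheses of the proposition hold (singletons give (A1), constants are H\"older, increments vanish so \eqref{eq:cond:integ:le:M-MZ} is trivial), yet every compact subset of $\ov\Xi$ contains only finitely many $(c_k,0)$, so $\phi(c_k,0)\to\infty$ for \emph{any} exhaustion; choosing $k_j$ with $\phi(c_{k_j},0)\ge 2^j$ and letting $\Q$ charge the constant path $c_{k_j}$ with mass proportional to $2^{-j}$ gives $\Q\in\mathcal{M}(\Xi)$ with $\E_\Q[\phi]=\infty$. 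With your $\ov Z$ one then has $\mathcal{M}_Z(\Xi)\subsetneq\mathcal{M}(\Xi)$, so the assertion you need fails.

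The paper's proof avoids exactly this by never encoding membership in $\ov\Xi$ through a finite-valued index. From the uniform moment bound and Markov's inequality it selects thresholds $a_n\uparrow\infty$ with $\sup_{\Q\in\mathcal{M}(\Xi)}\Q(\|S\|_\alpha+\|\langle S\rangle\|_\alpha>a_n)\le n^{-3}$, sets $\ov Y(\omega,\nu):=\inf\{n\ge 1/\alpha:\|\omega\|_\alpha+\|\nu\|_\alpha\le a_{n+1}\}$ (so $\{Y=n\}\subseteq\{\|S\|_\alpha+\|\langle S\rangle\|_\alpha>a_n\}$ and $\sup_\Q\E_\Q[Y]<\infty$), adds $\|\omega\|_\infty+\|\nu\|_\infty$, and finally defines $\ov Z:=\ov Y+\infty\,1_{\ov\Xi^c}$. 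The prediction set enters only through a $\{0,\infty\}$-valued penalty, which costs nothing in expectation because $(S,\langle S\rangle)\in\ov\Xi$ holds $\Q$-a.s.\ for every $\Q\in\mathcal{M}(\Xi)$, while $\{\ov Z<\infty\}=\ov\Xi$ follows from the H\"older hypothesis on $\ov\Xi$; the (A1)-exhaustion is not used quantitatively at all in this argument. Your desire for genuinely compact sublevel sets inside $K_M^{(0)}$ is what led you to $\phi$, but it is precisely that term which destroys the equality of the two sets of measures; the paper instead accepts sublevel sets cut out by the infinite penalty. Deleting $\phi$, keeping your $\beta$/sup-norm estimates (or the paper's adaptive radii $a_n$), and replacing it by $\infty\,1_{\ov\Xi^c}$ turns your construction into the paper's proof.
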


\section{Examples}
\label{sec:examples}

\begin{example}[Random $G$-expectation]
	\label{ex:GBM}
	Fix two bounded Borel functions 
  $\underline{\sigma}, \overline{\sigma}\colon [0,T]\times C[0,T]\times C[0,T]\to[0,\infty)$ such that $\underline{\sigma} \leq \overline{\sigma}$ and for every $t\in[0,T]$
  the mappings $(\omega,\nu)\mapsto \underline{\sigma}_t(\omega,\nu)$ and 
  $(\omega,\nu)\mapsto \overline{\sigma}_t(\omega,\nu)$ are continuous. Consider the set
	\[ \overline{\Xi}:=\left\{(\omega,\nu)\in C^{\textup{H\"older}}[0,T] \times C^{\textup{H\"older}}[0,T] :
	\begin{array}{l}
	\omega(0)=\nu(0)=0 \text{ and }\\
  \nu \text{ is nondecreasing with } d\nu\ll d t \text{ and}\\
	d\nu/dt(s) \in[\underline{\sigma}_s(\omega,\nu)^2,\overline{\sigma}_s(\omega,\nu)^2]
  \text{ for all }s
	\end{array}\right\}.\]
	Then $\ov\Xi$ satisfies (A1) and (A2). Indeed, (A2) follows by its definition.
  As for (A1), let 
  $\overline{\Xi}_n:=\overline{\Xi}\cap (C_n\times C^{\textup{H\"older}}[0,T])$
  where $C_n:=\{\omega\in C[0,T] : \|\omega\|_{1/n}\leq n\}$.
  We claim that $\overline{\Xi}_n$ is compact. To that end, observe first that every $(\omega,\nu)\in \ov{\Xi}_n$ satisfies that $\omega \in C^{1/n}[0,T]$ and $\nu$ is Lipschitz-continuous with Lipschitz-constant  $\|\overline{\sigma}\|_\infty$. Therefore, relative compactness of $\ov{\Xi}_n$ follows 
  by the Arzel{\`a}-Ascoli theorem.
  As for closedness 
  let $(\omega^k,\nu^k)_{k \in \N} \subseteq \ov{\Xi}_n$ be a sequence which converges to some point $(\omega,\nu) \in  \ov \Omega$. We need to verify that $(\omega,\nu) \in \ov{\Xi}_n$. Clearly, $\omega(0)=0$ and $\omega\in C^{1/n}[0,T]$, hence it remains to check that $\nu$ satisfies the desired properties so that $(\omega,\nu) \in \ov{\Xi}_n$. Indeed, $\nu(0)=0$, it is nondecreasing, and $\nu \in C^{\textup{H\"older}}[0,T]$ as it is Lipschitz-continuous with constant  $\|\overline{\sigma}\|_\infty$. In particular, $\nu$ is absolutely continuous.
  To check  that $d\nu/dt(s) \in[\underline{\sigma}_s(\omega,\nu)^2,\overline{\sigma}_s(\omega,\nu)^2]
  \text{ for all }s$, observe that as
  $f^k:=d\nu^k/dt$ is bounded from below and above uniformly in $k$, we know from the Banach-Alaoglu theorem that there is a subsequence
  which converges to some $f$ in $\sigma(L^\infty(dt),L^1(dt))$. The continuity of 
  $(\omega,\nu)\mapsto \overline{\sigma}_t(\omega,\nu)$ and 
  $(\omega,\nu)\mapsto \underline{\sigma}_t(\omega,\nu)$ imply that 
  $\underline{\sigma}_t(\omega,\nu)\leq f_t\leq \overline{\sigma}_t(\omega,\nu)$
  $dt$-almost surely. 
  Moreover, as $(\nu_k)_{k \in\N}$ converges to $\nu$ and their derivatives $(f^k)_{k\in \N}$ converge to some $f$ in $\sigma(L^\infty(dt),L^1(dt))$, dominated convergence ensures that $d\nu/dt=f$. Therefore, we obtained that  $(\omega,\nu)\in \ov{\Xi}_n$, which shows that  $\ov{\Xi}_n$ is closed.

  Denote $\Xi=\{\omega\in\Omega : (\omega,\langle\omega\rangle)\in\ov\Xi\}$.
	Using the Burkholder-Davis-Gundy inequality, one obtains for every $\Q\in\mathcal{M}(\Xi)$ and $s,t\in[0,T]$ that
  \begin{align}
  \label{eq:BM.bdg}
  \mathbb{E}_{\mathbb{Q}}[ |S_t-S_s|^4]
  \leq C_{\mathrm{BDG}}\mathbb{E}_{\mathbb{Q}}\Big[ \Big(\int_s^t 1 \,d\langle S\rangle\Big)^2\Big]
  \leq C_{\mathrm{BDG}}\|\ov\sigma\|_\infty^4 |t-s|^2
  \end{align}
  for some constant $C_{\mathrm{BDG}}\geq 0$.
  Further 
  $\mathbb{E}_{\mathbb{Q}}[ |\langle S\rangle_t-\langle S\rangle_s|^4]
  \leq \mathbb{E}_{\mathbb{Q}}[(\int_s^t1\,d\langle S\rangle)^4]
  \leq \|\ov\sigma\|_\infty^8 |t-s|^4
  \leq c\|\ov\sigma\|_\infty^8 |t-s|^2$ for some constant $c\geq 0$ (as $s,t\in[0,T]$ with $T$ finite).
  Therefore Proposition \ref{prop:sufficient.assumption.for.prediction.set} implies
  the existence of $\ov Z$ such that $\Xi$ together with $Z(\omega):=\ov Z(\omega,\langle\omega\rangle)$ for $\omega\in\Omega$ satisfy Assumption \ref{ass:B}.
\end{example}

\begin{remark}
  Observe that in the special case where the functions $\underline{\sigma}$ and $\overline{\sigma}$ are constant we recover the classical $G$-expectation introduced by Peng \cite{Peng.07} when $\underline{\sigma}<\overline{\sigma}$, or the classical Bachelier model when $\underline{\sigma}=\overline{\sigma}$.
\end{remark}

\begin{example}[Black-Scholes under model uncertainty]
\label{ex:black.scholes}
  Fix two bounded Borel functions 
  $\underline{\sigma}, \overline{\sigma}\colon [0,T]\times C[0,T]\times C[0,T]\to[0,\infty)$ such that $\underline{\sigma} \leq \overline{\sigma}$ and for every $t\in[0,T]$
  the mappings $(\omega,\nu)\mapsto \underline{\sigma}_t(\omega,\nu)$ and 
  $(\omega,\nu)\mapsto \overline{\sigma}_t(\omega,\nu)$ are continuous. Consider the set
  \[ \overline{\Xi}:=\left\{(\omega,\nu)\in C^{\textup{H\"older}}[0,T] \times C^{\textup{H\"older}}[0,T] :
  \begin{array}{l}
  \omega(0)=1 \text{ and }
  \nu(\cdot)=\int_0^\cdot \sigma_t^2\omega(t)^2\,dt \text{ where}\\
  \sigma_s^2 \in[\underline{\sigma}_s(\omega,\nu)^2,\overline{\sigma}_s(\omega,\nu)^2]
  \text{ for all }s
  \end{array}\right\}.\]
  Similar arguments as in Example \ref{ex:GBM} show that the assumptions of Propositions \ref{prop:sufficient.assumption.for.prediction.set} are satisfied
  and therefore ensure the existence of $\ov Z$ such that $\Xi$ together with $Z(\omega):=\ov Z(\omega,\langle\omega\rangle)$ for $\omega\in\Omega$ satisfy Assumption \ref{ass:B}.
\end{example}

\begin{remark}
  In case that $\underline{\sigma}=\overline{\sigma}$ is constant, we obtain the classical Black-Scholes model.
\end{remark}


\begin{example}
  Let $c>0$ be a constant, and let 
  \[ \ov\Xi:=\Bigg\{ (\omega,\nu)\in C^{\textup{H\"older}}[0,T] \times C^{\textup{H\"older}}[0,T] :  
  \begin{array}{l}
  \omega(0)=1 \text{ and } \omega\geq 0, \\
  \nu(0)=0 \text{ and } \nu \text{ is nondecreasing},\\
  \tfrac{\sup_{s\in[t,T]} \omega(s) - \inf_{s\in[t,T]} \omega(s)}{T-t}
  \leq c \tfrac{\nu(t)}{t} \,\,\,\forall t\in(0,T)
  \end{array}  
  \Bigg\}.\]
  The motivation of this example is the following. The financial agent believes that at each time, the average future price fluctuation can be controlled by its average volatility observed from the past. 

  To check Assumption \ref{ass:A}, notice first that assumptions (A2)-(A3) follow by definition.
  To verify (A1), for every $n$ define $C^{1/n}[0,T]:=\{\omega\in C[0,T] : \|\omega\|_{1/n}\leq n\}$ and let 
  \[\ov\Xi_n:=\Bigg\{ (\omega,\nu)\in C^{1/n}[0,T] \times C^{1/n}[0,T] :  \begin{array}{l}
  \omega(0)=1 \text{ and } \omega\geq 0, \\
  \nu(0)=0 \text{ and } \nu \text{ is nondecreasing},\\
  \tfrac{\sup_{s\in[t,T]} \omega(s) - \inf_{s\in[t,T]} \omega(s)}{T-t}
  \leq c \tfrac{\nu(t)}{t} \,\,\,\forall t\in\big(\tfrac{1}{n},\tfrac{n-1}{n}\big)
  \end{array}
  \Bigg\}.\]
  As $C^{1/n}[0,T]$ is compact and all constraints are continuous, 
  we see that $\ov\Xi_n$ is a compact set. 
  Since also $\ov\Xi=\bigcup_n\ov\Xi_n$, assumption (A1) and therefore Assumption \ref{ass:A} holds true.
\end{example}
{Finally we provide an example in the spirit of \cite{BartlKupperProemelTangpi.17} showing that if one defines $\Phi(\xi)$ in Theorem~\ref{thm:main.stopped} without the $\liminf$, then there is a duality gap.} 
\begin{example}\label{ex:duality-gap}
{Consider the prediction set
	\[ \Xi:=\Big\{\omega \in C^{\textup{H\"older}}[0,T] :  \langle \omega\rangle \in C^{\textup{H\"older}}[0,T],\, \omega(0)=0, \text{ and }\omega(t)\in[0,1]\text{ for all }t\in[0,T]\Big\}\]
	and the claim $\xi\colon\Omega\to[0,+\infty]$ defined by  $\xi(\omega):=+\infty 1_{\{0\}^c}(\omega)$, meaning that
	\begin{equation*}
	\begin{split}
	\xi(\omega):=
	\begin{cases} 0 & \mbox{if } \omega =0,\\
	+\infty & \mbox{else.}
	\end{cases}
	\end{split}
	\end{equation*}
	One can check that $\Xi$ satisfies  Assumption~\ref{ass:A} and that $\xi$ is lower semicontinuous, in particular satisfies the assumption of Theorem \ref{thm:main.stopped}. 
		Now, observe that the only martingale measure supported on $\Xi$ gives full mass to the constant path 0, hence $$\sup_{\mathbb{Q}\in\mathcal{M}(\Xi)} \mathbb{E}_\mathbb{Q}  [\xi]=\xi(0)=0.$$
		On the other hand, let $\lambda\in[0,+\infty]$ be such that 
		\[\lambda + (H\cdot S)_T+ (G\cdot \mathbb{S})_T\geq \xi\quad\text{on } \Xi\]
		for some $H,G\in\mathcal{H}$ (or, more generally, for some $H$ and $G$ of pathwise finite variation).
		Then for any $\omega\in\Xi\setminus\{0\}$ (e.g.\ $\omega(t)=t/T$) one has $(H\cdot S)_T(\omega)\leq \|t\mapsto H_t(\omega)\|_{1-var} \|\omega\|_\infty<+\infty$ and similarly $(G\cdot \mathbb{S})_T(\omega)<+\infty$, where 
		here $\|\cdot\|_{1-var}$ denotes the first variation norm.
		This requires $\lambda$ to be $+\infty$ and in turn implies that the smallest superhedging price defined with simple (or finite variation) strategies equals $+\infty$, leading to a duality gap.
}
\end{example}
\section{Proof of the main results}
\label{sec:Proof}

\subsection{An enlarged space}
\label{subsec:enlarged space}

All proofs are based on a lifting argument, where instead of $\Omega$ we consider the space 
$\ov \Omega := C[0,T]\times  C[0,T]$; one should think of the first component as the path, and the second
as the it's quadratic variation.
We start by defining all objects for the product space (a basic rule here is that every object with an overline is defined on the product space) and by stating some relationships between the original and the enlarged space.

The space $\ov\Omega$ is endowed with the sup-norm $\|\ov\omega\|_\infty:=\|\omega\|_\infty+\|\nu\|_\infty$ for 
$\ov\omega=(\omega,\nu)\in\ov \Omega$, and equipped with its Borel $\sigma$-field $\ov \cF$.
Denote by
$\fP(\ov \Omega)$ the set of all probability measures on $(\ov \Omega, \ov \cF)$.
The following Borel mapping will play a crucial role
\begin{equation*}
\psi\colon\Omega\to\ov\Omega, 
\quad \quad 
 \omega\mapsto (\omega,\langle\omega\rangle),
\end{equation*}
 and the following set
 \begin{equation*}
 \ov\Delta:=\psi(\Omega)=\big\{(\omega,\nu)\in\ov\Omega: \omega\in\Omega\mbox{ and }\langle\omega\rangle=\nu\big\}.
 \end{equation*}
Moreover, we write $\ov\Delta^c$ for the complement of the set $\ov\Delta$. 
On $\ov\Omega$ we consider the canonical process $(\ov S,\ov V)$ given by $\ov S_t(\ov \omega)=\omega(t)$ and $\ov V_t(\ov\omega)=\nu(t)$ for $\ov\omega=(\omega,\nu)\in\ov \Omega$. 
Also denote $\ov{\S}:=(\ov S^2-\ov{S}^2_0-\ov V)/2$. 
Denote by  $\ov \F =(\ov{\mathcal{F}}_t)_{0\leq t \leq T}$ the raw filtration generated by $(\ov S,\ov V)$ (or, equivalently, by $\ov S$ and $\ov\S$),
by $\ov \F_+$ its right-continuous version, and define the corresponding $\ov\Delta$-augmentations
$\ov\F^{\ov\Delta}:=
\ov\F\vee \sigma(\ov N\subseteq \ov\Omega : \ov N \subseteq \ov\Delta^c)$ and
$\ov\F^{\ov\Delta}_+
:=\ov\F_+\vee \sigma(\ov N\subseteq \ov\Omega : \ov N \subseteq \ov\Delta^c)$.

Further denote by $\ov{\mathcal{H}}$ the set of all $\ov \F^{\ov\Delta}_+$-simple processes
$\ov H\colon [0,T]\times \ov\Omega\to\mathbb{R}$ of the form 
$\ov H_t(\omega)=\sum_{l=1}^{L}  \ov h_l(\ov\omega)1_{( \ov\tau_l({\ov\omega}),\ov\tau_{l+1}({\ov\omega})]}(t)$
for $(t,\ov{\omega})\in [0,T]\times \ov\Omega$,
where $L\in\mathbb{N}$, $0\leq \ov\tau_1\leq\dots\leq \ov\tau_{L+1}\leq T$ are stopping times 
w.r.t.~the filtration $\ov\F_+^{\ov\Delta}$, and $ \ov h_l\colon \ov \Omega\to\mathbb{R}$ 
are bounded $\ov \cF_{\ov \tau_l+}^{\ov\Delta}$-measurable functions. 
For a simple process $\ov H\in\ov{\mathcal{H}}$ the pathwise stochastic integrals
$(\ov H \cdot \ov S)$ and $(\ov H \cdot \ov\S)$ are clearly well-defined as before.

Finally, for any $\ov\Xi\subseteq\ov\Omega$  and $\ov Z\colon \ov \Omega \to [1,\infty]$, let
\begin{equation*}
\begin{split}
\ov{\cM}(\ov \Xi)&:=\big\{ \ov \Q \in \fP (\ov \Omega)\colon \ov S \mbox{ and } \ov \S \mbox{ are $\ov \Q$-$\ov \F$-local martingales and }  \ov \Q(\ov \Xi)=1 \big\},\\
\ov{\cM}_{\ov Z}(\ov \Xi)&:=\big\{ \ov \Q \in \fP (\ov \Omega)\colon \ov S \mbox{ and } \ov \S \mbox{ are $\ov \Q$-$\ov \F$-local martingales, }\, \ov \Q(\ov \Xi)=1, \text{ and } \E_{\ov \Q}[\ov Z]<\infty \big\}.
\end{split}
\end{equation*}

\begin{remark}\label{rem:filtration-representation}
  An easy computation (similar as for the completion of a $\sigma$-field) shows that
  $\ov\cF^{\ov\Delta}_t=\{(\ov A\cap\ov\Delta)\cup \ov N : \ov A\in\ov \cF_t,\, \ov N\subseteq \ov\Delta^c\}$,
  and
  $\ov\cF^{\ov \Delta}_{t+}=\{(\ov A\cap\ov\Delta)\cup \ov N : \ov A\in\ov \cF_{t+},\, \ov N\subseteq \ov\Delta^c\}$.
\end{remark}
\begin{remark}\label{rem:ass-A2-enlarged}
Notice that if $\ov \Xi$ satisfies condition~(A2) of Assumption~\ref{ass:A}, then  standard results on the quadratic variation ensure that for every
$\ov{\Q}\in\ov{\mathcal{M}}(\ov\Xi)$ we have that $\langle \ov S\rangle^{\ov{\Q}}=\ov V$ $\ov{\Q}$-a.s.\ and 
  $\ov\Q(\ov S\in \Omega)=1$ by Remark \ref{rem:Omega},
  which shows $\ov{\Q}(\ov\Delta)=1$. Hence, $\ov{\Q}(\ov N)=0$ for all $\ov N\subseteq\ov\Delta^c$, and hence $\ov\F^{\ov\Delta}\subseteq \ov\F^{\ov{\Q}}$ and 
  $\ov\F^{\ov\Delta}_+\subseteq \ov\F^{\ov{\Q}}_+$, where e.g.~$\ov\F^{\ov{\Q}}$ denotes the $\ov{\Q}$-completion of $\ov\F$. 
\end{remark}

\begin{lemma}
\label{lem:transfer.adapted}
  For every $t\in[0,T]$ one has
\begin{itemize}
\item[\emph{(a)}] $\ov\cF^{\ov \Delta}_{t+}=\bigcap_{s>t} \ov\cF^{\ov\Delta}_{s\wedge T}$,
  
\item[\emph{(b)}] $\{\ov\omega\in\ov\Omega : \omega\in A\}\in\ov{\mathcal{F}}_{t}^{\ov\Delta}$
    for all $A\in \mathcal{F}_{t}$, 

\item[\emph{(c)}] $\{\ov\omega\in\ov\Omega : \omega\in A\}\in\ov{\mathcal{F}}_{t+}^{\ov\Delta}$
    for all $A\in \mathcal{F}_{t+}$, 
    
\item[\emph{(d)}] $\psi^{-1}(\ov A)\in\mathcal{F}_{t}$ for all $\ov{A}\in\ov{\mathcal{F}}_{t}^{\ov\Delta}$,

\item[\emph{(e)}] $\psi^{-1}(\ov A)\in\mathcal{F}_{t+}$ for all $\ov{A}\in\ov{\mathcal{F}}_{t+}^{\ov\Delta}$.
\end{itemize}
\end{lemma}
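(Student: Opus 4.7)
The plan is to leverage Remark~\ref{rem:filtration-representation}, which gives the explicit description
$\ov\cF^{\ov\Delta}_t=\{(\ov A\cap\ov\Delta)\cup \ov N : \ov A\in\ov\cF_t,\, \ov N\subseteq \ov\Delta^c\}$
(and analogously for $\ov\cF^{\ov\Delta}_{t+}$), together with the key observation that both generating processes on $\ov\Omega$ pull back through $\psi$ to $\F$-adapted processes on $\Omega$: trivially $\ov S_s\circ\psi=S_s$, while $\ov V_s\circ\psi=\langle S\rangle_s$ is $\cF_s$-measurable because the stopping times $\sigma^m_k$ are $\F$-stopping times, so $\mathbb{S}^m_s=\sum_k S_{\sigma^m_k\wedge s}(S_{\sigma^m_{k+1}\wedge s}-S_{\sigma^m_k\wedge s})$ (terms with $\sigma^m_k>s$ vanish, so this agrees with the original definition) is $\cF_s$-measurable, and hence so are $\mathbb{S}_s=\liminf_m\mathbb{S}^m_s$ and $\langle S\rangle_s=S_s^2-S_0^2-2\mathbb{S}_s$ on $\Omega$.

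For (a), the inclusion $\subseteq$ is immediate from $\ov\cF_{t+}\subseteq\ov\cF_s$ for $s>t$. For $\supseteq$, I take $\ov B\in\bigcap_{s>t}\ov\cF^{\ov\Delta}_{s\wedge T}$ and a sequence $s_n\downarrow t$. By the remark, for each $n$ one can write $\ov B=(\ov A_{s_n}\cap\ov\Delta)\cup \ov N_{s_n}$ with $\ov A_{s_n}\in\ov\cF_{s_n}$ and $\ov N_{s_n}\subseteq\ov\Delta^c$. Setting $\ov A:=\limsup_n\ov A_{s_n}=\bigcap_m\bigcup_{n\geq m}\ov A_{s_n}$, each tail-union lies in $\ov\cF_{s_m}$, so $\ov A\in\bigcap_m\ov\cF_{s_m}=\ov\cF_{t+}$. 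Since all $\ov A_{s_n}\cap\ov\Delta$ equal the fixed set $\ov B\cap\ov\Delta$, also $\ov A\cap\ov\Delta=\ov B\cap\ov\Delta$, and therefore $\ov B=(\ov A\cap\ov\Delta)\cup (\ov B\setminus\ov\Delta)\in\ov\cF^{\ov\Delta}_{t+}$.

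For (b), writing $A=A'\cap\Omega$ for some $A'$ in the cylinder $\sigma$-field on $C[0,T]$ generated by the coordinates up to time $t$, I obtain
$\{\ov\omega:\omega\in A\}=(A'\times C[0,T])\cap(\Omega\times C[0,T])$,
where $A'\times C[0,T]\in\ov\cF_t$ is a cylinder and $\Omega\times C[0,T]\in\sigma(\ov N\subseteq\ov\Delta^c)$ since its complement $\Omega^c\times C[0,T]$ is contained in $\ov\Delta^c$ (as $\ov\Delta\subseteq\Omega\times C[0,T]$). Part (c) is the identical argument with $\ov\cF_t$ replaced by $\ov\cF_{t+}$. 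For (d), writing $\ov A=(\ov B\cap\ov\Delta)\cup \ov N$ with $\ov B\in\ov\cF_t$ and $\ov N\subseteq\ov\Delta^c$, one has $\psi^{-1}(\ov N)=\emptyset$ (since $\psi(\Omega)=\ov\Delta$) and $\psi^{-1}(\ov\Delta)=\Omega$, so $\psi^{-1}(\ov A)=\psi^{-1}(\ov B)$; a monotone-class argument on the generators $\{\ov S_s\in D\}$ and $\{\ov V_s\in D\}$ for $s\leq t$, whose $\psi$-pullbacks are $\{S_s\in D\}$ and $\{\langle S\rangle_s\in D\}$ (both in $\cF_s\subseteq\cF_t$ by the observation above), then yields $\psi^{-1}(\ov B)\in\cF_t$. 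Part (e) follows from (d) applied at each $s>t$ together with $\cF_{t+}=\bigcap_{s>t}\cF_s$ and (a).

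The main obstacle is (a): one must manufacture a single $\ov\cF_{t+}$-representative from a family of $\ov\cF_s$-representatives that a priori agree only on $\ov\Delta$. The $\limsup$ trick works precisely because every such representative must coincide on $\ov\Delta$ with the common set $\ov B\cap\ov\Delta$, so the $\limsup$ inherits this coincidence while simultaneously landing in the right $\sigma$-algebra.
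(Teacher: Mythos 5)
Your proof is correct. For (a) and (e) it coincides with the paper's argument: your $\limsup_n\ov A_{s_n}$ along $s_n\downarrow t$ is a sequential version of the paper's $\bigcap_{s>t,\,s\in\Q}\bigcup_{t<r<s,\,r\in\Q}\ov A_r$, and both exploit that every representative must agree with $\ov B$ on $\ov\Delta$. For (b) and (d) you take a mildly different route: the paper represents each $A\in\cF_t$ as $\{\omega\in\Omega:(\omega(s))_{s\in S}\in B\}$ for a countable $S\subseteq[0,t]$ (and each $\ov A\in\ov\cF_t^{\ov\Delta}$ via countably many values of $(\ov S,\ov V)$ on $\ov\Delta$ plus a subset of $\ov\Delta^c$), then uses that $\ov\omega\mapsto\omega(s)$ is $\ov\cF_t$-measurable, resp.\ that $\omega\mapsto(\omega(r),\langle\omega\rangle(s))$ is $\cF_t$-measurable; you instead use the trace/product structure $A=A'\cap\Omega$ and $\{\ov\omega:\omega\in A\}=(A'\times C[0,T])\cap(\Omega\times C[0,T])$ with $\Omega^c\times C[0,T]\subseteq\ov\Delta^c$, and, for (d), the measurability of $\psi$ from $(\Omega,\cF_t)$ to $(\ov\Omega,\ov\cF_t)$ checked on the generators $\ov S_s,\ov V_s$. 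Both routes rest on the same two ingredients (the representation of $\ov\cF^{\ov\Delta}_t$ from Remark~\ref{rem:filtration-representation} and the $\cF_s$-measurability of $\langle S\rangle_s$), and a genuine plus of yours is that you justify the latter explicitly via the $\F$-stopping times $\sigma^m_k$, which the paper only asserts. One caveat: in (c), ``the identical argument'' tacitly assumes that every $A\in\cF_{t+}$ is the trace on $\Omega$ of a set lying in the intersection over $s>t$ of the cylinder $\sigma$-fields up to time $s\wedge T$; this trace identity is true but requires your limsup device once more, so it is cleaner (and is what the paper does) to obtain (c) by applying (b) at each $s>t$ and then invoking (a).
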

\begin{proof} 
  (a) By definition, we have $\ov\cF^{\ov \Delta}_{t+}\subseteq \bigcap_{s>t} \ov\cF^{\ov\Delta}_{s\wedge T}$. For the reverse inclusion, fix $t\in[0,T]$ and $\ov B \in \bigcap_{s>t} \ov\cF^{\ov\Delta}_{s\wedge T}$. Then, for every $s>t$, there exist $\ov A_s \in \ov \cF_s$ and $\ov N_s \subseteq \ov \Delta^c$ such that
  \begin{equation*}
  \ov B = (\ov B \cap \ov \Delta) \cup (\ov B \cap \ov \Delta^c)= (\ov A_s \cap \ov \Delta) \cup \ov N_s
  \end{equation*}
  and both unions are disjoint.
  This implies for each $s>t$ that $\ov A_s \cap \ov \Delta=\ov B \cap \ov \Delta$, as well as $\ov N_s=\ov B \cap \ov \Delta^c$. Define
  \begin{equation*}
  \ov A:= \bigcap_{s>t,\, s\in \Q}\ \bigcup_{t<r<s,\, r\in \Q} \ov A_r \quad \in \ov \cF_{t+}
  \end{equation*}
  Then by construction $\ov B = (\ov A \cap \ov \Delta) \cup (\ov B \cap \ov \Delta^c)$ where the union is disjoint,
  proving the reverse inclusion.
  
  (b) Fix $t\in[0,T]$ and $A\in \mathcal{F}_t$, so that
  \[ A=\big\{\omega\in\Omega : (\omega(s))_{s\in S}\in B\big\}\]
  for a countable set $S\subseteq[0,t]$ and a Borel set $B\subseteq \mathbb{R}^S$.
  Let $\ov N:=\{\ov\omega\in \ov{\Delta}^c : \omega\in A\}\in\ov{\mathcal{F}}_t^{\ov\Delta}$.
  Since $\ov\omega\mapsto \omega(s)$ is $\ov{\mathcal{F}}_{t}$-measurable 
for all $s\leq t$, one has
  \begin{align*}
  \big\{\ov\omega\in\ov{\Omega} : \omega\in A\big\}
  =\big\{\ov\omega\in\ov{\Delta} : \omega\in A\big\}\cup \ov N
  &=\big\{ \ov\omega\in \ov{\Delta} : (\omega(s))_{s\in S} \in B \big\}\cup \ov N
  \in\ov{\mathcal{F}}_{t}^{\ov\Delta}.
  \end{align*}
  
(c) If  $A\in \mathcal{F}_{t+}$ then $A\in\mathcal{F}_{s\wedge T}$ for every $s>t$, so that by (a)
\[\{\ov\omega\in\ov\Omega : \omega\in A\}\in\bigcap_{s>t} \ov\cF^{\ov\Delta}_{s\wedge T}=\ov{\mathcal{F}}_{t+}^{\ov\Delta}.\]

(d) Fix $t\in[0,T]$ and $\ov{A}\in\ov{\mathcal{F}}_t^{\ov\Delta}$.
  Then 
  \[\ov A=\big\{\ov\omega\in\ov\Delta: (\omega(r),\nu(s))_{(r,s)\in S}\in B\big\}\cup \ov N\]
  for a countable set $S\subseteq[0,t]\times [0,t]$, a
  Borel set $B\subseteq\mathbb{R}^S$,
  and $\ov N\subseteq \ov{\Delta}^c$.
  Since $\psi(\omega)\in\ov \Delta$ for every $\omega\in\Omega$, it follows that
  \[ \psi^{-1}(\ov A)
  =\big\{\omega\in\Omega : (\omega(r),\langle\omega\rangle(s))_{(r,s)\in S}\in B \big\}
  \in\mathcal{F}_{t},\]
  since $\omega\mapsto (\omega(r),\langle\omega\rangle(s))$ is $\mathcal{F}_t$-measurable
  for every $r,s\leq t$. 

(e) The argumentation is similar to (c).
\end{proof}
\begin{remark}\label{rem:stopping}
	The following simple results which we will use frequently are direct consequences of Lemma~\ref{lem:transfer.adapted}.
	Let $h:\Omega\to\R$ be $\mathcal{F}_{t+}$-measurable and $\tau$ a $\F_{+}$-stopping time.
	Then $\ov h(\ov\omega):=h(\omega) 1_{\ov\Delta}(\ov\omega)$, $\ov \omega:=(\omega,\nu) \in \ov\Omega$, 
	is $\ov{\mathcal{F}}^{\ov \Delta}_{t+}$-measurable and $\ov\tau(\ov\omega):=\tau(\omega)\,1_{\ov\Delta}(\ov\omega)+T\,1_{\ov\Delta^c}(\ov\omega)$, $\ov\omega:=(\omega,\nu)  \in \ov\Omega$, is a $\ov\F^{\ov \Delta}_+$-stopping time. Conversely, let $\ov\tau$ be a $\ov\F^{\ov \Delta}_+$-stopping time. Then $\ov \tau \circ \psi$ is a $\F_{+}$-stopping time.
\end{remark}
The next lemmas are the key tools to go back and forth between the original space $\Omega$ and the enlarged space $\ov \Omega$.
\begin{lemma}
  \label{lem:transfer.integrals}
  For every $H,G\in\mathcal{H}$ there are $\ov H, \ov G\in\ov{\mathcal{H}}$ such that
  \[ (\ov H\cdot \ov S)+ (\ov G\cdot \ov{\mathbb{S}})
  = ( H\cdot S)\circ \ov S + (G\cdot \mathbb{S})\circ \ov S
  \quad\text{on }\ov\Delta.\]
  Conversely, for every $\ov H,\ov G\in\ov{\mathcal{H}}$ there are $H,G\in \mathcal{H}$ such that
  \[(H\cdot S) + (G \cdot \mathbb{S})
    = (\ov H \cdot \ov S) \circ\psi + (\ov G \cdot \ov{\mathbb{S}})\circ\psi
  \quad\text{on }\Omega.\]
\end{lemma}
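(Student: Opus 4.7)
The idea is a direct lifting/projection via the map $\psi\colon\omega\mapsto(\omega,\langle\omega\rangle)$, using that on $\ov\Delta$ the two canonical processes agree with their counterparts on $\Omega$: if $\ov\omega=(\omega,\nu)\in\ov\Delta$ then $\ov S_t(\ov\omega)=\omega(t)=S_t(\omega)$, $\ov V_t(\ov\omega)=\langle\omega\rangle(t)$, and therefore
\[
  \ov{\mathbb{S}}_t(\ov\omega)=\tfrac{1}{2}\bigl(\ov S_t(\ov\omega)^2-\ov S_0(\ov\omega)^2-\ov V_t(\ov\omega)\bigr)
  =\tfrac{1}{2}\bigl(S_t(\omega)^2-S_0(\omega)^2-\langle\omega\rangle(t)\bigr)=\mathbb{S}_t(\omega),
\]
where the last equality comes from the definition of $\langle\cdot\rangle$ on $\Omega$. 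Hence, once I transfer the coefficients and stopping times correctly, the pathwise Riemann sums defining the two stochastic integrals will agree termwise on $\ov\Delta$ (and on $\psi(\Omega)\subseteq\ov\Delta$ for the converse).

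\textbf{Lifting.} Given $H=\sum_{l=1}^{L}h_l\mathbf{1}_{(\tau_l,\tau_{l+1}]}\in\mathcal{H}$ and similarly $G$, I apply Remark~\ref{rem:stopping}: set $\ov\tau_l(\ov\omega):=\tau_l(\omega)\mathbf{1}_{\ov\Delta}(\ov\omega)+T\mathbf{1}_{\ov\Delta^c}(\ov\omega)$ and $\ov h_l(\ov\omega):=h_l(\omega)\mathbf{1}_{\ov\Delta}(\ov\omega)$, which are $\ov\F^{\ov\Delta}_+$-stopping times and bounded $\ov\cF^{\ov\Delta}_{\ov\tau_l+}$-measurable functions respectively. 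Then $\ov H:=\sum_l\ov h_l\mathbf{1}_{(\ov\tau_l,\ov\tau_{l+1}]}\in\ov{\mathcal{H}}$, and analogously $\ov G$. For $\ov\omega=(\omega,\langle\omega\rangle)\in\ov\Delta$, the values $\ov h_l(\ov\omega)=h_l(\omega)$, $\ov\tau_l(\ov\omega)=\tau_l(\omega)$ together with $\ov S_s(\ov\omega)=S_s(\omega)$ and $\ov{\mathbb{S}}_s(\ov\omega)=\mathbb{S}_s(\omega)$ from above yield the claimed identity termwise.

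\textbf{Projection.} Given $\ov H=\sum_l\ov h_l\mathbf{1}_{(\ov\tau_l,\ov\tau_{l+1}]}$ and $\ov G$ in $\ov{\mathcal{H}}$, define $\tau_l:=\ov\tau_l\circ\psi$ and $h_l:=\ov h_l\circ\psi$. By Lemma~\ref{lem:transfer.adapted}(e), for each $t$ and each Borel set $B$,
\[
  \{\tau_l\leq t\}=\psi^{-1}(\{\ov\tau_l\leq t\})\in\cF_{t+},\qquad
  \{h_l\in B,\;\tau_l\leq t\}=\psi^{-1}\bigl(\{\ov h_l\in B\}\cap\{\ov\tau_l\leq t\}\bigr)\in\cF_{t+},
\]
so $\tau_l$ is an $\F_+$-stopping time and $h_l$ is bounded and $\cF_{\tau_l+}$-measurable. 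Setting $H:=\sum_l h_l\mathbf{1}_{(\tau_l,\tau_{l+1}]}\in\mathcal{H}$ and similarly $G$, for $\omega\in\Omega$ with $\ov\omega:=\psi(\omega)\in\ov\Delta$ the same termwise comparison as before, together with $\ov S_s(\psi(\omega))=S_s(\omega)$ and $\ov{\mathbb{S}}_s(\psi(\omega))=\mathbb{S}_s(\omega)$, produces the second identity.

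\textbf{Main obstacle.} The only delicate point is the bookkeeping of measurability and the $\ov\Delta$-augmentation of the filtration: one must be careful that $\psi^{-1}$ sends $\ov\F^{\ov\Delta}_+$-sets into $\F_+$-sets (so projection preserves admissibility of integrands) and that the indicator truncation at $\ov\Delta^c$ in the lifting preserves $\ov\F^{\ov\Delta}_+$-measurability. Both are handled cleanly by invoking Lemma~\ref{lem:transfer.adapted} and Remark~\ref{rem:stopping}, after which the integral identities are essentially tautological because they reduce to finite sums evaluated at points where the two pairs of canonical processes coincide.
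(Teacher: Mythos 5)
Your proof is correct and follows essentially the same route as the paper: lift the coefficients and stopping times via indicators of $\ov\Delta$ (resp.\ project them via $\psi$), invoke Lemma~\ref{lem:transfer.adapted}/Remark~\ref{rem:stopping} for measurability, and use that $\ov S$, $\ov{\mathbb S}$ agree with $S$, $\mathbb S$ (composed with the appropriate identification) on $\ov\Delta$ so the defining finite sums coincide termwise. Your explicit verification that $\ov{\mathbb S}=\mathbb S\circ\ov S$ on $\ov\Delta$ and the measurability bookkeeping for the projected $h_l,\tau_l$ are exactly the details the paper leaves implicit.
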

\begin{proof}
  Let $H=\sum_{l=1}^{L}  h_l \,1_{(\tau_l,\tau_{l+1}]}$ and $G$ be elements of $\mathcal{H}$. 
  Define $\ov H:=\sum_{l=1}^{L}  \ov{h}_l \,1_{(\ov{\tau}_l,\ov{\tau}_{l+1}]}$,
  where
  \[ \ov{h}_l(\ov\omega):=h_l(\omega)\,1_{\ov\Delta}(\ov\omega)
  \quad\text{and}\quad
  \ov{\tau}_l(\ov\omega):=\tau_l(\omega)\,1_{\ov\Delta}(\ov\omega)
  +T\,1_{\ov{\Delta}^c}(\ov\omega)\]
  for all $l$ and $\ov\omega\in\ov\Omega$. Define $\ov G$ analogously.
  By Lemma \ref{lem:transfer.adapted} one has $\ov{H},\ov G\in\ov{\mathcal{H}}$.
  Then, since $\ov{\mathbb{S}}=\mathbb{S}\circ\ov S$ on $\ov\Delta$, 
  \[ (\ov H\cdot \ov S)+ (\ov G\cdot \ov{\mathbb{S}})
  = ( H\cdot S)\circ \ov S + (G\cdot \mathbb{S})\circ \ov S
  \quad\text{on }\ov\Delta.\]
  
  To prove the opposite inequality, let 
  $\ov H=\sum_{l=1}^{L}  \ov{h}_l \,1_{(\ov{\tau}_l,\ov{\tau}_{l+1}]}$ and $\ov G$ be elements of $\ov{\mathcal{H}}$.
  Define $H:=\sum_{l=1}^{L}  (\ov{h}_l\circ\psi) \, 1_{(\ov\tau_l\circ\psi,\ov\tau_{l+1}\circ\psi]}$
  and $G$ analogously which are both elements of $\mathcal{H}$ by Lemma \ref{lem:transfer.adapted}.
  Since  
  \[ \mathbb{S}=\ov{\mathbb{S}}\circ\psi,
  \quad \psi(\Omega) = \ov\Delta, 
  \quad\text{and}\quad \ov S \circ \psi 
  =\mathrm{\mathop{id}}_\Omega \]
  it holds
  \begin{align*}
    &(H\cdot S) + (G \cdot \mathbb{S})
    = ((\ov H\circ \psi) \cdot S + ((\ov G \circ \psi) \cdot \mathbb{S})\\
    &= ((\ov H \circ \psi ) \cdot (\ov S \circ\psi))   + ((\ov G \circ \psi) \cdot (\ov{\mathbb{S}} \circ\psi))
    = (\ov H \cdot \ov S) \circ\psi + (\ov G \cdot \ov{\mathbb{S}})\circ\psi
    \quad\text{on } \Omega
  \end{align*}
  which proves the claim.
\end{proof}

\begin{lemma}\label{lem:M-M-large-trans}
Let $\Xi\subseteq \Omega$ be of the form $\Xi=\{\omega \in \Omega\colon (\omega,\langle \omega \rangle) \in \overline \Xi\}$ for some $\ov\Xi \subseteq C[0,T]\times C[0,T]$ satisfying condition (A2) of Assumption~\ref{ass:A}. Then
\label{lem:transfer.measures}
  \[\cM(\Xi)=\big\{ \ov \Q \circ \ov S^{-1}: \ \ov \Q \in \ov \cM(\ov\Xi)\big\}
  \quad\text{and}\quad
  \ov \cM(\ov \Xi)=\big\{ \Q \circ \psi^{-1}: \  \Q \in \cM(\Xi)\big\}.
  \]
  Moreover, let additionally $\ov Z\colon \Omega \to [1,\infty]$ be a function and $Z:=\ov Z \circ \psi$.
  Then 
   \[\cM_{Z}(\Xi)=\big\{ \ov \Q \circ \ov S^{-1}: \ \ov \Q \in \ov \cM_{\ov Z}(\ov\Xi)\big\}
  \quad\text{and}\quad
  \ov \cM_{\ov Z}(\ov \Xi)=\big\{ \Q \circ \psi^{-1}: \  \Q \in \cM_Z(\Xi)\big\}.
  \]
  same holds true for $\cM(\Xi)$ and $\ov \cM(\ov\Xi)$ replaced by 
  $\cM_Z(\Xi)$ and $\ov \cM_{\ov Z}(\ov\Xi)$, respectively.
\end{lemma}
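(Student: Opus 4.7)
The plan is to establish a bijective correspondence between $\cM(\Xi)$ and $\ov\cM(\ov\Xi)$ via the natural maps $\Q\mapsto\Q\circ\psi^{-1}$ and $\ov\Q\mapsto\ov\Q\circ\ov S^{-1}$, and then extend the argument to the $Z$-integrable versions. The linchpin is Remark~\ref{rem:ass-A2-enlarged}: condition~(A2) guarantees $\ov\Q(\ov\Delta)=1$ for every $\ov\Q\in\ov\cM(\ov\Xi)$, so up to a $\ov\Q$-nullset the spaces $\Omega$ and $\ov\Omega$ are related by a Borel bijection implemented by $\psi$ and $\ov S$.

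First I would verify that for $\Q\in\cM(\Xi)$ the pushforward $\ov\Q:=\Q\circ\psi^{-1}$ lies in $\ov\cM(\ov\Xi)$. Concentration is immediate since $\ov\Q(\ov\Xi)=\Q(\psi^{-1}(\ov\Xi))=\Q(\Xi)=1$. For the local martingale property, I would use the identities $\ov S\circ\psi=\mathrm{id}_\Omega$ and $\ov V\circ\psi=\langle\cdot\rangle$, so that the joint law of $(\ov S,\ov V)$ under $\ov\Q$ coincides with that of $(S,\langle S\rangle)$ under $\Q$. Combined with Lemma~\ref{lem:transfer.adapted}, a routine change-of-variables with a localizing sequence transfers the $\Q$-$\F$-local martingale property of $S$ to the $\ov\Q$-$\ov\F$-local martingale property of $\ov S$; the analogous computation using $\ov\S\circ\psi=\mathbb{S}$ on $\Omega$ (cf.\ Remark~\ref{rem:Omega}) handles $\ov\S$.

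For the converse, given $\ov\Q\in\ov\cM(\ov\Xi)$ I would set $\Q:=\ov\Q\circ\ov S^{-1}$. Remark~\ref{rem:ass-A2-enlarged} yields $\ov\Q(\ov\Delta)=1$, hence $\Q(\Omega)=1$, and since on $\ov\Delta$ we have $(\omega,\langle\omega\rangle)=(\ov S(\ov\omega),\ov V(\ov\omega))=\ov\omega$,
\[\Q(\Xi)=\ov\Q\big(\{\ov\omega\in\ov\Delta:(\ov S(\ov\omega),\langle \ov S(\ov\omega)\rangle)\in\ov\Xi\}\big)=\ov\Q(\ov\Delta\cap\ov\Xi)=1.\]
The local martingale property of $S$ under $\Q$ in $\F$ transfers from that of $\ov S$ under $\ov\Q$ by first projecting onto the subfiltration generated by $\ov S$ alone (the classical localizing sequence $\tau_n=\inf\{t:|\ov S_t-\ov S_0|\ge n\}$ is adapted to either filtration, so no obstruction arises) and then pushing forward by $\ov S$. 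That the two assignments are mutually inverse is immediate from $\ov S\circ\psi=\mathrm{id}_\Omega$ and from $\psi\circ\ov S=\mathrm{id}_{\ov\Omega}$ on $\ov\Delta$, which carries full $\ov\Q$-mass. The extension to $\cM_Z$ and $\ov\cM_{\ov Z}$ follows from the one-line change-of-variables identity $\E_\Q[Z]=\E_{\ov\Q}[\ov Z]$ whenever $\ov\Q=\Q\circ\psi^{-1}$, valid since $Z=\ov Z\circ\psi$ by definition.

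The main obstacle I expect is the filtration bookkeeping: $\ov\F$ on the enlarged space is generated by both $\ov S$ and $\ov V$ and is a priori strictly larger than $\sigma(\ov S)$, whereas on $\Omega$ only $\F=\sigma(S)$ is available, and everything must be phrased with the raw (uncompleted) filtrations to stay within the pathwise framework. Condition~(A2) rescues the situation because $\ov V=\langle\ov S\rangle$ holds $\ov\Q$-a.s., but strictly speaking this only equates the completed filtrations, so the transfer of martingality has to proceed by matching laws rather than by a direct filtration inclusion.
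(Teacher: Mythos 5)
Your proposal is correct and follows essentially the same route as the paper's proof: push forward by $\psi$ and $\ov S$, use Remark~\ref{rem:ass-A2-enlarged} (condition (A2)) to get $\ov\Q(\ov\Delta)=1$ and $\ov V=\langle\ov S\rangle^{\ov\Q}$, transfer the local martingale property with hitting-time localizing sequences via the measurability correspondences of Lemma~\ref{lem:transfer.adapted}, and conclude the $Z$-statements by the change-of-variables identity $\E_\Q[Z]=\E_{\ov\Q}[\ov Z]$. The only cosmetic difference is that the paper phrases the reverse direction through the $\ov\Delta$-augmented filtration ($\ov\F^{\ov\Delta}_s\subseteq\ov\F^{\ov\Q}_s$) rather than your "matching laws up to $\ov\Q$-nullsets" formulation, which amounts to the same bookkeeping you correctly flag at the end.
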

\begin{proof}
  Let first $\mathbb{Q} \in \mathcal{M}(\Xi)$ and define $\ov \Q := \Q \circ \psi^{-1}$.
  As $\Xi=\psi^{-1}(\ov\Xi)$,  we have that $\ov\Q(\ov\Xi)=\Q(\Xi)=1$.
  Moreover, to see that $\ov S$ is a $\ov \Q$-$\ov \F$-local martingale, define for each $m \in \N$ 
  \begin{equation}\label{eq:loc-seq-orig}
  \tau_m:= \inf\{t\geq 0 \colon |S_t|\geq m\}.
  \end{equation}
  Since $S$ has continuous paths, 
  $(\tau_m)_{m \in \N}$ is a $\Q$-$\F$-localizing sequence for $S$, i.e.\ a sequence of $\F$-stopping times with $\lim_{m\to \infty} \tau_m=\infty \  \Q$-a.s.\ such that $S^{\tau_m}$ is a $\Q$-$\F$-martingale for each $m \in \N$.  Define for each $m$
 \begin{equation}\label{eq:loc-seq-enlar}
 \ov \tau_m:= \inf\{t\geq 0 \colon |\ov S_t|\geq m\}.
 \end{equation}
  Then  we get that $(\ov \tau_m)_{m \in \N}$ is a $\ov\Q$-$\ov \F$-localizing for $\ov S$, since for every $m \in \N$, $0\leq s\leq t\leq T$ and $\ov A\in \ov{\mathcal{F}}_s$
  \[ \E_{\ov\Q}[ ({\ov S}^{\ov{\tau}_m}_t-{\ov S}^{\ov{\tau}_m}_s)1_{\ov A}]
  =\E_{\Q}[ (S^{{\tau}_m}_t-S^{{\tau}_m}_s)1_{\psi^{-1}(\ov A)}]
  =0 \]
  by Lemma \ref{lem:transfer.adapted} and the martingale property of $S^{\tau_m}$ under $\Q$.
  Similarly, since $\psi(S)=(S,\langle S\rangle^{\mathbb{Q}})$ $\mathbb{Q}$-almost surely,
  it follows by the same argument that $\ov{\mathbb{S}}$ is a $\ov \Q$-$\ov \F$-local martingale.
  Therefore, we get that indeed $\ov\Q\in \ov{\mathcal{M}}(\ov\Xi)$.  Furthermore, if in fact $\Q \in \cM_Z(\Xi)$, then 
  \begin{equation*}
  \E_{\ov\Q}[\ov Z]= \E_{\Q}[\ov Z \circ \psi] = \E_{\Q}[Z]<\infty
  \end{equation*}
  ensures that  also $\ov \Q \in \ov{\cM}_{\ov Z}(\ov\Xi)$.

  For the other direction, let  $\ov  \Q \in \ov{\mathcal{M}}(\ov \Xi)$.
  Recall that due to condition (A2) we have $\ov \Q$-a.s.~that $\ov V$ is nondecreasing with $\ov V_0=0$. Therefore since both  $\ov S$ and $2\ov{\mathbb{S}}= \ov S^2 -\ov{S}^2_0 -\ov V$ are $\ov \Q$-$\ov \F$-local martingales, one concludes that $\ov V =\langle \ov S \rangle^{\ov \Q} \ \ov \Q$-a.s., and thus
  $\psi(\ov S)=(\ov S,\langle \ov S \rangle^{\ov \Q})$ $\ov \Q$-a.s..
  In particular $\ov\Q(\ov S\in \Omega)=1$, so that
  $\Q:= \ov \Q \circ \ov S^{-1}$ 
  defines a Borel probability measure on $\Omega$.
  Moreover, we know from Remark~\ref{rem:ass-A2-enlarged} that $\ov \Q(\ov \Delta)=1$ and as $\ov S^{-1}(\Xi)\cap\ov\Delta=\ov \Xi\cap\ov\Delta$, one has $\Q(\Xi)=\ov\Q(\ov\Xi)=1$.
  Finally, using \eqref{eq:loc-seq-orig} and \eqref{eq:loc-seq-enlar}, we deduce for all $m \in \N$, $0\leq s\leq t\leq T$, and $A\in\mathcal{F}_s$
   from  Lemma \ref{lem:transfer.adapted} and $\ov \cF^{\ov \Delta}_s\subseteq \ov \cF^{\ov\Q}_s$ that
  \[ \E_{\Q}[ (S^{\tau_m}_t-S^{\tau_m}_s)1_A]
  =\E_{\ov\Q}[ (\ov S^{\ov{\tau}_m}_t-\ov S^{\ov{\tau}_m}_s)1_{\{\ov\omega\in\ov\Omega \,:\, \omega\in A\}}]
  =0, \]
  which shows that $S$ is a $\Q$-$\F$-local martingale. Thus $\Q\in\mathcal{M}(\Xi)$. 
  Furthermore, if in fact $\ov \Q \in \ov{\cM}_{\ov Z}(\ov \Xi)$, then since $\psi \circ \ov S= \mathrm{\mathop{id}}_{\ov\Omega}$ on $\ov \Delta$ and since $\ov \Q(\ov \Delta)=1$ due to condition (A2), we have that
  \begin{equation*}
  \E_{\Q}[Z]
  = \E_{\Q}[\ov Z \circ \psi] 
  = \E_{\ov \Q}[\ov Z \circ \psi \circ \ov S ]
  = \E_{\ov \Q}[\ov Z]
  <\infty
  \end{equation*}
  which implies that then $\Q  \in \cM_{Z}(\Xi)$.
\end{proof}

\subsection{Proof of Theorem \ref{thm:main.stopped}}\label{subsec:proof-main-stopped}
In this subsection we provide the proof of Theorem~\ref{thm:main.stopped}. To that end, we fix a prediction set $\Xi\subseteq \Omega$ which satisfies Assumption~\ref{ass:A}.

We start by providing two transition lemmas between the original space $\Omega$ and the enlarged space  $\ov \Omega$ which are in fact direct consequences of the crucial Lemmas \ref{lem:transfer.integrals} and \ref{lem:M-M-large-trans}.  Denote by $C_b(\ov\Omega)$ and $U_b(\ov\Omega)$ the spaces of bounded continuous
and bounded upper semicontinuous functions $\ov \xi\colon \ov\Omega\to\R$, respectively. 
We start with the primal problem and define for every function $\ov\xi\colon \ov \Omega \to (-\infty, \infty]$ the functional
\begin{equation}\label{eq:def:Phi-enlarged}
  \ov \Phi(\ov \xi):=\inf\left\{ \lambda\in\mathbb{R} : 
\begin{array}{l}
\text{there are sequences } (\ov H^n),(\ov G^n)\in\ov{\mathcal{H}} \text{ such that}\\
\lambda +(\ov H^n\cdot \ov S)_t + (\ov G^n\cdot \ov{\mathbb{S}})_t\geq 0 \text{ on } \ov \Delta \cap \ov \Xi \text{ for all } n, t, \\
\lambda+\liminf_n( (\ov H^n\cdot \ov S)_T + (\ov G^n\cdot \ov{\mathbb{S}})_T)\geq \ov \xi \mbox{ on } \ov\Delta\cap\ov\Xi
\end{array}\!\right\}
\end{equation}
We have the following result.
\begin{lemma}\label{lem:primal-transition}
Let $\Phi$ be the functional defined in Theorem~\ref{thm:main.stopped} and $\ov \Phi$ be the functional defined in \eqref{eq:def:Phi-enlarged}. Then for every $\xi\colon C[0,T] \to (-\infty,\infty]$ we have that $\Phi(\xi)= \ov \Phi (\xi \circ \ov S)$.
\end{lemma}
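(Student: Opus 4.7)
The plan is to prove the two inequalities $\Phi(\xi)\le \ov\Phi(\xi\circ \ov S)$ and $\Phi(\xi)\ge \ov\Phi(\xi\circ \ov S)$ separately, and for each direction transfer strategies between $\Omega$ and $\ov\Omega$ using Lemma~\ref{lem:transfer.integrals}. The essential observation is that $\psi$ and $\ov S$ are mutually inverse on the respective sets of interest: $\ov S\circ\psi=\mathrm{id}_\Omega$ on $\Omega$, and $\psi\circ \ov S=\mathrm{id}_{\ov\Omega}$ on $\ov\Delta$. Under Assumption~(A2), one moreover has the set-theoretic identification $\ov\Delta\cap\ov\Xi=\psi(\Xi)$ (and conversely $\Xi=\psi^{-1}(\ov\Delta\cap\ov\Xi)$), since $\ov\omega=(\omega,\nu)\in\ov\Delta\cap\ov\Xi$ forces $\nu=\langle\omega\rangle$ with $(\omega,\langle\omega\rangle)\in\ov\Xi$, i.e.\ $\omega\in\Xi$.

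For $\ov\Phi(\xi\circ\ov S)\le\Phi(\xi)$, I would pick any $\lambda\in\R$ admissible for $\Phi(\xi)$, witnessed by sequences $(H^n),(G^n)\in\mathcal{H}$, and apply the first part of Lemma~\ref{lem:transfer.integrals} to produce $(\ov H^n),(\ov G^n)\in\ov{\mathcal{H}}$ satisfying
\[
(\ov H^n\cdot \ov S)+(\ov G^n\cdot \ov{\mathbb{S}})=(H^n\cdot S)\circ \ov S+(G^n\cdot \mathbb{S})\circ \ov S\quad\text{on } \ov\Delta
\]
as a process identity on $[0,T]$. Evaluating at any $\ov\omega=(\omega,\langle\omega\rangle)\in \ov\Delta\cap\ov\Xi$, the right-hand side equals $(H^n\cdot S)_t(\omega)+(G^n\cdot \mathbb{S})_t(\omega)$ with $\omega\in\Xi$, so the admissibility bound and the $\liminf$-superhedging inequality defining $\Phi$ translate verbatim into the corresponding bounds on $\ov\Delta\cap\ov\Xi$, with terminal target $\xi(\omega)=(\xi\circ\ov S)(\ov\omega)$. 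Hence $\lambda$ is admissible for $\ov\Phi(\xi\circ\ov S)$, proving the inequality.

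For the opposite direction, fix $\lambda$ admissible for $\ov\Phi(\xi\circ\ov S)$, witnessed by $(\ov H^n),(\ov G^n)\in\ov{\mathcal{H}}$. The second part of Lemma~\ref{lem:transfer.integrals} yields $(H^n),(G^n)\in\mathcal{H}$ such that
\[
(H^n\cdot S)+(G^n\cdot \mathbb{S})=(\ov H^n\cdot \ov S)\circ\psi+(\ov G^n\cdot \ov{\mathbb{S}})\circ\psi\quad\text{on }\Omega.
\]
For $\omega\in\Xi$ one has $\psi(\omega)\in \ov\Delta\cap\ov\Xi$, so the admissibility condition $\lambda+(\ov H^n\cdot\ov S)_t+(\ov G^n\cdot\ov{\mathbb{S}})_t\ge 0$ and the superhedging condition $\lambda+\liminf_n((\ov H^n\cdot\ov S)_T+(\ov G^n\cdot\ov{\mathbb{S}})_T)\ge \xi\circ\ov S$ at the point $\psi(\omega)$ transfer to the analogous bounds for $(H^n),(G^n)$ on $\Xi$, where the terminal claim becomes $(\xi\circ \ov S)\circ\psi(\omega)=\xi(\omega)$. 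This shows $\lambda$ is admissible for $\Phi(\xi)$ and completes the proof.

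There is no genuine obstacle here; the argument is a bookkeeping exercise once Lemma~\ref{lem:transfer.integrals} is in hand. The only thing to keep straight is that the process identities in Lemma~\ref{lem:transfer.integrals} hold at every time $t\in[0,T]$ (so the pointwise-in-$t$ admissibility conditions transfer cleanly), and that the identification $\ov\Delta\cap\ov\Xi=\psi(\Xi)$, which relies on condition~(A2) of Assumption~\ref{ass:A}, is what allows the claim $\xi$ on $\Omega$ and its lift $\xi\circ\ov S$ on $\ov\Omega$ to be interchanged on the sets where admissibility is imposed.
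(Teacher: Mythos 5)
Your argument is correct and is essentially the paper's own proof: both directions transfer the witnessing sequences via the two halves of Lemma~\ref{lem:transfer.integrals} and use the identifications $\ov S(\ov\Delta\cap\ov\Xi)=\Xi$ and $\psi(\Xi)=\ov\Delta\cap\ov\Xi$ to move the admissibility and $\liminf$-superhedging conditions back and forth. One cosmetic remark: the identity $\ov\Delta\cap\ov\Xi=\psi(\Xi)$ needs only that $\Xi$ is the pullback $\{\omega\in\Omega:(\omega,\langle\omega\rangle)\in\ov\Xi\}$, not condition (A2) specifically, but this does not affect the argument.
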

\begin{proof}
To see the first inequality $\Phi(\xi)\geq \ov \Phi(\xi \circ  \ov S)$, let $\lambda>\Phi(\xi)$. 
By definition, there exist sequences  $(H^n), (G^n)$ in $\mathcal{H}$ such that 
\begin{align*}
&\lambda + (H^n \cdot S)_t + (G^n \cdot \mathbb S)_t \geq 0 \text{ on } \Xi \text{ for all } n,t,\\
&\text{and }\lambda + \liminf_{n \to \infty}\big((H^n\cdot S)_T + (G^n \cdot \mathbb{S})_T\big)\geq \xi \text{ on } \Xi.
\end{align*}
By Lemma~\ref{lem:transfer.integrals} we know that for each $n \in \N$ there are $\ov H^n, \ov G^n\in\ov{\mathcal{H}}$ such that
\[ (\ov H^n\cdot \ov S)+ (\ov G^n\cdot \ov{\mathbb{S}})
= ( H^n\cdot S)\circ \ov S + (G^n\cdot \mathbb{S})\circ \ov S
\quad\text{on }\ov\Delta.\]
Therefore, as $\ov S(\ov \Delta \cap \ov \Xi)=\Xi$, one has
\begin{align*}
&\lambda + (\ov H^n\cdot \ov S)_t+(\ov G^n\cdot \ov{\mathbb{S}})_t\geq 0
\text{ on } \ov\Delta\cap\ov\Xi \text{ for all }n,t\,\\
& \mbox{and }\lambda+ \liminf_{n \to \infty}\big( (\ov H^n\cdot \ov S)_T +(\ov G^n\cdot \ov{\mathbb{S}})_T \big)
\geq  \xi \circ \ov S   \text{ on } \ov\Delta\cap\ov\Xi.
\end{align*}
As $\lambda>\Phi(\xi)$ was arbitrary, the inequality $\Phi(\xi)\geq \ov \Phi(\xi\circ \ov S)$ follows.

To prove the opposite inequality, let $\lambda>\ov \Phi(\xi \circ \ov S)$. 
Then, by definition, there exist sequences $(\ov H^n)$, $(\ov G^n)$ in $\ov \cH$ such that 
\begin{align*}
&\lambda + (\ov H^n\cdot \ov S)_t+(\ov G^n\cdot \ov{\mathbb{S}})_t\geq 0
\text{ on } \ov\Delta\cap\ov\Xi \text{ for all }n,t\,\\
&\mbox{and }\lambda+ \liminf_{n \to \infty}\big( (\ov H^n\cdot \ov S)_T +(\ov G^n\cdot \ov{\mathbb{S}})_T \big)\geq \xi \circ \ov S 
\quad \text{ on } \ov\Delta\cap\ov\Xi.
\end{align*}
By Lemma~\ref{lem:transfer.integrals} we know that for each $n \in \N$ there are
$H^n,G^n\in \mathcal{H}$ such that
\[(H^n\cdot S) + (G^n \cdot \mathbb{S})
= (\ov H^n \cdot \ov S) \circ\psi + (\ov G^n \cdot \ov{\mathbb{S}})\circ\psi
\quad\text{on }\Omega.\]
 Therefore, as $\psi(\Xi)=\ov \Delta \cap \ov \Xi$, it follows that
\begin{align*}
&\lambda + (H^n \cdot S)_t + (G^n \cdot \mathbb{S})_t
\geq 0 \text{ on } \Xi \text{ for all }n,\,t\\
& \mbox{and }
\lambda + \liminf_{n \to \infty} \big((H^n \cdot S)_T + (G^n \cdot \mathbb{S})_T \big) 
\geq  \ \xi\circ \ov S \circ \psi
=   \xi
\quad\text{on }\Xi.
\end{align*}
As $\lambda>\ov\Phi(\xi\circ\ov S)$ was arbitrary, the inequality $\ov \Phi(\xi\circ \ov S)\geq \Phi(\xi)$ follows.
\end{proof}
For the dual problem, we have the following transition lemma
\begin{lemma}\label{lem:dual-transition}
Let $\Xi\subseteq \Omega$ be of the form $\Xi=\{\omega \in \Omega\colon (\omega,\langle \omega \rangle) \in \overline \Xi\}$ for some $\ov\Xi \subseteq C[0,T]\times C[0,T]$ satisfying condition (A2) of Assumption~\ref{ass:A}. Then for every Borel function $\xi\colon C[0,T] \to (-\infty,\infty]$ which is bounded from below we have that
\begin{equation*}
\sup_{\Q \in \cM(\Xi)} \E_{\Q}[\xi]=\sup_{\ov{\Q} \in \ov{\cM}(\ov\Xi)} \E_{\ov\Q}[\xi\circ \ov S].
\end{equation*}
\end{lemma}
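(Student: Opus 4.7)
The proof should be a direct consequence of the measure-transfer result in Lemma \ref{lem:M-M-large-trans} combined with the change of variables formula. Since $\xi$ is bounded from below, all integrals in question are well-defined in $(-\infty,\infty]$, so we do not need to worry about integrability subtleties beyond that.

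For the inequality $\sup_{\Q \in \cM(\Xi)} \E_\Q[\xi] \leq \sup_{\ov\Q \in \ov\cM(\ov\Xi)} \E_{\ov\Q}[\xi \circ \ov S]$, I would take an arbitrary $\Q \in \cM(\Xi)$ and set $\ov\Q := \Q \circ \psi^{-1}$. By Lemma \ref{lem:M-M-large-trans}, $\ov\Q \in \ov\cM(\ov\Xi)$. Since $\ov S \circ \psi = \mathrm{id}_\Omega$, the change of variables formula gives
\[
\E_{\ov\Q}[\xi \circ \ov S] = \E_\Q[\xi \circ \ov S \circ \psi] = \E_\Q[\xi],
\]
and taking the supremum over $\Q \in \cM(\Xi)$ yields the claimed inequality.

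For the reverse inequality $\sup_{\Q \in \cM(\Xi)} \E_\Q[\xi] \geq \sup_{\ov\Q \in \ov\cM(\ov\Xi)} \E_{\ov\Q}[\xi \circ \ov S]$, I would take an arbitrary $\ov\Q \in \ov\cM(\ov\Xi)$ and set $\Q := \ov\Q \circ \ov S^{-1}$. By Lemma \ref{lem:M-M-large-trans}, $\Q \in \cM(\Xi)$. Applying the change of variables formula in the other direction gives
\[
\E_\Q[\xi] = \E_{\ov\Q}[\xi \circ \ov S],
\]
and taking the supremum over $\ov\Q \in \ov\cM(\ov\Xi)$ concludes the proof.

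There is no substantial obstacle here since the heavy lifting has already been done in Lemma \ref{lem:M-M-large-trans}; the one minor point to keep in mind is that since $\xi$ is only assumed bounded from below (not both sides), one should explicitly note that $\xi \circ \ov S$ and $\xi$ are likewise bounded from below so that the pushforward integrals are well-defined in $(-\infty,\infty]$, but this is immediate. The identical argument works verbatim for $\cM_Z(\Xi)$ and $\ov\cM_{\ov Z}(\ov\Xi)$ using the second part of Lemma \ref{lem:M-M-large-trans}, which is relevant for the analogous statement needed in the proof of Theorem \ref{thm:main.with.Z}.
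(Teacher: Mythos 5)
Your proposal is correct and follows essentially the same route as the paper: both directions are obtained by pushing measures forward via $\psi$ respectively $\ov S$, invoking Lemma~\ref{lem:M-M-large-trans} to stay within $\cM(\Xi)$ and $\ov\cM(\ov\Xi)$, and applying the change of variables formula together with $\ov S\circ\psi=\mathrm{id}_\Omega$. No gap to report.
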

\begin{proof}
To see the first inequality 
$\sup_{\Q \in \cM(\Xi)} \E_{\Q}[\xi]\leq \sup_{\ov{\Q} \in \ov{\cM}(\ov\Xi)} \E_{\ov\Q}[\xi\circ \ov S]$, let $\Q \in \cM(\Xi)$. Then, by Lemma~\ref{lem:M-M-large-trans} we know that $\ov \Q:= \Q \circ \psi^{-1} \in \ov{\cM}(\ov \Xi)$.
Therefore, we see that
\begin{equation*}
\E_{\Q}[\xi]
= \E_{\Q}[\xi\circ\ov S \circ \psi]
=\E_{\ov \Q}[\xi\circ\ov S]
\leq \sup_{\ov{\Q} \in \ov{\cM}(\ov\Xi)} \E_{\ov\Q}[\xi\circ \ov S].
\end{equation*}
As $\Q \in \cM(\Xi)$ was arbitrary, we obtain indeed the first inequality.

For the reverse inequality $\sup_{\Q \in \cM(\Xi)} \E_{\Q}[\xi]\geq \sup_{\ov{\Q} \in \ov{\cM}(\ov\Xi)} \E_{\ov\Q}[\xi\circ \ov S]$,
let $\ov{\Q} \in \ov{\cM}(\ov\Xi)$. Then by Lemma~\ref{lem:M-M-large-trans} we know that $\Q:= \ov \Q\circ \ov S^{-1} \in \cM (\Xi)$.
Therefore,  we get that
\begin{equation*}
\E_{\ov \Q}[\xi \circ \ov S]= \E_{\Q}[\xi] \leq \sup_{\Q \in \cM(\Xi)} \E_{\Q}[\xi].
\end{equation*} 
Since $\ov{\Q} \in \ov{\cM}(\ov\Xi)$ was arbitrary, we also obtain the reverse inequality. 
\end{proof}
Having the transition lemmas in mind, it remains to prove our result in Theorem~\ref{thm:main.stopped} on the enlarged space.
\begin{proposition}
\label{prop:lifted.dual.with.stopped.paths}
Let $\Xi\subseteq \Omega$ be a prediction set satisfying Assumption~\ref{ass:A} and let $\ov\Phi$ be the functional defined in \eqref{eq:def:Phi-enlarged}.
  Then
  \begin{align*}
  \ov \Phi(\ov \xi)
  =\sup_{\ov{\mathbb{Q}} \in\ov{\mathcal{M}}(\ov\Xi)} \mathbb{E}_{\ov{\mathbb{Q} }}[\ov \xi]
  \end{align*}
  for every $\ov\xi\colon\ov\Omega\to[0,\infty]$ which can be written as 
  $\ov\xi=\liminf_n\ov\xi_n$ where $\ov\xi_n\colon\ov\Omega\to[0,\infty)$ are bounded
  upper-semicontinuous functions.
\end{proposition}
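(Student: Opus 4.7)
The weak inequality $\ov\Phi(\ov\xi) \geq \sup_{\ov{\mathbb{Q}} \in \ov{\mathcal{M}}(\ov\Xi)} \mathbb{E}_{\ov{\mathbb{Q}}}[\ov\xi]$ follows from Fatou's lemma exactly as in Remark~\ref{rem:weak.duality} transferred to the enlarged space, using that $\ov S$ and $\ov{\mathbb{S}}$ are $\ov{\mathcal{M}}(\ov\Xi)$-local martingales. The substance is the reverse inequality, which I plan to establish in two steps: the first topological, the second a $\liminf$-diagonalization.

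\emph{Step 1: one-shot duality on a compact set.} Using (A1), write $\ov\Xi = \bigcup_m \ov K_m$ with $\ov K_m \subseteq \ov\Xi$ compact and non-decreasing. On each fixed $\ov K = \ov K_m$, for bounded continuous $\ov\eta \in C_b(\ov K)$, I would prove
\[
\ov\Phi_0^{\ov K}(\ov\eta) := \inf\big\{\lambda\in\R : \exists\,\ov H, \ov G \in \ov{\mathcal{H}} \text{ with } \lambda + (\ov H \cdot \ov S)_T + (\ov G \cdot \ov{\mathbb{S}})_T \geq \ov\eta \text{ on } \ov K\big\} = \sup_{\ov\Q \in \ov{\mathcal{M}}(\ov K)} \E_{\ov\Q}[\ov\eta]
\]
via a Hahn--Banach/minimax argument in the vein of \cite{AcciaioBeiglbockPenknerSchachermayer.16,BartlKupperProemelTangpi.17}: separate the (closure of the) convex cone of superhedgeable payoffs from the strictly negative cone in $C_b(\ov K)$, Riesz-represent the separating positive linear functional on the compact $\ov K$ as a Borel probability measure, and verify — using condition (A2) and the presence of integrals against both $\ov S$ and $\ov{\mathbb{S}}$ in the primal — that the obtained measure lies in $\ov{\mathcal{M}}(\ov K)$. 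Approximation from above by continuous functions, combined with weak compactness of $\ov{\mathcal{M}}(\ov K)$ (Prokhorov plus closedness of the martingale condition), extends the identity to bounded u.s.c.\ $\ov\eta$ on $\ov K$.

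\emph{Step 2: $\liminf$-diagonalization.} For $\ov\xi = \liminf_n \ov\xi_n$ with $\ov\xi_n$ bounded u.s.c., put $\ov\zeta_N := \inf_{n \geq N} \ov\xi_n$; the pointwise infimum of countably many u.s.c.\ functions is again bounded u.s.c., and $\ov\zeta_N \uparrow \ov\xi$. Set $\lambda^* := \sup_{\ov\Q \in \ov{\mathcal{M}}(\ov\Xi)} \E_{\ov\Q}[\ov\xi]$ and fix $\varepsilon > 0$. Since $\ov{\mathcal{M}}(\ov K_m) \subseteq \ov{\mathcal{M}}(\ov\Xi)$ and $\ov\zeta_N \leq \ov\xi$, Step~1 applied to $\ov\zeta_N$ on $\ov K_m$ yields $(\ov H^{m,N}, \ov G^{m,N}) \in \ov{\mathcal{H}}$ with
\[
\lambda^* + \varepsilon + (\ov H^{m,N} \cdot \ov S)_T + (\ov G^{m,N} \cdot \ov{\mathbb{S}})_T \geq \ov\zeta_N \quad \text{on } \ov K_m.
\]
Admissibility at intermediate times on all of $\ov\Xi \cap \ov\Delta$ is arranged either by stopping the strategy the first time it would drop below zero (no effect on $\ov K_m$, where admissibility already holds) or by condition (A3) via Lemma~\ref{lem:integral.positive} as in Remark~\ref{rem:stopped-admissible}. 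Choosing any increasing $N(m) \to \infty$ and setting $(\ov H^m, \ov G^m) := (\ov H^{m,N(m)}, \ov G^{m,N(m)})$, every $\ov\omega \in \ov\Xi \cap \ov\Delta$ eventually lies in $\ov K_m$, so
\[
\lambda^* + \varepsilon + \liminf_m\big[(\ov H^m \cdot \ov S)_T(\ov\omega) + (\ov G^m \cdot \ov{\mathbb{S}})_T(\ov\omega)\big] \geq \liminf_m \ov\zeta_{N(m)}(\ov\omega) = \ov\xi(\ov\omega),
\]
by $\ov\zeta_N \uparrow \ov\xi$ and $N(m) \to \infty$. Thus $\ov\Phi(\ov\xi) \leq \lambda^* + \varepsilon$; sending $\varepsilon \downarrow 0$ concludes.

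\emph{The main obstacle} is Step~1: the Hahn--Banach separation must yield a genuine probability measure in $\ov{\mathcal{M}}(\ov K)$. This requires closedness of the superhedging cone in a suitable topology on $C_b(\ov K)$, identification of the separating positive functional with a Borel measure via Riesz, and verification that this measure integrates $\ov S$ and $\ov{\mathbb{S}}$ simultaneously as local martingales — here condition (A2) together with the richness of $\ov{\mathcal{H}}$ (integrals against both $\ov S$ and $\ov{\mathbb{S}}$) play the central role. Step~2 is then robust, exploiting the $\liminf$ in the definition of $\ov\Phi$ together with the countable compact decomposition from (A1).
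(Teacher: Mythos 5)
Your overall architecture coincides with the paper's: exhaust $\ov\Xi$ by compacts from (A1), prove a one-shot duality for bounded upper-semicontinuous claims on each compact piece, and then recover the $\liminf$-claim via the increasing u.s.c.\ approximations $\ov\zeta_N=\inf_{n\geq N}\ov\xi_n$ together with an admissibility repair. Your Step~2 is essentially the paper's Step~(c) and is correct, with one caveat: the repair of admissibility on all of $\ov\Delta\cap\ov\Xi$ via Lemma~\ref{lem:integral.positive} requires the compact exhaustion to be closed under stopping, which is exactly what Lemma~\ref{lem:stopped.prediction.set} (using (A3)) provides; for an arbitrary compact exhaustion your parenthetical claim that ``admissibility already holds on $\ov K_m$'' is false — the one-shot duality only controls the terminal value, and intermediate nonnegativity on $\ov K_m$ is deduced precisely from stopping-closedness of $\ov K_m$ plus $\ov\zeta_N\geq 0$ (the first step of Lemma~\ref{lem:integral.positive}, via $(\ov H\cdot\ov S)_t(\ov\omega)=(\ov H\cdot\ov S)_T(\ov\omega^t)$). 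So you must take the exhaustion of Lemma~\ref{lem:stopped.prediction.set} and rely on that lemma, not on the ``stop when it drops below zero'' shortcut.

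The genuine gap is Step~1, which you yourself flag as the main obstacle but do not resolve. Two things are missing. First, the Hahn--Banach/Kreps--Yan separation needs the superhedging cone in $C_b(\ov K)$ to be closed (or the superhedging functional to have a Daniell-type continuity property); this is not obvious for payoffs of simple strategies, and the paper obtains the required regularity differently, namely continuity from above of $\ov\Phi_n$ on $C_b(\ov\Omega)$ via Dini's lemma on the compact $\ov\Xi_n$, which feeds into the nonlinear Daniell--Stone theorem of \cite{BartlCheriditoKupper.17}. Second, and more substantively, the identification of the separating measure as an element of $\ov{\mathcal{M}}(\ov K)$ is nontrivial: knowing only that $\E_{\ov\Q}[\ov\gamma]\leq 0$ for \emph{continuous} $\ov\gamma$ dominated by $(\ov H\cdot\ov S)_T$ or $(\ov H\cdot\ov{\mathbb S})_T$, one must deduce the local martingale property, i.e.\ control payoffs of the form $\ov h\,(\ov X_{t\wedge\ov\sigma}-\ov X_s)$ with $\ov h$ merely $\ov\cF_s$-measurable and $\ov\sigma$ a hitting time, which are not continuous. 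The paper's Proposition~\ref{lem:polar}, resting on Lemma~\ref{lem:approx.Ft} (a.s.\ approximation of $\ov\cF_s$-measurable functions by continuous ones) and Lemma~\ref{lem:S.stopped.lsc} (lower semicontinuity of stopped values, allowing approximation of the integral payoff from below by continuous functions), is exactly this verification, and nothing in your sketch substitutes for it; invoking (A2) and the presence of both $d\ov S$- and $d\ov{\mathbb S}$-integrals does not by itself produce the martingale property. Until this is supplied, the compact-level duality — and hence the whole reverse inequality — is not established.
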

\begin{proof}
  The proof is divided into three main steps. First,
  also note that by Lemma~\ref{lem:stopped.prediction.set} there is an
  increasing sequence of nonempty compact sets 
  $\ov{\Xi}_n\subseteq C[0,T]\times C[0,T]$, $n \in \N$,   such that $\ov\Xi=\bigcup_n\ov\Xi_n$ 
  and for all $n$ we have that $\ov\omega \in \ov \Xi_n$ implies that $\ov{\omega}^t\in\ov{\Xi}_n$ for every $t$. As a consequence, as $\ov\Xi_n$ is nonempty and contains at least one constant path,  $\ov{\mathcal{M}}(\ov\Xi_n)$ is nonempty, too, as it
  contains at least one constant martingale measure.
   
  Step (a): Fix $n\in\mathbb{N}$. For any Borel function $\ov \xi \colon \ov\Omega\to(-\infty,\infty]$ which is bounded from below, define
  \begin{align*}
  \ov \Phi_n(\ov \xi)&:=\inf\left\{ \lambda\in\mathbb{R} : 
       \begin{array}{l}
       \text{there are $\ov H,\ov G\in\ov{\mathcal{H}}$ such that}\\
       \text{$\lambda+(\ov H\cdot \ov S)_T + (\ov G\cdot \ov{\mathbb{S}})_T\geq \ov \xi \mbox{ on } \ov\Delta\cap\ov\Xi_n$}
       \end{array} \right\}.
  \end{align*}
  Then, we claim that for all $\ov{\Q}\in\ov{\mathcal{M}}(\ov{\Xi}_n)$ 
  \begin{align}
  \label{eq:weakduality} 
  \ov \Phi_n(\ov \xi)&\geq \E_{\ov{\Q}}[\ov \xi]  
  \end{align}
  as well as for all $\ov{\Q}\in\ov{\mathcal{M}}(\ov{\Xi})$ that
  \begin{equation}\label{eq:weakduality-part2} 
  \ov \Phi(\ov \xi)\geq \E_{\ov{\Q}}[\ov \xi].
  \end{equation}
  In particular, the functional $\ov \Phi_n$ is real-valued on $U_b$.
  
  Indeed, let $\ov\Q \in \ov{\cM}(\ov \Xi_n)$ and
  let $\lambda >\ov \Phi_n(\ov \xi)$ so that there exist
  $\ov H,\ov G\in\ov{\mathcal{H}}$ such that
  $\lambda+(\ov H\cdot \ov S)_T + (\ov G\cdot \ov{\mathbb{S}})_T\geq \ov \xi$  on $\ov \Delta\cap\ov\Xi_n$. 
  Notice that as $\ov\Xi_n$ is compact, every local martingale with respect to $\ov\Q \in \ov{\cM}(\ov \Xi_n)$ is in fact a true martingale.
  Therefore, as by Remark~\ref{rem:ass-A2-enlarged} both $\ov S$ and $\ov{\mathbb{S}}$ are $\ov \Q$-$\ov \F^{\ov\Delta}_+$-martingales and $\ov H$, $\ov G$ are simple integrands, it follows that 
  $\E_{\ov \Q}[(\ov H \cdot \ov S)_T]=\E_{\ov \Q}[(\ov G \cdot \ov{\mathbb{S}})_T]=0$. Hence
  $\lambda \geq \E_{\ov{\Q}}[\ov \xi]$ which shows  \eqref{eq:weakduality}.
  To see \eqref{eq:weakduality-part2}, let $\ov\Q \in \ov{\cM}(\ov \Xi)$  and 
  $\lambda >\ov \Phi(\ov \xi)$ so that there exist sequences
  $(\ov H^k), (\ov G^k)\in\ov{\mathcal{H}}$ such that 
  $\lambda +(\ov H^k\cdot \ov S)_t + (\ov G^k\cdot \ov{\mathbb{S}})_t\geq 0 \text{ on } \ov \Delta \cap \ov \Xi \text{ for all } k, t,$ and 
  $\lambda+\liminf_k( (\ov H^k\cdot \ov S)_T + (\ov G^k\cdot \ov{\mathbb{S}})_T\geq \ov \xi \mbox{ on } \ov\Delta\cap\ov\Xi$. The admissibility condition ensures that for each $k$, the process $\ov M^k:=(\ov H^k\cdot \ov S) + (\ov G^k\cdot \ov{\mathbb{S}})$ is a $\ov\Q$-$\ov \F^{\ov\Delta}_+$-supermartingale starting in zero. 
  Therefore, applying Fatou's lemma yields 
  \begin{equation*}
  \lambda\geq \lambda + \liminf_{k \to \infty} \E_{\ov \Q}\big[\ov M^k_T\big]
  \geq  \E_{\ov \Q}\big[\lambda + \liminf_{k \to \infty}\ov M^k_T\big] 
  \geq \E_{\ov \Q}[\ov\xi],
  \end{equation*}
  which in turn implies \eqref{eq:weakduality-part2}.

  Step (b): Each $\ov \Phi_n$ is continuous from above on $C_b(\ov\Omega)$, that is,
  for every sequence $(\ov \xi_k)$ in $C_b(\ov\Omega)$ which decreases pointwise to $0$, one
  has $\ov\Phi_n(\ov \xi_k)\downarrow \ov \Phi_n(0)$.
  This follows from Dini's lemma as $\ov\Xi_n$ is compact.
  The non-linear Daniell-Stone theorem \cite[Theorem 2.2]{BartlCheriditioKupper.17} therefore implies
  \begin{align}
  \label{eq:rep.phi.n.on.enlarged.space}
  \ov \Phi_n(\ov \xi)
  =\sup_{\ov\Q\in\ov{\mathcal{M}}(\ov\Xi_n) } \ov\E_{\ov\Q}[\ov\xi]
  \quad\text{for all }\ov\xi\in U_b(\ov\Omega),
  \end{align}
  provided that we can show that for every finite Borel measure $\ov\Q$ on $\ov\Omega$ one has
  \begin{align}
  \begin{split}
  \label{eq:conjugate}
  \ov \Phi_n^\ast(\ov{\Q})
    :=\sup_{\ov \gamma\in C_b(\ov\Omega)} (\E_{\ov{\Q}}[\ov \gamma]-\ov \Phi_n(\ov \gamma))
    &=\sup_{\ov \gamma\in U_b(\ov\Omega)} (\E_{\ov{\Q}}[\ov \gamma]-\ov\Phi_n(\ov \gamma))
    =\begin{cases}
    0, &\text{if } \ov{\Q}\in\ov{\mathcal{M}}(\ov \Xi_n),\\
    \infty,&\text{else}.
    \end{cases}
  \end{split}
  \end{align}

  To that end, let first $\ov{\Q}\in\ov{\mathcal{M}}(\ov \Xi_n)$. Then   by \eqref{eq:weakduality} we have for every
  $\ov \gamma\in U_b(\ov\Omega)$  that
  $\E_{\ov{\Q}}[\ov \gamma]-\ov \Phi_n(\ov \gamma)\leq 0$. 
  In particular, 
  \begin{equation}\label{eq:Step-c-easy-dir}
  \ov \Phi_n^\ast(\ov{\Q})
  \leq \sup_{\ov \gamma\in U_b(\ov\Omega)} (\E_{\ov{\Q}}[\ov \gamma]-\ov\Phi_n(\ov \gamma))
  \leq 0.
  \end{equation}
   In addition, for every $m\in\mathbb{R}$ it holds that $\ov \Phi_n(m)\leq m$ by definition. Therefore, we conclude that
   \begin{equation}\label{eq:Step-c-easy-dir-Part2}
   \ov\Phi_n^\ast(\ov{\Q})\geq \E_{\ov \Q}[0]-\ov\Phi_n(0)\geq 0.
   \end{equation}
   This together with \eqref{eq:Step-c-easy-dir} show that indeed for every $\ov{\Q}\in\ov{\mathcal{M}}(\ov \Xi_n)$ we have that
   \begin{equation}\label{eq:Step-c-easy-dir-Part3}
   \ov \Phi_n^\ast(\ov{\Q})
  =\sup_{\ov \gamma\in U_b(\ov\Omega)} (\E_{\ov{\Q}}[\ov \gamma]-\ov\Phi_n(\ov \gamma))=0,
   \end{equation}
   which is the first equality to show in \eqref{eq:conjugate}. 
   Therefore, it remains to show that $\Phi_n^\ast(\ov{\Q})=+\infty$ whenever $\ov{\Q}\notin\ov{\mathcal{M}}(\ov\Xi_n)$.
   To that end, let $\ov{\Q}$ be a finite Borel measure which is not in $\ov{\mathcal{M}}(\ov\Xi_n)$.
   Notice that $\ov \Phi_n(m)\leq m$ for every $m \in \R$ implies 
  that
  $\ov\Phi_n^\ast(\ov{\Q}) \geq \sup_m (m\,\ov{\Q}(\ov\Omega)-m)=\infty$
  whenever $\ov \Q(\ov\Omega)\neq 1$.
  Further, since $\ov \Xi_n$ is compact, there are continuous functions $\ov \gamma_k\colon\ov\Omega\to[0,k]$, $k \in \N$,
  which increase pointwise to $+\infty \ov 1_{\ov\Xi_n^c}$.
  Note that in particular $\ov \gamma_k=0$ on $\ov\Xi_n$ and hence $\ov\Phi_n(\ov \gamma_k)\leq 0$ for all $k$ by definition, and therefore
  \begin{align}
  \label{eq:conjugate.Z}
  \ov\Phi_n^\ast(\ov{\Q})
  \geq \sup_k (\E_{\ov{\Q}}[\ov \gamma_k]-\ov\Phi_n(\ov \gamma_k))
  \geq +\infty \ov{\mathbb{Q}} (\ov\Xi_n^c).
  \end{align}
  In particular, we see that $\ov\Phi_n^\ast(\ov{\Q})=\infty$ whenever $\ov{\mathbb{Q}} (\ov\Xi_n^c)\neq0$.
  Hence by the arguments already provided above, we may assume w.l.o.g.~that $\ov{\Q}\in \fP(\ov \Omega)$ and $\ov\Q(\ov\Xi_n)=1$. Since $\ov{\Q}\in \fP(\ov \Omega)$ but $\ov \Q \notin\ov{\mathcal{M}}(\ov\Xi_n)$, we have by definition that
  either $\ov S$ or $\ov{\mathbb{S}}$ is not
  a $\ov{\Q}$-$\ov{\mathbb{F}}$-local martingale. 
  In either case, by Lemma \ref{lem:polar} there is a function $\ov \gamma\in C_b(\ov\Omega)$ and 
  $\ov H\in\ov{\mathcal{H}}$ such that
  $\E_{\ov{\Q}}[\ov \gamma]>0$ and $\ov \gamma\leq (\ov H\cdot \ov S)_T$ or $\ov \gamma\leq (\ov H\cdot \ov{\mathbb{S}})_T$.
  This implies that 
  $\ov \Phi_n(m\ov \gamma)\leq 0 $ for all $m>0$. Therefore, as
  \begin{equation*}
  \ov\Phi_n^\ast(\ov{\Q})
  \geq
  \sup_{m>0}\big(\E_{\ov \Q}[m\ov \gamma]-\ov\Phi_n(m\ov \gamma)\big)
  \geq
  \sup_{m>0}\E_{\ov \Q}[m\ov \gamma],
  \end{equation*}
  we conclude that
  $\ov\Phi_n^\ast(\ov{\Q})=\infty$, if $\ov{\Q}\notin\ov{\mathcal{M}}(\ov\Xi_n)$. This together with \eqref{eq:Step-c-easy-dir-Part3} give
  \eqref{eq:conjugate} and in turn \eqref{eq:rep.phi.n.on.enlarged.space}.

  Step (c): Let  $\ov \xi\colon\ov\Omega\to[0,\infty]$ be such that there exists a sequence of bounded upper-semicontinuous functions
  $\ov\xi_n\colon\ov\Omega\to[0,\infty)$ such that $\ov\xi=\liminf_{n}\ov\xi_n$. For each $n$, define $\ov\xi_n':=\inf_{m\geq n}\ov\xi_m$.
  Then $\ov\xi_n'$ is bounded upper-semicontinuous and $\ov\xi=\sup_n \ov\xi_n'$. We show that 
  \begin{equation}\label{eq:rep2}
  \ov\Phi(\ov \xi)=\sup_n \ov \Phi_n(\ov\xi_n')=\sup_{\ov{\Q}\in\ov{\mathcal{M}}(\ov \Xi)} \E_{\ov{\Q}}[\ov \xi].
  \end{equation}
  Indeed, by \eqref{eq:weakduality}, \eqref{eq:weakduality-part2}, and \eqref{eq:rep.phi.n.on.enlarged.space} it holds
  \begin{align}
    \ov\Phi(\ov \xi)
    &\geq \sup_{\ov{\Q}\in\ov{\mathcal{M}}(\ov\Xi)} \E_{\ov{\Q}}[\ov \xi]
    \geq \sup_n\sup_{\ov{\Q}\in\ov{\mathcal{M}}(\ov \Xi_n)} \E_{\ov{\Q}}[\ov \xi_n'] 
     =\sup_n \ov\Phi_n(\ov \xi_n'). \label{eq:Step-d-ineq1}
  \end{align}
  On the other hand, let $\lambda>\sup_n\ov\Phi_n(\ov \xi_n')$ and $\varepsilon>0$ arbitrary.
  Then for every $n$ there exist $\ov H^n, \ov G^n$ in $\ov{\mathcal{H}}$ with
  \[\lambda+(\ov H^n\cdot \ov S)_T+(\ov G^n\cdot \ov{\mathbb{S}})_T\geq \ov \xi_n' \quad \mbox{on } \ov \Delta\cap\ov\Xi_n.\]
  Since $\ov  \xi_n'\geq 0$, by Lemma \ref{lem:integral.positive}, 
  there are $\ov H^{n,'},\ov G^{n,'}\in \ov{\mathcal{H}}$ such that both
  \begin{align*}
  \lambda+\varepsilon+(\ov H^{n,'}\cdot \ov S)_T+(\ov G^{n,'}\cdot \ov{\mathbb{S}})_T&\geq \ov \xi_n' \quad \text{on } \ov \Delta\cap\ov\Xi_n,\\
  \lambda+\varepsilon+(\ov H^{n,'}\cdot \ov S)_t+(\ov G^{n,'} \cdot \ov{\mathbb{S}})_t&\geq \ov 0 \quad \text{on } \ov \Delta\cap \ov \Xi
  \text{ for every }t.
  \end{align*}
  Taking the $\liminf$ over $n$, this implies that $\ov\Phi(\ov \xi)\leq \lambda+\varepsilon$, and therefore
  $\ov\Phi(\ov\xi)\leq \sup_n\ov\Phi_n(\ov \xi_n')$.
  In particular, all inequalities in \eqref{eq:Step-d-ineq1} are equalities and the proof is complete.
\end{proof}

Now we are able to provide the proof of Theorem~\ref{thm:main.stopped}.

\begin{proof}[Proof of Theorem \ref{thm:main.stopped}]
  Fix $\xi\colon C[0,T]\to[0,\infty]$ which can be written as $\xi=\ov\xi\circ\psi:=\liminf_n \ov\xi_n\circ\psi$ where
  $\ov\xi_n\colon \ov\Omega\to[0,\infty)$, $n \in \N$, are bounded and upper-semicontinuous.
  By  the primal transition result  in Lemma~\ref{lem:primal-transition} we know that
  \begin{equation*}
  \Phi(\xi)= \ov{\Phi}(\xi \circ \ov S).
  \end{equation*}
  Hence, using that $\psi \circ \ov S = \mathrm{\mathop{id}}_{\ov\Omega}$ on $\ov\Delta$ and that by definition $\ov\Phi (\ov\xi_1) = \ov \Phi(\ov\xi_2)$ for any $\ov{\xi}_1, \ov{\xi}_2:\ov \Omega \to (-\infty,\infty]$  such that $\ov{\xi}_1 = \ov{\xi}_2$ on $\ov \Delta$ yields
  \begin{equation*}
  \ov{\Phi}(\xi \circ \ov S)
  =
  \ov{\Phi}(\ov\xi \circ \psi \circ \ov S)
  =
  \ov{\Phi}(\ov\xi).
  \end{equation*}
  Now, the duality result on the enlarged space provided in Proposition~\ref{prop:lifted.dual.with.stopped.paths} ensures that
  \begin{equation*}
 \ov{\Phi}(\ov\xi)=\sup_{\ov \Q \in \ov{\cM}(\ov \Xi)}\E_{\ov \Q}[\ov \xi].
  \end{equation*}
  Since  $\ov \Q(\ov \Delta)=1$ for all $\ov \Q \in \ov{\cM}(\ov \Xi)$ we have
  \begin{equation*}
  \sup_{\ov \Q \in \ov{\cM}(\ov \Xi)}\E_{\ov \Q}[\ov \xi]
  =
  \sup_{\ov \Q \in \ov{\cM}(\ov \Xi)}\E_{\ov \Q}[\ov \xi\circ \psi \circ \ov S]
  =
  \sup_{\ov \Q \in \ov{\cM}(\ov \Xi)}\E_{\ov \Q}[ \xi \circ \ov S].
  \end{equation*}
  Then the dual transition result in Lemma~\ref{lem:dual-transition} ensures that
  \begin{equation*}
\sup_{\ov \Q \in \ov{\cM}(\ov \Xi)}\E_{\ov \Q}[ \xi \circ \ov S]
=
\sup_{\Q \in {\cM}(\Xi)}\E_{\Q}[ \xi],
  \end{equation*}
  which shows the desired result.
\end{proof}

\subsection{Proof of Theorem \ref{thm:main.stopped.liminf.clsoure}}

In this subsection we provide the proof of Theorem~\ref{thm:main.stopped.liminf.clsoure}. To that end, let $\Xi\subseteq \Omega$ be a prediction set  which satisfies Assumption~\ref{ass:A} and recall for every $\lambda>0$ the set
\[\mathcal{G}^\Xi_\lambda:=\liminf\text{-closure of }\{\lambda + (H\cdot S)_T: H\in\mathcal{H}
\text{ and } \lambda+(H\cdot S)_t\geq 0 \text{ on } \Xi \text{ for all }t \}.\]
Moreover, let $\ov \Xi_n$, $n \in \N$, be  the sets introduced in Lemma~\ref{lem:stopped.prediction.set} and denote for each $n$ the set $\Xi_n:=\{\omega \in \Omega\colon(\omega,\langle \omega \rangle)\in \ov \Xi_n\}$.

\begin{lemma}
\label{lem:liminf.contains.fat.S}
Let $\Xi \subseteq \Omega$ and let 
$H,G\in\mathcal{H}$ such that $\lambda +(H\cdot S)_t+ (G\cdot \mathbb{S})_t\geq0$ on $\Xi$
  for all $t$.
  Then for every $\varepsilon>0$ there exist $Y\in\mathcal{G}^\Xi_{\lambda+\varepsilon}$ such that
  $Y\geq \lambda +(H\cdot S)_T+ (G\cdot \mathbb{S})_T$
  on $\Xi$.
\end{lemma}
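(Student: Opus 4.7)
The plan is to exploit the identity $(GS^m\cdot S)=(G\cdot \mathbb{S}^m)$ and the fact that $\mathbb{S}^m\to \mathbb{S}$ uniformly in time on $\Omega$ (by definition of $\Omega$), in order to approximate $(G\cdot \mathbb{S})_T$ by the terminal values of $dS$-integrals with respect to genuinely simple integrands. The naïve choice $\widetilde H^m:=H+GS^m\in\mathcal{H}$ gives pointwise convergence $(\widetilde H^m\cdot S)_T\to (H\cdot S)_T+(G\cdot \mathbb{S})_T$ on $\Omega$, but in general the approximation fails uniformly in $\omega$, so the admissibility condition $\lambda+\varepsilon+(\widetilde H^m\cdot S)_t\geq 0$ may not hold. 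The remedy is to truncate by a stopping time.

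More precisely, first I would verify that $GS^m$ is simple (on the common refinement of the partitions of $H$ and of the $(\sigma^m_k)$) and that $(GS^m\cdot S)=(G\cdot \mathbb{S}^m)$ by associativity of pathwise integration against a simple integrator; this uses that $\mathbb{S}^m$ itself equals $(S^m\cdot S)$ for $S^m:=\sum_k S_{\sigma^m_k}1_{(\sigma^m_k,\sigma^m_{k+1}]}$. Then on $\Omega$ we have $\mathbb{S}^m\to \mathbb{S}$ uniformly in $t$, hence $(G\cdot \mathbb{S}^m)\to (G\cdot \mathbb{S})$ uniformly in $t$ pointwise in $\omega\in\Omega$, which gives $(\widetilde H^m\cdot S)\to (H\cdot S)+(G\cdot \mathbb{S})$ uniformly in $t$ pointwise on $\Omega$.

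Next, define
\[
\tau_m:=\inf\{t\geq 0 : (\widetilde H^m\cdot S)_t< -\lambda-\varepsilon\}\wedge T.
\]
Since $H,GS^m\in\mathcal{H}$ produce continuous gain processes (as $S$ is continuous), $\tau_m$ is an $\F_+$-stopping time, and the stopped process $H^m:=\widetilde H^m 1_{(0,\tau_m]}$ is again in $\mathcal{H}$. By continuity and construction, $(H^m\cdot S)_t=(\widetilde H^m\cdot S)_{t\wedge \tau_m}\geq -\lambda-\varepsilon$ for every $t$ and every $\omega\in\Omega$, so in particular $\lambda+\varepsilon+(H^m\cdot S)_t\geq 0$ on $\Xi$ for all $t$. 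Hence $\lambda+\varepsilon+(H^m\cdot S)_T$ belongs to the generating set of $\mathcal{G}^{\Xi}_{\lambda+\varepsilon}$.

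It remains to identify the liminf. Fix $\omega\in\Xi$. By assumption $(H\cdot S)_t(\omega)+(G\cdot \mathbb{S})_t(\omega)\geq -\lambda$ for all $t$, and $(\widetilde H^m\cdot S)(\omega)\to (H\cdot S)(\omega)+(G\cdot \mathbb{S})(\omega)$ uniformly in $t$; therefore there exists $M(\omega)$ such that for $m\geq M(\omega)$ one has $(\widetilde H^m\cdot S)_t(\omega)\geq -\lambda-\varepsilon/2$ for every $t$, forcing $\tau_m(\omega)=T$ and thus $(H^m\cdot S)_T(\omega)=(\widetilde H^m\cdot S)_T(\omega)$. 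Consequently $\liminf_m(H^m\cdot S)_T(\omega)=(H\cdot S)_T(\omega)+(G\cdot \mathbb{S})_T(\omega)$ on $\Xi$, and setting $Y:=\liminf_m\bigl(\lambda+\varepsilon+(H^m\cdot S)_T\bigr)\in\mathcal{G}^{\Xi}_{\lambda+\varepsilon}$ yields $Y\geq \lambda+(H\cdot S)_T+(G\cdot \mathbb{S})_T$ on $\Xi$. The main subtle point is the combination of continuity of gain processes (needed so that $\tau_m$ is a legitimate $\F_+$-stopping time and so that the stopped value sits exactly at $-\lambda-\varepsilon$) with the pointwise-but-not-uniform nature of $\mathbb{S}^m\to \mathbb{S}$, which is exactly what forces the stopping-time truncation rather than a bare substitution.
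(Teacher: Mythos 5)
Your overall strategy (replace $(G\cdot\mathbb{S})$ by $(GS^m\cdot S)$, truncate by a stopping time to restore admissibility at level $\lambda+\varepsilon$, and argue that on $\Xi$ the truncation is eventually inactive so the $\liminf$ dominates the target) is exactly the paper's strategy. However, there is a genuine gap at the point where you assert that $GS^m$ is simple ``on the common refinement of the partitions of $H$ and of the $(\sigma^m_k)$'': the process $S^m=\sum_{k=0}^\infty S_{\sigma^m_k}1_{(\sigma^m_k,\sigma^m_{k+1}]}$ is an \emph{infinite} sum, and the number of indices $k$ with $\sigma^m_k(\omega)<T$ is finite for each path but unbounded over $\omega\in\Omega$. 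Since $\mathcal{H}$ consists by definition of sums with a fixed finite number $L$ of terms, $S^m\notin\mathcal{H}$, hence $\widetilde H^m=H+GS^m\notin\mathcal{H}$ and your stopped integrand $H^m=\widetilde H^m 1_{(0,\tau_m]}$ is not in $\mathcal{H}$ either. Consequently $\lambda+\varepsilon+(H^m\cdot S)_T$ does not belong to the generating set of $\mathcal{G}^\Xi_{\lambda+\varepsilon}$, and your final identification of $Y$ as an element of $\mathcal{G}^\Xi_{\lambda+\varepsilon}$ is not justified as written.

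The gap is repairable, and the repair is precisely the second half of the paper's proof: introduce the finite truncations $S^{m,l}:=\sum_{k=0}^{l}S_{\sigma^m_k}1_{(\sigma^m_k,\sigma^m_{k+1}]}$, the stopping times $\tau^{m,l}:=\inf\{t\geq 0:\lambda+\varepsilon+((H+GS^{m,l})\cdot S)_t\leq 0\}$ and the integrands $K^{m,l}:=(H+GS^{m,l})1_{[0,\tau^{m,l}]}$, which genuinely lie in $\mathcal{H}$ and satisfy the admissibility bound on all of $\Omega$, so that $Y^{m,l}_T:=\lambda+\varepsilon+(K^{m,l}\cdot S)_T$ belongs to the generating set. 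Since $\sigma^m_k(\omega)\to\infty$ for each fixed $\omega$, one has $S^{m,l}(\omega)=S^m(\omega)$ for all $l\geq l_0(m,\omega)$, whence $Y^m_T=\lim_l Y^{m,l}_T$, and then $Y_T=\liminf_m\bigl(\liminf_l Y^{m,l}_T\bigr)\in\mathcal{G}^\Xi_{\lambda+\varepsilon}$ by \emph{two} applications of the liminf-closedness. Your argument uses the liminf-closure only once (over $m$); the extra inner limit over $l$ is exactly what is needed to bridge the difference between pathwise-finite and uniformly-finite sums, and without it the proof does not go through.
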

\begin{proof}
  For every $m$ define 
  $S^m:=\sum_{k=0}^\infty S_{\sigma^m_k} 1_{(\sigma^m_{k},\sigma^m_{k+1}]}$ with $(\sigma^m_k)$ begin the stopping
  times from Subsection \ref{subsec:setup}. By definition of $\Omega$, it holds 
  $\mathbb{S}_t(\omega)=\lim_m (S^m\cdot S)_t(\omega)$ uniformly in $t$ for every $\omega\in\Omega$.
  As $G$ is simple, also $(GS^m\cdot S)_t(\omega)\to (G\cdot \mathbb{S})_t(\omega)$ uniformly in $t$.
  For the stopping times    
  $\tau^m:=\inf\{t\geq 0 : \lambda+\varepsilon+((H+GS^m)\cdot S)_t\leq 0\}$ and $K^m:=(H+G S^m)1_{[0,\tau^m]}$
  it holds $Y^m_t:=\lambda+\varepsilon+(K^m\cdot S)_t\geq 0$ for all $t$. Define $Y_T:=\liminf_m Y_T^m$. Then, 
  by uniform convergence of $(GS^m\cdot S)$ to $(G\cdot \mathbb{S})$ 
  and as $\lambda +(H\cdot S)_t+ (G\cdot \mathbb{S})_t\geq0$ on $\Xi$ we have that $\lim_{m}\tau^m=\infty$ 
  on $\Xi$. This ensures that
  $\lambda +(H\cdot S)_T+ (G\cdot \mathbb{S})_T \leq Y_T$ 
  on $\Xi$. Therefore, it remains to show that $Y_T \in \mathcal{G}^\Xi_{\lambda+\varepsilon}$.
  
  To that end,
  for every $m$ and $l$ define $S^{m,l}:=\sum_{k=0}^l S_{\sigma^m_k} 1_{(\sigma^m_{k},\sigma^m_{k+1}]}$
  and $\tau^{m,l}:=\inf\{t\geq 0 :\lambda+\varepsilon+((H+G S^{m,l})\cdot S)_t\leq 0\}$.
  Since $K^{m,l}:=(H+G S^{m,l})1_{[0,\tau^{m,l}]}\in\mathcal{H}$ and $Y_t^{m,l}:=\lambda+\varepsilon+(K^{m,l}\cdot S)_t\geq0$
  for all $t$, it follows that $Y^{m,l}_T\in\mathcal{G}^\Xi_{\lambda+\varepsilon}$ for all $m,l$.
  Further, as we are working on the continuous paths, for every $\omega\in C[0,T]$ and $m$ there is $l_0=l_0(m,\omega)$ 
  such that $S^{m,l}(\omega)=S^{m}(\omega)$ for $l\geq l_0$; hence $Y_T^m=\lim_l Y_T^{m,l}$.
  This implies $Y_T=\liminf_m (\liminf_l Y_T^{m,l})$ which, by liminf-closedness of $\mathcal{G}^\Xi_{\lambda+\varepsilon}$,
  implies that $Y_T\in\mathcal{G}^\Xi_{\lambda+\varepsilon}$ and completes the proof.
\end{proof}

\begin{proof}[Proof of Theorem \ref{thm:main.stopped.liminf.clsoure}]
  Let $n \in \N$. By Lemma \ref{lem:integral.positive} and the identity~\eqref{eq:rep.phi.n.on.enlarged.space} in Step~(b) of the proof of Proposition \ref{prop:lifted.dual.with.stopped.paths}, we obtain for every upper-semicontinuous bounded $\ov\xi\colon\ov\Omega\to[0,\infty)$ that
  \begin{align*}
  \inf\left\{ \lambda\in\mathbb{R} : 
      \begin{array}{l}
      \text{there are } \ov H,\ov G\in\ov{\mathcal{H}} \text{ such that}\\
      \lambda +(\ov H\cdot \ov S)_t+ (\ov G\cdot \ov{\mathbb{S}})_t\geq 0 \text{ on } \ov \Delta\cap \ov \Xi,\\
      \lambda+(\ov H\cdot \ov S)_T + (\ov G\cdot \ov{\mathbb{S}})_T\geq \ov \xi \text{ on } \ov\Delta\cap\ov\Xi_n
      \end{array} \right\}
      =\ov\Phi_n(\ov \xi)
      =\sup_{ \ov{\mathbb{Q}} \in \ov{\mathcal{M}} (\ov\Xi_n) } \mathbb{E}_{ \ov{\mathbb{Q}} }[\ov\xi].
  \end{align*}
  This, together with the same arguments used for the proofs of the transition results  in Lemma~\ref{lem:primal-transition} and Lemma~\ref{lem:dual-transition} (see also Lemma~\ref{lem:transfer.integrals} and Lemma~\ref{lem:transfer.measures}) ensure that for every bounded and upper semicontinuous function $\xi\colon C[0,T]\to[0,\infty)$ we have   
  \begin{equation}
  \label{eq:pf-clo1}
  \begin{split}
  \Phi_n(\xi)&:=\inf\left\{ \lambda\in\mathbb{R}: 
      \begin{array}{l}
      \text{there are  } H,G\in\mathcal{H} \text{ such that}\\
      \lambda +(H\cdot S)_t+ (G\cdot \mathbb{S})_t\geq 0 \text{ on } \Xi,\\
      \lambda+(H\cdot S)_T + (G\cdot \mathbb{S})_T\geq  \xi \text{ on } \Xi_n
      \end{array} \right\}
      =\sup_{\mathbb{Q}\in\mathcal{M}(\Xi_n)} \mathbb{E}_{\mathbb{Q}}[\xi].
      \end{split}
  \end{equation}
  Moreover,
  by Lemma \ref{lem:liminf.contains.fat.S} one has for every $\xi\colon C[0,T]\to[0,\infty]$ (regardless of measurability) that
   \begin{equation}
   \label{eq:pf-clo2}
  \Phi_n^{\mathrm{cl}}(\xi):=\inf\big\{\lambda \in\mathbb{R} : \xi\leq Y \text{ on } \Xi_n \text{ for some } Y\in\mathcal{G}^\Xi_\lambda \big\}
  \leq \Phi_n(\xi).
  \end{equation}
  Furthermore, by Remark \ref{rem:weak.duality} one has for every Borel $\xi\colon C[0,T]\to[0,\infty]$  and $\mathbb{Q}\in\mathcal{M}(\Xi_n)$ that
  \begin{equation}\label{eq:pf-clo3}
  \mathbb{E}_{\mathbb{Q}}[\xi]\leq\Phi_n^{\mathrm{cl}}(\xi).
  \end{equation}
  Therefore, we conclude from \eqref{eq:pf-clo1}--\eqref{eq:pf-clo3} that for every $n \in \N$ we have
  \begin{equation}
 \label{eq:pf-clo-duality-n-usc}
  \Phi_n^{\mathrm{cl}}(\xi)
  =\sup_{\mathbb{Q}\in\mathcal{M}(\Xi_n)} \mathbb{E}_\mathbb{Q}[\xi]
  \quad\text{for every non-negative } \xi \in U_b(C[0,T]).
  \end{equation}
  
  Our next goal is to extend the duality result obtained in \eqref{eq:pf-clo-duality-n-usc} also for bounded non-negative $\xi$ which are Borel. The idea is to apply Choquet's capacitability theorem (in the functional form). To that end, fix $n \in \N$ and introduce for every $\xi\colon C[0,T] \to \R$ the following two functionals
  \begin{equation}\label{eq:choquet-0}
  \Upsilon_1 (\xi):= \Phi_n^{\mathrm{cl}}(\xi^+),  \quad \quad \quad 
  \Upsilon_2 (\xi):= \sup_{\mathbb{Q}\in\mathcal{M}(\Xi_n)} \mathbb{E}_\mathbb{Q}[\xi^+], 
  \end{equation} 
  where $\xi^+:=\max\{\xi,0\}$ denotes the positive part. Notice that for every non-negative $\xi\in U_b(C[0,T])$ we have by \eqref{eq:pf-clo-duality-n-usc} that
  \begin{equation}\label{eq:choquet-1}
 \Upsilon_1 (\xi)= \Upsilon_2 (\xi)
  \end{equation} 
  and our goal is to show that the above identity can be extended to all $\xi\colon C[0,T] \to [0,\infty)$ which  are bounded and Borel measurable. To see this observe first that
  as for every $\xi\in U_b(C[0,T])$ one has $\xi\circ\ov S\in U_b(\ov\Omega)$,
  it follows from step (b) in the proof of Proposition \ref{prop:lifted.dual.with.stopped.paths} 
  and \cite[Theorem 2.2]{BartlCheriditioKupper.17} that
  $\xi\mapsto \Upsilon_1 (\xi)$ and 
  $\xi\mapsto \Upsilon_2 (\xi)$
  are continuous from above on $U_b(C[0,T])$.
  Second, let $\xi_m\colon C[0,T]\to\mathbb{R}$, $m \in \N$,
  be a sequence of bounded Borel functions  which increase
  pointwise to $\xi$. Then it follows from interchanging two suprema that 
  $\sup_{\mathbb{Q}\in\mathcal{M}(\Xi_n)} \mathbb{E}_\mathbb{Q}[\xi^+]
  =\sup_m\sup_{\mathbb{Q}\in\mathcal{M}(\Xi_n)} \mathbb{E}_\mathbb{Q}[\xi^+_m]$. This implies that the functional $\Upsilon_2$ is continuous from below on the set of bounded Borel measurable functions $\xi\colon C[0,T]\to \R$.
  Moreover, using the liminf-closedness of $\mathcal{G}^\Xi_\lambda$, one directly verifies that also 
  $\Phi_n^{\mathrm{cl}}(\xi^+)=\sup_m\Phi_n^{\mathrm{cl}}(\xi_m^+)$. Indeed, while $\Phi_n^{\mathrm{cl}}(\xi^+)\geq\sup_m\Phi_n^{\mathrm{cl}}(\xi_m^+)$ follows by monotonicity of the functional, we see for the reverse inequality that for every $\lambda>\sup_m\Phi_n^{\mathrm{cl}}(\xi_m^+)$ that there exists a sequence  $Y^m \in \cG^\Xi_\lambda$, $m \in \N$, such that for each $m$ we have $ Y^m \geq \xi_m^+$ on $\Xi_n$. This implies that $\liminf_{m\to \infty} Y^m \geq \xi^+$ on $\Xi_n$.
  The liminf-closedness of $\mathcal{G}^\Xi_\lambda$ ensures that $Y:=\liminf_m Y^m \in \mathcal{G}^\Xi_\lambda$ and hence $\Phi_n^{\mathrm{cl}}(\xi^+)<\lambda$, which in turn guarantees the reverse inequality.
  This shows  that also $\Upsilon_2$ is continuous from below on the set of bounded Borel measurable functions $\xi\colon C[0,T]\to \R$.
  
  To sum up, we have shown that both $\Upsilon_1$ and $\Upsilon_2$ are continuous from above on $U_b(C[0,T])$ and continuous from below on the set of bounded Borel measurable functions $\xi\colon C[0,T]\to \R$. In other words, both $\Upsilon_1$ and $\Upsilon_2$
  are functional capacities in the sense of Choquet over the set of Borel functions, hence by 
   Choquet's capacitability theorem 
  (in the functional form as e.g.~in \cite[Proposition 2.1]{BartlCheriditioKupper.17}), we obtain for every $\xi:C[0,T]\to \R$ which is bounded and Borel that
  \begin{equation}
  \Upsilon_i(\xi)=\sup\big\{\Upsilon_i(X)\colon U_b(C[0,T])\ni X\leq \xi\big\}, \quad i=1,2.
  \end{equation}
  This and \eqref{eq:choquet-1} together with \eqref{eq:choquet-0}  imply that  for every  $\xi:C[0,T]\to [0,\infty)$ which is bounded and Borel measurable we have as desired that
  \begin{equation}
  \label{eq:Choquet-Application}
  \begin{split}
  \Phi_n^{\mathrm{cl}}(\xi)=\Upsilon_1(\xi)
  &=
  \sup\big\{\Upsilon_1(X)\colon U_b(C[0,T])\ni X\leq \xi\big\}\\
  &=
  \sup\big\{\Upsilon_2(X)\colon U_b(C[0,T])\ni X\leq \xi\big\}
  =
   \Upsilon_2(\xi)
   = \sup_{\mathbb{Q}\in\mathcal{M}(\Xi_n)} \mathbb{E}_\mathbb{Q}[\xi].
   \end{split}
  \end{equation}

  In a final step, let $\xi\colon C[0,T]\to[0,+\infty]$ be Borel. Then by Remark \ref{rem:weak.duality} we have that
  \begin{equation}
  \label{eq:Choquet-final-1}
  \sup_{\mathbb{Q}\in\mathcal{M}(\Xi)} \mathbb{E}_\mathbb{Q}[\xi]\leq \Phi^{\mathrm{cl}}(\xi).
  \end{equation}
  On the other hand,  the same arguments used to show that $\Upsilon_1$ is continuous from below on the set of Borel functions show that the liminf-closedness of $\mathcal{G}^\Xi_{\lambda}$  implies that 
  \begin{equation}
  	\label{eq:Choquet-final-2}
  \Phi^{\mathrm{cl}}(\xi)\leq\sup_n\Phi_n^{\mathrm{cl}}(\xi\wedge n).
  \end{equation}
 Therefore, we conclude from \eqref{eq:Choquet-Application} together with \eqref{eq:Choquet-final-1} and \eqref{eq:Choquet-final-2} that
  \[\sup_{\mathbb{Q}\in\mathcal{M}(\Xi)} \mathbb{E}_\mathbb{Q}[\xi]
  \leq \Phi^{\mathrm{cl}}(\xi)
  \leq \sup_n\Phi_n^{\mathrm{cl}}(\xi\wedge n)
  = \sup_n \sup_{\mathbb{Q}\in\mathcal{M}(\Xi_n)} \mathbb{E}_\mathbb{Q}[\xi\wedge n]
  \leq \sup_{\mathbb{Q}\in\mathcal{M}(\Xi)} \mathbb{E}_\mathbb{Q}[\xi]\]
  so that all inequalities are in fact equalities and the result now follows.
\end{proof}

\subsection{Proof of Theorem \ref{thm:main.with.Z}}
\label{subsec:proof.theorem.Z}
In this subsection we provide the proof of Theorem~\ref{thm:main.with.Z}, which follows the idea of the proof of \cite[Theorem~2.3]{BartlKupperProemelTangpi.17}. To that end, we fix a prediction set $\Xi \subseteq \Omega$ which satisfies Assumption~\ref{ass:B} and let $\ov Z\colon \ov \Omega \to [1,\infty]$ be the corresponding function. Define as in Remark~\ref{rem:comaprison-Ass-A-B} 
the set $\overline{\Xi}\subseteq C[0,T]\times C[0,T]$  by
\begin{equation*}
\overline{\Xi}:= \big\{(\omega,\nu) \in C[0,T]\times C[0,T]\colon \overline{Z}(\omega,\nu)<\infty\big\}.
\end{equation*}
Moreover, define $Z\colon\Omega \to [1,\infty]$ by $Z=\ov Z\circ \psi$ and assume that $\cM_Z(\Xi)$ is nonempty.
%
%
\begin{proof}[Proof of Theorem \ref{thm:main.with.Z}]
	The proof is divided in the following steps.
	
	Step (a): Fix $n\in\mathbb{N}$. For any $\ov \xi \colon \ov\Omega\to(-\infty,\infty]$  we define
	\begin{align*}
	\ov \Phi^{\ov Z}_n(\ov \xi)&:=\inf\left\{ \lambda\in\mathbb{R} : 
	\begin{array}{l}
	\text{there is $c>0$ and $\ov H,\ov G\in\ov{\mathcal{H}}$ such that}\\
	\text{$\lambda +(\ov H\cdot \ov S)_T + (\ov G\cdot \ov{\mathbb{S}})_T\geq -c \mbox{ on } \ov \Delta\cap\ov\Xi$}\\
	\text{$\lambda+(\ov H\cdot \ov S)_T + (\ov G\cdot \ov{\mathbb{S}})_T\geq \ov \xi -\ov Z/n\mbox{ on } \ov\Delta\cap\ov\Xi$}
	\end{array} \right\}.
	\end{align*}
	Then, for any $\ov{\Q}\in\ov{\mathcal{M}}_{\ov Z}(\ov \Xi)$ and any Borel function $\ov \xi:\ov \Omega \to (-\infty,\infty]$ which is bounded from below the following hold
	\begin{align}
	\ov \Phi^{\ov Z}_n(\ov \xi)&\geq \E_{\ov{\Q}}[\ov \xi]-\E_{\ov{\Q}}[\ov Z/n], \label{eq:weakduality-Z} \\
	\ov \Phi^{\ov Z}(\ov \xi)&\geq \E_{\ov{\Q}}[\ov \xi]. \label{eq:weakduality2-Z}
	\end{align}
	In particular, as $\ov{\mathcal{M}}_{\ov Z}(\ov \Xi)$ is nonempty by Lemma~\ref{lem:M-M-large-trans}, the functional $\ov \Phi^{\ov Z}_n$ is real-valued on $U_b$.
	
	Indeed,
	let $\lambda >\ov \Phi^{\ov Z}_n(\ov \xi)$ so that there exist
	$\ov H,\ov G\in\ov{\mathcal{H}}$ and $c>0$ such that the inequalities 
  $\lambda+(\ov H\cdot \ov S)_T+(\ov G\cdot \ov{\mathbb{S}})_T\geq -c$
	and $\lambda+(\ov H\cdot \ov S)_T + (\ov G\cdot \ov{\mathbb{S}})_T\geq \ov \xi -\ov Z/n$ hold on $\ov \Delta\cap\ov\Xi$.
	Notice that for each $\ov \Q \in \ov{\cM}_{\ov Z}(\ov \Xi)$, both $\ov S$ and $\ov{\mathbb{S}}$ are true $\ov \Q$-$\ov \F$-martingales. Indeed, for any $\ov \Q \in \ov{\cM}_{\ov Z}(\ov \Xi)$ Assumption~\ref{ass:B} ensures that  
	\begin{equation*}
	\E_{\ov \Q}\big[\textstyle{\sup_{0\leq t \leq T}} |\ov S_t|\big]
	 +
	 \E_{\ov \Q}\big[\textstyle{\sup_{0\leq t \leq T} |\ov V_t|}\big]
	 \leq  \E_{\ov\Q}[\ov Z]<\infty.
	\end{equation*}
	Since $\ov V = \langle \ov S \rangle^{\ov \Q} \ \ov \Q$-a.s.\ the Burkholder-Davis-Gundy inequality ensures that $\ov S$ is a square integrable $\ov \Q$-$\ov \F$-martingale and hence as $\ov{\mathbb{S}}=(\ov S^2-\ov S_0^2-\ov V)/2$ we also obtain that $\ov{\mathbb{S}}$ is a true $\ov \Q$-$\ov \F$-martingale.
	Therefore, as by Remark~\ref{rem:ass-A2-enlarged} both $\ov S$ and $\ov{\mathbb{S}}$ are $\ov \Q$-$\ov \F^{\ov\Delta}_+$-martingales and $\ov H$, $\ov G$ are simple integrands, we see that $\E_{\ov \Q}\big[(\ov H \cdot \ov S)_T\big] =\E_{\ov \Q}\big[(\ov G \cdot \ov{\mathbb{S}})_T\big]=0$. Hence
	$\lambda \geq \E_{\ov{\Q}}[\ov \xi]-\E_{\ov{\Q}}[\ov Z/n]$, which shows \eqref{eq:weakduality-Z}. To prove \eqref{eq:weakduality2-Z}, we can use the same arguments together with Fatou's lemma.
	
	Step (b): Each $\ov \Phi^{\ov Z}_n$ is continuous from above on $C_b(\ov\Omega)$, that is,
	for every sequence $(\ov \xi_k)$ in $C_b(\ov\Omega)$ which decreases pointwise to $0$, one
	has $\ov\Phi^{\ov Z}_n(\ov \xi_k)\downarrow \ov \Phi^{\ov Z}_n(0)$.
	
	Indeed, to that end, fix such a sequence $(\ov \xi_k)$ and an arbitrary $\varepsilon>0$.
	Then there exist $\ov H,\ov G\in\ov{\mathcal{H}}$ with $\lambda+(\ov H\cdot \ov S)_T+ (\ov G\cdot \ov{\mathbb{S}})_T\geq -c$ on $\ov \Delta\cap\ov\Xi$ for some $c\geq 0$ such that
	\[ 
	\varepsilon + \ov \Phi^{\ov Z}_n(0)+ (\ov H\cdot \ov S)_T+ (\ov G\cdot \ov{\mathbb{S}})_T +\ov Z/n\geq 0 \quad \mbox{ on } \ov \Delta\cap\ov\Xi. 
	\]
	Now define $b:=\sup_{\ov\omega\in\ov\Omega} \ov \xi_1(\ov\omega)-\varepsilon - \ov\Phi^{\ov Z}_n(0)+c$, so that
	\[b+\varepsilon +\ov\Phi^{\ov Z}_n(0)+ (\ov H\cdot \ov S)_T+(\ov G\cdot \ov{\mathbb{S}})_T\geq \ov \xi_1\quad \mbox{ on } \ov \Delta\cap\ov\Xi.\]
	As $\{\ov Z\leq b n\}\subseteq \ov\Omega$ is compact by assumption, 
	Dini's lemma yields $\ov \xi_k 1_{\{\ov Z\leq b n\}}\leq\varepsilon$ for all $k$ large enough. Hence for $k$ big enough we have on $\ov \Delta\cap\ov\Xi$
	\begin{align*}
	\ov \xi_k &\leq \ov \xi_k 1_{\{\ov Z\leq b n\}}+\ov \xi_1 1_{\{\ov Z>b n\}} \\
	&\leq \varepsilon+\big(\varepsilon+\ov\Phi^{\ov Z}_n(0)+(\ov H\cdot \ov S)_T+ (\ov G\cdot \ov{\mathbb{S}})_T+ \ov Z/n\big)1_{\{\ov Z> b n\}}\\
	&\leq 2\varepsilon+\ov \Phi^{\ov Z}_n(0)+ (\ov H\cdot \ov S)_T+ (\ov G\cdot \ov{\mathbb{S}})_T +\ov Z/n.
	\end{align*}
	Therefore, $\ov \Phi^{\ov Z}_n(\ov \xi_k)\leq \ov \Phi^{\ov Z}_n(0)+2\varepsilon$ for $k$ large enough which shows that 
	$\ov\Phi^{\ov Z}_n(\ov \xi_k)\downarrow \ov\Phi^{\ov Z}_n(0)$. 
	
	Step (c): We proceed to show that for each $n$ and every finite Borel measure $\ov\Q$ on $\ov\Omega$ one has
	\begin{align}
	\begin{split}
	\label{eq:conjugate-Z}
	\ov \Phi_n^{\ov Z,\ast}(\ov{\Q})
	:=\sup_{\ov \gamma\in C_b(\ov\Omega)} (\E_{\ov{\Q}}[\ov \gamma]-\ov \Phi^{\ov Z}_n(\ov \gamma))
	&=\sup_{\ov \gamma\in U_b(\ov\Omega)} (\E_{\ov{\Q}}[\ov \gamma]-\ov\Phi^{\ov Z}_n(\ov \gamma))\\
	&=\begin{cases}
	\E_{\ov{\Q}}[\ov Z]/n, &\text{if } \ov{\Q}\in\ov{\mathcal{M}}_{\ov Z}(\ov \Xi),\\
	\infty,&\text{else}.
	\end{cases}
	\end{split}
	\end{align}
	
	Indeed, 
	if $\ov{\Q}\in\ov{\mathcal{M}}_{\ov Z}(\ov \Xi)$, then by \eqref{eq:weakduality-Z} we have for every $\ov \gamma\in U_b(\ov\Omega)$  that
	$\E_{\ov{\Q}}[\ov \gamma]-\ov \Phi^{\ov Z}_n(\ov \gamma)\leq \E_{\ov{\Q}}[\ov Z/n]$. In particular, for every $\ov{\Q}\in\ov{\mathcal{M}}_{\ov Z}(\ov \Xi)$
	\begin{equation}\label{eq:Step-c-easy-dir-Z}
	\ov \Phi^{\ov Z,\ast}_n(\ov{\Q})\leq \sup_{\ov \gamma\in U_b(\ov\Omega)} (\E_{\ov{\Q}}[\ov \gamma]-\ov\Phi^{\ov Z}_n(\ov \gamma))\leq \E_{\ov{\Q}}[\ov Z/n].
	\end{equation}
	On the other hand, 
 since $\ov Z\geq 1$ is lower semicontinuous (as it has compact sublevel sets)
	there exists a sequence of non-negative functions 
	$\ov Z_k\in C_b(\ov\Omega)$ which increase pointwise to $\ov Z$.
	By definition $\ov\Phi^{\ov Z}_n(\ov Z_k/n)\leq 0$ for all $k$, hence for every $\ov{\Q}\in \fP(\ov \Omega)$
	\begin{align}
	\label{eq:conjugate.Z-Z}
	\ov\Phi^{\ov Z,\ast}_n(\ov{\Q})
	\geq \sup_k \big(\E_{\ov{\Q}}[\ov Z_k/n]-\ov\Phi^{\ov Z}_n(\ov Z_k/n)\big)
	\geq \E_{\ov{\Q}}[\ov Z/n].
	\end{align}
	Therefore, we conclude from \eqref{eq:Step-c-easy-dir-Z}
	and
	 \eqref{eq:conjugate.Z-Z}
	 that indeed for every $ \ov \Q \in \ov{\cM}_{\ov Z}(\ov \Xi)$ we have
	 \begin{equation}
	 \label{eq:conjugate-enlarge-Z-Part1}
	 \ov \Phi^{\ov Z,\ast}_n(\ov{\Q})= \sup_{\ov \gamma\in U_b(\ov\Omega)} (\E_{\ov{\Q}}[\ov \gamma]-\ov\Phi^{\ov Z}_n(\ov \gamma))= \E_{\ov{\Q}}[\ov Z/n].
	 \end{equation}
	 which is the first equality to be shown in \eqref{eq:conjugate-Z}. Therefore, it remains to show that for any Borel measure $\ov \Q$ on $\ov\Omega$ such that $ \ov \Q \notin \ov{\cM}_{\ov Z}(\ov \Xi)$, we have that $\ov \Phi^{\ov Z,\ast}_n(\ov{\Q})=\infty$.

	To that end, let $\ov \Q$ be a finite Borel measure on $\ov\Omega$ such that $ \ov \Q \notin \ov{\cM}_{\ov Z}(\ov \Xi)$. Note  that for every $m\in\mathbb{R}$ we have $\ov \Phi^{\ov Z}_n(m)\leq m$ by definition, so that
	$\ov\Phi^{\ov Z,\ast}_n(\ov{\Q}) \geq \sup_m (m\,\ov{\Q}(\ov\Omega)-m)=\infty$
	whenever $\ov{\Q}\notin  \fP(\ov \Omega)$.
	Moreover,  as $\{\ov Z =\infty\}=\ov \Xi^c$, it follows from \eqref{eq:conjugate.Z-Z} that $\ov\Phi^{\ov Z,\ast}_n(\ov{\Q})=\infty$ whenever $\ov \Q(\ov \Xi^c)>0$.
	Therefore, we may assume w.l.o.g.\ that $\ov{\Q}\in \fP(\ov \Omega)$, $\ov\Q(\ov\Xi)=1$, and due to \eqref{eq:conjugate.Z-Z} also that $\ov Z$ is integrable w.r.t.~$\ov{\Q}$.
	Therefore, by definition of $ \ov \Q \notin \ov{\cM}_{\ov Z}(\ov \Xi)$ we see that $\ov S$ or $\ov{\mathbb{S}}$ is not
	a $\ov{\Q}$-$\ov{\mathbb{F}}$-local martingale. 
	In either case, by Lemma \ref{lem:polar} there is a function $\ov \gamma\in C_b(\ov\Omega)$ and 
	$\ov H\in\ov{\mathcal{H}}$ such that
	$\E_{\ov{\Q}}[\ov \gamma]>0$ and $\ov \gamma\leq (\ov H\cdot \ov S)_T$ or $\ov \gamma\leq (\ov H\cdot \ov{\mathbb{S}})_T$.
	This implies that 
	$\ov \Phi^{\ov Z}_n(m\ov \gamma)\leq 0 $ for all $m>0$. Therefore, as
	\begin{equation*}
	\ov\Phi^{\ov Z,\ast}_n(\ov{\Q})
	\geq
	\sup_{m>0}\big(\E_{\ov \Q}[m\ov \gamma]-\ov\Phi^{\ov Z}_n(m\ov \gamma)\big)
	\geq
	\sup_{m>0}\E_{\ov \Q}[m\ov \gamma],
	\end{equation*}
	we conclude that
	$\ov\Phi^{\ov Z,\ast}_n(\ov{\Q})=\infty$, if $\ov{\Q}\notin\ov{\mathcal{M}}_{\ov Z}(\ov \Xi)$. This and  \eqref{eq:conjugate-enlarge-Z-Part1} hence indeed implies
	\eqref{eq:conjugate-Z}.
	As a consequence, we obtain from the non-linear Daniell-Stone theorem in \cite[Theorem~2.2]{BartlCheriditioKupper.17}	that
	\begin{equation}\label{eq:dualrep-Z}
	\ov\Phi^{\ov Z}_n(\ov \xi)=\sup_{\ov{\Q}\in\ov{\mathcal{M}}_{\ov Z}(\ov \Omega)}\big(  \E_{\ov{\Q}}[\ov \xi] -\E_\Q[\ov Z]/n\big)\quad\mbox{for all }\xi\in U_b(\ov\Omega).
	\end{equation}
	
	Step (d):
	Let  $\ov \xi\colon\ov\Omega\to(-\infty,\infty]$ be bounded from below such that there is a sequence of bounded upper-semicontinuous functions
	$\ov\xi_n\colon\ov\Omega\to\R$ with $\ov\xi=\liminf_{n}\ov\xi_n$. For each $n$, define $\ov\xi_n':=\inf_{m\geq n}\ov\xi_n$.
	Then $\ov\xi_n'$ is bounded upper-semicontinuous and $\ov\xi=\sup_n \ov\xi_n'$.
	We show that 
	\begin{equation}\label{eq:rep2-Z}
	\ov\Phi^{\ov Z}(\ov \xi)=\sup_n \ov \Phi^{\ov Z}_n(\ov\xi_n')=\sup_{\ov{\Q}\in\ov{\mathcal{M}}_{\ov Z}(\ov \Xi)} \E_{\ov{\Q}}[\ov \xi].
	\end{equation}

	Indeed, by \eqref{eq:weakduality2-Z} and \eqref{eq:dualrep-Z} it holds
	\begin{align}
	\ov\Phi^{\ov Z}(\ov \xi)
	&\geq \sup_{\ov{\Q}\in\ov{\mathcal{M}}_{\ov Z}(\ov \Xi)} \E_{\ov{\Q}}[\ov \xi]
	=\sup_{\ov{\Q}\in\ov{\mathcal{M}}_{\ov Z}(\ov \Xi)}\Big(\sup_n \big(\E_{\ov{\Q}}[\ov\xi_n'] -\E_{\ov{\Q}}[\ov Z]/n\big) \Big) \nonumber\\
	&=\sup_n\sup_{\ov{\Q}\in\ov{\mathcal{M}}_{\ov Z}(\ov \Xi)}\big(  \E_{\ov{\Q}}[\ov\xi_n'] -\E_\Q[\ov Z]/n\big)
	=\sup_n \ov\Phi^{\ov Z}_n(\ov\xi_n'). \label{eq:Step-d-ineq1-Z}
	\end{align}
	On the other hand, if $m:=\sup_n\ov\Phi^{\ov Z}_n(\ov\xi_n')$, 
	then for every $n$ there exists $\ov H^n$ and $\ov G^n$ in $\ov{\mathcal{H}}$ with
	\[m+\tfrac{1}{n}+(\ov H^n\cdot \ov S)_T+(\ov G^n\cdot \ov{\mathbb{S}})_T\geq \ov\xi_n'-\ov Z/n \quad \mbox{on } \ov \Delta\cap\ov\Xi.\] 
	Hence, $\lambda+(\ov H^n\cdot \ov S)_T+(\ov G^n\cdot \ov{\mathbb{S}})_T\geq -c\ov Z \mbox{ on } \ov \Delta\cap\ov\Xi$ for $c:=\|\ov \xi\wedge 0\|_\infty+m+2+|\lambda|$ and 
	\[m+ \liminf_{n\to \infty}\big( (\ov H^n\cdot \ov S)_T+ (\ov G^n\cdot \ov{\mathbb{S}})_T\big) \geq \liminf_{n \to \infty} \big(\ov\xi_n'-\ov Z/n-\tfrac{1}{n}\big)=\ov \xi \quad\text{on }
	\ov \Delta\cap\ov\Xi,\]
	where we use that $\ov \Delta\cap\ov\Xi \subseteq\{\ov Z<\infty\}$. 
	Therefore $\ov \Phi^{\ov Z}(\ov \xi)\leq m$, hence together with \eqref{eq:Step-d-ineq1-Z} 	we obtain that  
	indeed \eqref{eq:rep2-Z} holds true,
	and the infimum in the definition of $\ov \Phi^{\ov Z}(\ov \xi)$ is attained whenever $\ov \Phi^{\ov Z}(\ov \xi)<\infty$.

	Step (e): Let  $\xi\colon C[0,T]\to(-\infty,\infty]$ be bounded from below which can be written as $\xi=\ov\xi\circ\psi:=\liminf_n \ov\xi_n\circ\psi$ where
	$\ov\xi_n\colon \ov\Omega\to\R$, $n \in \N$, are bounded and upper-semicontinuous.  Notice that the fact that $Z=\ov Z \circ \psi$ on $\Omega$ and $Z\circ \ov S = \ov Z \circ \psi \circ \ov S =\ov Z $ on $\ov \Delta$ together with the same arguments used for the proofs of the transition results in Lemma~\ref{lem:primal-transition} and Lemma~\ref{lem:dual-transition} (see also Lemma~\ref{lem:transfer.integrals} and Lemma~\ref{lem:transfer.measures}) ensure that the same primal transition result obtained in Lemma~\ref{lem:primal-transition} also holds true with respect to $\Phi^{Z}$ and $\ov \Phi^{\ov Z}$ and that the same dual transition result obtained in Lemma~\ref{lem:dual-transition} also hold true  with respect to $\cM_{Z}(\Xi)$ and $\ov{\cM}_{\ov Z}(\ov \Xi)$. Therefore, the duality result on the enlarged space obtained in \eqref{eq:rep2-Z} together with the same arguments as in the proof of Theorem~\ref{thm:main.stopped} imply that
	\begin{equation*}
	\begin{split}
	\Phi^{Z}(\xi)
	= 
	\ov{\Phi}^{\ov Z}(\xi \circ \ov S)
	=
	\ov{\Phi}^{\ov Z}(\ov\xi)
	=
	\sup_{\ov \Q \in \ov{\cM}_{\ov Z}(\ov \Xi)}\E_{\ov \Q}[\ov \xi]
	=
	\sup_{\ov \Q \in \ov{\cM}_{\ov Z}(\ov \Xi)}\E_{\ov \Q}[ \xi \circ \ov S]
	=
	\sup_{\Q \in {\cM}_{Z}(\Xi)}\E_{\Q}[ \xi].
	\end{split}
	\end{equation*}
		Moreover, we know from Step (d) that the infimum in the definition of $\ov \Phi^{\ov Z}(\xi \circ \ov S)$ is attained whenever $\ov \Phi^{\ov Z}(\xi \circ \ov S)<\infty$. 
	Therefore, we see from Lemma~\ref{lem:transfer.integrals} together with the fact that $\psi(\Xi)= \ov \Delta \cap \ov \Xi$ and $\ov{S}(\ov \Delta \cap \ov \Xi)=\Xi$ 
	that also the infimum in the definition of $\Phi^{Z}(\xi)$ is attained whenever $\Phi^Z(\xi)<\infty$. The proof is thus complete.
\end{proof}

We finish this section with the proof of Proposition \ref{prop:sufficient.assumption.for.prediction.set}.

\begin{proof}[Proof of Proposition \ref{prop:sufficient.assumption.for.prediction.set}]
  Due to \eqref{eq:cond:integ:le:M-MZ}, a version of Kolmogorov's continuity criterion (see e.g.~\cite[Theorem A.1]{BartlKupperProemelTangpi.17}) ensures that for any $\alpha\in(0,1/4)$ there is a constant $C$ such that   $\sup_{\mathbb{Q}\in\mathcal{M}(\Xi)} \mathbb{E}_{\mathbb{Q}}[\|S\|_\alpha^4+\|\langle S\rangle\|_\alpha^4]\leq C$, where we denote $\|\omega\|_\alpha:=\sup_{s\neq t}|\omega(t)-\omega(s)|/|t-s|^\alpha$.
  Using the elementary inequality $(a+b)^4\leq 8 a^4+8 b^4$ implies that $\sup_{\mathbb{Q}\in\mathcal{M}(\Xi)} \mathbb{E}_{\mathbb{Q}}[\|S\|_{\alpha}^4 +\|\langle S\rangle \|_{\alpha}^4]\leq C'$ for all $n\geq 1/\alpha$ and some new constant $C'$.
  Therefore, by Markov's inequality, there is a sequence $(a_n)_{n \in \N}$ which increases to $\infty$ such that $\sup_{\mathbb{Q}\in\mathcal{M}(\Xi)} \mathbb{Q}(\|S\|_{\alpha} + \|\langle S\rangle \|_{\alpha}> a_n)\leq 1/n^3$. 
  Then the function 
  \[C[0,T]\times C[0,T]\ni(\omega,\nu) \mapsto\overline{Y}(\omega,\nu):=\inf\!\big\{ n\geq 1/\alpha : \|\omega\|_{\alpha}+\|\nu\|_{\alpha}\leq a_{n+1}\big\} \in \N
  \cup\{+\infty\}\]
  has compact level sets by the Arzel\`a-Ascoli theorem (see also \cite[Lemma~3.1]{BartlKupperProemelTangpi.17}) and satisfies $\{\overline{Y}<\infty\}\subset C^{\textup{H\"older}}[0,T]\times C^{\textup{H\"older}}[0,T]$.
  Let $Y(\omega):=\ov Y(\omega,\langle\omega\rangle)$ for $\omega\in\Omega$.
  Then, as $\{Y=n\}\subseteq\{\|S\|_{\alpha} + \|\langle S\rangle \|_{\alpha}> a_n\}$ and the latter set has probability less than $n^{-3}$ under all $\mathbb{Q}\in\mathcal{M}(\Xi)$, one gets
  \begin{equation}\label{eq:le:M-MZ-char}
  \sup_{\mathbb{Q}\in\mathcal{M}(\Xi)} \mathbb{E}_{\mathbb{Q}}[Y]<\infty.
  \end{equation}
  Further, as $\|\omega\|_\infty\leq |\omega(0)| + T^\alpha \|\omega\|_\alpha$, the assumptions yield $\sup_{\mathbb{Q}\in\mathcal{M}(\Xi)} \mathbb{E}_{\mathbb{Q}}[\|S\|_\infty+\|\langle S\rangle\|_\infty]<\infty$.
  Therefore, possibly replacing $\overline{Y}$ by $\overline{Y}(\omega,\nu)+\|\omega\|_\infty+\|\nu\|_\infty$, one may assume that $\overline{Y}(\omega,\nu)\geq \|\omega\|_\infty+\|\nu\|_\infty$.

  In a final step define $\overline{Z}:=\overline{Y}+\infty 1_{\overline{\Xi}^c}$ so that the assumption that $\ov \Xi \subseteq C^{\textup{H\"older}}[0,T]\times C^{\textup{H\"older}}[0,T]$ guarantees that $\overline{\Xi}=\{\overline{Z}<\infty\}$. 
  Therefore, we can conclude that Assumption~\ref{ass:B} is satisfied with respect to the constructed function $\ov Z$. 
  Moreover, \eqref{eq:le:M-MZ-char} assures that $\mathcal{M}(\Xi)=\mathcal{M}_Z(\Xi)$ for $Z\colon\Omega \to [1,\infty]$ defined by $Z(\omega):=\ov Z(\omega,\langle \omega \rangle)$, $\omega \in \Omega$.
\end{proof}

\section{Technical Results}
\label{sec:technical-Results}

In this section, if not explicitly stated otherwise, we use the setting of Subsection~\ref{subsec:setup}.

The following results are similar to ones in \cite{BartlKupperProemelTangpi.17}, whose proofs we provide for the sake of completeness.

\begin{lemma}
\label{lem:stopped.prediction.set}
Let $\Xi \subseteq \Omega$ be a prediction set which satisfies Assumption~\ref{ass:A}. Then there is an increasing sequence of nonempty compact sets 
  $\ov{\Xi}_n\subseteq C[0,T]\times C[0,T]$, $n \in \N$, $($i.e.\ $\forall n\colon\ov{\Xi}_n\subseteq\ov{\Xi}_{n+1})$  such that $\ov\Xi=\bigcup_n\ov\Xi_n$ 
  and for all $n$ we have that $\ov\omega \in \ov \Xi_n$ implies that $\ov{\omega}^t\in\ov{\Xi}_n$ for every $t$.
\end{lemma}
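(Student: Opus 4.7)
The plan is essentially to take the given compact decomposition of $\ov\Xi$ and ``saturate'' each piece under the stopping operation, which is already known to be continuous (compare Remark~\ref{rem:ass-stopped-american}). Concretely, by condition~(A1) I would first write $\ov\Xi=\bigcup_{n\ge 1}K_n$ with each $K_n\subseteq C[0,T]\times C[0,T]$ compact, and then replace $K_n$ by the finite union $\bigcup_{k\le n}K_k$ so that the sequence becomes increasing and, after discarding possibly empty terms, nonempty.

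The key tool is the continuous map
\[
\mathrm{stop}\colon \big(C[0,T]\times C[0,T]\big)\times[0,T]\to C[0,T]\times C[0,T],\quad \big((\omega,\nu),t\big)\mapsto (\omega^t,\nu^t).
\]
Continuity is a standard uniform continuity argument: if $(\omega^k,\nu^k,t^k)\to(\omega,\nu,t)$ with $\omega^k\to\omega$ uniformly and $t^k\to t$, then
\[
|\omega^k(t^k\wedge r)-\omega(t\wedge r)|\le \|\omega^k-\omega\|_\infty+|\omega(t^k\wedge r)-\omega(t\wedge r)|
\]
and the last term tends to $0$ uniformly in $r\in[0,T]$ by uniform continuity of $\omega$ on $[0,T]$. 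I would then define
\[
\ov\Xi_n:=\mathrm{stop}\big(K_n\times[0,T]\big),
\]
which is a compact, nonempty, increasing sequence of subsets of $C[0,T]\times C[0,T]$.

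It remains to check the three required properties. Each $\ov\Xi_n$ is contained in $\ov\Xi$ by condition~(A3); conversely $K_n\subseteq\ov\Xi_n$ (take $t=T$), so $\bigcup_n\ov\Xi_n=\bigcup_n K_n=\ov\Xi$. For stability under stopping, take any $(\omega,\nu)\in\ov\Xi_n$ and write $(\omega,\nu)=(\omega_0^s,\nu_0^s)$ for some $(\omega_0,\nu_0)\in K_n$ and $s\in[0,T]$; then for any $t\in[0,T]$,
\[
\omega^t(r)=\omega(t\wedge r)=\omega_0(s\wedge t\wedge r)=\omega_0^{s\wedge t}(r),
\]
and similarly for $\nu$, so $(\omega^t,\nu^t)=(\omega_0^{s\wedge t},\nu_0^{s\wedge t})\in\mathrm{stop}(K_n\times[0,T])=\ov\Xi_n$.

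There is no real obstacle here: the only non-cosmetic ingredients are continuity of the stopping map (a routine uniform continuity computation) and the observation that stopping composes, so that saturating once under stopping already produces a set closed under further stopping. Both appear implicitly already in Remark~\ref{rem:ass-stopped-american}, and the whole construction is self-contained.
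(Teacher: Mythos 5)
Your proposal is correct and follows essentially the same route as the paper: decompose $\ov\Xi$ into increasing nonempty compacts via (A1) and saturate each piece under the (continuous) stopping map, so that compactness is preserved as a continuous image of a compact set, (A3) keeps the saturation inside $\ov\Xi$, and the identity $(\omega^s)^t=\omega^{s\wedge t}$ gives stability under further stopping. The paper's proof is just a terser version of the same construction (it cites the continuity of the stop map already used in Remark~\ref{rem:ass-stopped-american}), so no further comparison is needed.
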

\begin{proof}
  Due to Assumption~\ref{ass:A} we know that $\ov\Xi=\bigcup_n\ov\Xi_n'$ for compacts $\ov\Xi_n'$ which  w.l.o.g.\  can be chosen to be increasing and nonempty.
  As $\ov\Xi_n:=\{ \ov\omega^t : \omega\in\ov\Xi_n' \text{ and } t\in[0,T]  \}\subseteq\ov\Xi$
  is also compact, the claim follows. 
\end{proof}

\begin{lemma}
\label{lem:integral.positive}
Let $\Xi \subseteq \Omega$ be a prediction set which satisfies Assumption~\ref{ass:A} and let $(\ov \Xi_n)_{n \in \N}$ be the sets introduced in Lemma~\ref{lem:stopped.prediction.set}.
 Moreover, let $n \in \N$, 
   $\ov H,\ov G\in\ov{\mathcal{H}}$, and $\lambda \geq 0$ such that
  $\lambda+(\ov H\cdot \ov S)_T+ (\ov G \cdot \ov{\mathbb{S}})_T\geq 0$ on $\ov\Delta\cap\ov{\Xi}_n$.
  Then for every $\varepsilon>0$ there are $\ov H^{'},\ov G'\in\ov{\mathcal{H}}$ such that
  \begin{align*}
  (\ov H'\cdot \ov S)_T+ (\ov G' \cdot \ov{\mathbb{S}})_T
  &=(\ov H\cdot \ov S)_T+ (\ov G \cdot \ov{\mathbb{S}})_T
  \quad\text{on } \ov\Delta\cap \ov \Xi_n,\\
  \lambda+\varepsilon + (\ov H'\cdot \ov S)_t+(\ov G'\cdot\ov{\mathbb{S}})_t
  &\geq 0 
  \quad\text{on }\ov\Delta\cap \ov \Xi\text{ for all } t.
  \end{align*}
\end{lemma}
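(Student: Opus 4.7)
The plan is to truncate the strategy at the first time the wealth falls below $-\varepsilon$. After refining to a common partition, assume $\ov H$ and $\ov G$ share stopping times $0\leq \ov\tau_1\leq\cdots\leq \ov\tau_{L+1}\leq T$, and consider the gain process
\[
  \ov Y_t := \lambda + (\ov H\cdot \ov S)_t + (\ov G\cdot \ov{\mathbb{S}})_t, \quad t\in[0,T].
\]
This process is $\ov\F^{\ov\Delta}_+$-adapted and, for every $\ov\omega\in\ov\Omega$, continuous in $t$ because $\ov{\mathbb{S}}=(\ov S^2-\ov S_0^2-\ov V)/2$ is continuous as $\ov S$ and $\ov V$ are. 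Set
\[
  \ov\tau := \inf\{t\in[0,T]:\ov Y_t\leq-\varepsilon\}\wedge T,
\]
which is an $\ov\F^{\ov\Delta}_+$-stopping time as the hitting time of a closed set by a continuous adapted process. Define $\ov H':=\ov H\, 1_{[0,\ov\tau]}$ and $\ov G':=\ov G\, 1_{[0,\ov\tau]}$; after inserting $\ov\tau$ into the common partition, these lie in $\ov{\mathcal{H}}$.

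For the admissibility condition, one has $\lambda + (\ov H'\cdot\ov S)_t + (\ov G'\cdot\ov{\mathbb{S}})_t = \ov Y_{t\wedge\ov\tau}$ on all of $\ov\Omega$, and by continuity of $t\mapsto\ov Y_t$ together with the definition of $\ov\tau$, $\ov Y_{t\wedge\ov\tau}\geq-\varepsilon$ pointwise on $\ov\Omega$; adding $\varepsilon$ yields the required nonnegativity on $\ov\Delta\cap\ov\Xi$ (in fact on all of $\ov\Omega$). For the terminal agreement on $\ov\Delta\cap\ov\Xi_n$, fix $\ov\omega\in\ov\Delta\cap\ov\Xi_n$ and suppose toward a contradiction that $s:=\ov\tau(\ov\omega)<T$; continuity yields $\ov Y_s(\ov\omega)=-\varepsilon$. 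The stopped path $\ov\omega^s=(\omega^s,\nu^s)$ lies in $\ov\Delta$ since $\nu^s=\langle\omega\rangle^s=\langle\omega^s\rangle$, and in $\ov\Xi_n$ by Lemma~\ref{lem:stopped.prediction.set}, so the hypothesis forces $\ov Y_T(\ov\omega^s)\geq 0$. On the other hand, a direct inspection of the simple integrals gives $\ov Y_T(\ov\omega^s)=\ov Y_s(\ov\omega)=-\varepsilon<0$, a contradiction. Hence $\ov\tau=T$ on $\ov\Delta\cap\ov\Xi_n$, and the terminal values of the two pairs of integrals agree there.

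The principal technical point is the identity $\ov Y_T(\ov\omega^s)=\ov Y_s(\ov\omega)$ for simple integrands. This reduces to noting that every partition interval $(\ov\tau_l(\cdot)\wedge T,\ov\tau_{l+1}(\cdot)\wedge T]$ entirely contained in $[0,s]$ contributes identically along $\ov\omega$ and $\ov\omega^s$ (the two paths agree on $[0,s]$, and the stopping times together with the $\ov h_l,\ov g_l$ are $\ov\F_+^{\ov\Delta}$-measurable), while any such interval intersecting $(s,T]$ contributes zero along $\ov\omega^s$ because both $\ov S$ and $\ov{\mathbb{S}}$ are constant on $[s,T]$ along $\ov\omega^s$. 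Handling the right-continuous $\ov\F_+^{\ov\Delta}$-stopping times rigorously in this decomposition is the main subtlety.
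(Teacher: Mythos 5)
Your proof is correct and takes essentially the same route as the paper: truncate at the first time the wealth reaches $-\varepsilon$, and use closedness of $\ov\Delta\cap\ov\Xi_n$ under stopping together with the stopped-path identity $(\ov H\cdot \ov S)_t(\ov\omega)+(\ov G\cdot \ov{\mathbb{S}})_t(\ov\omega)=(\ov H\cdot \ov S)_T(\ov\omega^t)+(\ov G\cdot \ov{\mathbb{S}})_T(\ov\omega^t)$ --- which the paper simply cites from \cite[Lemma 4.6]{BartlKupperProemelTangpi.17} --- to conclude that the stopping is never triggered on $\ov\Delta\cap\ov\Xi_n$. The only imprecision is your remark that an interval intersecting $(s,T]$ contributes zero along $\ov\omega^s$ (a straddling interval contributes its increment over $[\ov\tau_l,s]$, which matches the corresponding term of $\ov Y_s(\ov\omega)$ rather than vanishing), but since the identity you need is exactly the cited lemma, this does not affect the argument.
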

\begin{proof}
  Fix $\ov \omega\in\ov\Delta\cap\ov\Xi_n$ and let $t$ be arbitrary. Since $\ov \omega^t\in\ov\Xi_n\cap\ov\Delta$, one obtains 
  \[ \lambda+(\ov H\cdot \ov S)_t(\ov\omega)+ (\ov G \cdot \ov{\mathbb{S}})_t(\ov\omega)
  =\lambda+(\ov H\cdot \ov S)_T(\ov\omega^t)+ (\ov G \cdot \ov{\mathbb{S}})_T(\ov\omega^t)
  \geq 0,\]
  see \cite[Lemma 4.6]{BartlKupperProemelTangpi.17} for more details.
  Now define the stopping time 
  $\ov\sigma:=\inf\{t\geq 0 : \lambda + \varepsilon+(\ov H\cdot \ov S)_t+ (\ov G \cdot \ov{\mathbb{S}})_t \leq 0\}$ 
  and then set $\ov H':=\ov H 1_{[0,\ov\sigma]}\in\ov{\mathcal{H}}$ and similarly
  $\ov G':=\ov G 1_{[0,\ov\sigma]}\in\ov{\mathcal{H}}$ to obtain the result.
\end{proof}

\begin{lemma}\label{lem:approx.Ft}
  Let $[a,b]\subseteq \R$. For every $\ov{\Q}\in\mathfrak{P}(\ov\Omega)$ and every
  $\ov{\mathcal{F}}_t$-measurable function $\ov h\colon\ov\Omega\to[a,b]$ 
  there are continuous $\ov{\mathcal{F}}_t$-measurable functions 
  $\ov h_k\colon \ov\Omega\to[a,b]$ which converge $\ov{\Q}$-almost surely to $\ov h$.
\end{lemma}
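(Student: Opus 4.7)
The plan is to use the fact that any $\ov{\mathcal{F}}_t$-measurable function factors through the restriction of paths to $[0,t]$, and then to apply standard approximation results on the resulting Polish space.

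\textbf{Step 1 (Factorization).} Set $X_t:=C[0,t]\times C[0,t]$, equipped with the sup-norm and its Borel $\sigma$-algebra, and let $\pi_t\colon\ov\Omega\to X_t$ denote the continuous restriction map $\pi_t(\ov\omega):=\ov\omega|_{[0,t]}$. Since the raw filtration satisfies $\ov{\mathcal{F}}_t=\sigma(\ov S_s,\ov V_s:s\leq t)=\pi_t^{-1}(\mathcal{B}(X_t))$, the Doob--Dynkin lemma yields a Borel function $g\colon X_t\to[a,b]$ with $\ov h=g\circ\pi_t$.

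\textbf{Step 2 (Approximation on $X_t$).} Consider the pushforward measure $\mu:=\ov\Q\circ\pi_t^{-1}\in\mathfrak{P}(X_t)$. Since $X_t$ is Polish and $g\colon X_t\to[a,b]$ is bounded Borel, Lusin's theorem provides, for every $k\in\mathbb{N}$, a compact set $K_k\subseteq X_t$ with $\mu(X_t\setminus K_k)<2^{-k}$ such that the restriction $g|_{K_k}$ is continuous. By Tietze's extension theorem (applicable since $X_t$ is a metric space and $K_k$ is closed in $X_t$) one obtains a continuous extension $\tilde g_k\colon X_t\to\mathbb{R}$ of $g|_{K_k}$; composing with the continuous retraction $s\mapsto(s\wedge b)\vee a$ produces a continuous $g_k\colon X_t\to[a,b]$ that still agrees with $g$ on $K_k$.

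\textbf{Step 3 (Transfer back).} Define $\ov h_k:=g_k\circ\pi_t\colon\ov\Omega\to[a,b]$. As a composition of continuous maps, $\ov h_k$ is continuous on $\ov\Omega$, and as $g_k$ is Borel it is also $\ov{\mathcal{F}}_t$-measurable. On $\pi_t^{-1}(K_k)$ one has $\ov h_k=g_k\circ\pi_t=g\circ\pi_t=\ov h$, hence
\[
\ov\Q(\ov h_k\neq \ov h)\;\leq\;\ov\Q\bigl(\pi_t^{-1}(X_t\setminus K_k)\bigr)\;=\;\mu(X_t\setminus K_k)\;<\;2^{-k}.
\]
The Borel--Cantelli lemma then implies $\ov h_k\to\ov h$ $\ov\Q$-almost surely, completing the proof.

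The argument is entirely a packaging of standard tools, so there is no serious obstacle. The only minor point requiring care is that the Tietze extension must respect the bounds $[a,b]$, which is handled by the retraction step above; all other measurability and continuity assertions follow immediately from the factorization through $\pi_t$.
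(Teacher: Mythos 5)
Your proof is correct and follows essentially the same route as the paper: factor $\ov h$ through the restriction map onto the Polish space $C[0,t]\times C[0,t]$ via Doob--Dynkin, approximate the resulting Borel function there by continuous functions converging $\ov\Q\circ\pi_t^{-1}$-almost surely, and pull back. The paper simply invokes this approximation as a standard fact, while you spell it out via Lusin, Tietze with truncation to $[a,b]$, and Borel--Cantelli, which is a fine way to fill in that step.
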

\begin{proof}
  First notice that both $\ov\Omega$ and
  $\ov\Omega^t:=\{\ov\omega|_{[0,t]}: \ov \omega\in\ov\Omega\}$, each
  endowed with the sup-norm, are Polish spaces.
  Now as
  $\ov{\mathcal{F}}_t=\{\ov \pi^{-1}(\ov B) : \ov B\subseteq\ov \Omega^t\text{ is Borel}\}$
  where $\ov \pi\colon\ov\Omega\to \ov\Omega^t$
  is defined by $\ov \pi(\ov\omega)=\ov\omega|_{[0,t]}$, 
one has $\ov h=\ov h^t\circ\pi$ for some Borel function $\ov h^t\colon\ov\Omega^t\to\mathbb{R}$.
  Since $\ov\Omega^t$ is Polish and $\ov{\Q}^t:=\ov{\Q}\circ\ov \pi^{-1}$ a probability measure thereon,
  there are continuous functions $h_k^t\colon\ov\Omega^t\to\mathbb{R}$ such that
  $\ov h_k^t\to \ov h^t$ $\ov{\Q}^t$-almost surely.
  Therefore, we can define $\ov h_k\colon\ov\Omega\to\mathbb{R}$ by $\ov h_k:=\ov h_k^t\circ\ov \pi$.
\end{proof}

\begin{lemma}
\label{lem:S.stopped.lsc}
Let here $\Omega$ be any metric space and $X:\Omega \to C[0,T]$ continuous.
 Fix $0\leq s< t\leq T$, $m> 0$, and define 
  \[ \tau:=\inf\{r\geq s: X_r>m \text{ or } X_r\leq -m\}\wedge T.\]
  Then the function 
  $\omega\mapsto X_{\tau(\omega)\wedge t}(\omega)$ is lower semicontinuous.
\end{lemma}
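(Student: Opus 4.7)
The plan is to prove sequential lower semicontinuity: given $\omega_n\to\omega$ in $\Omega$, I would show $\liminf_n X_{\tau(\omega_n)\wedge t}(\omega_n) \geq X_{\tau(\omega)\wedge t}(\omega)$. Since $X\colon\Omega\to C[0,T]$ is continuous, $X(\omega_n)\to X(\omega)$ uniformly on $[0,T]$. After passing to a subsequence I may assume $\tau(\omega_n)\wedge t\to \sigma$ for some $\sigma\in[s,t]$, and uniform convergence combined with continuity of the limit path yields $X_{\tau(\omega_n)\wedge t}(\omega_n)\to X_\sigma(\omega)$. Writing $\tau_n:=\tau(\omega_n)$ and $\tau^*:=\tau(\omega)$, the problem reduces to showing $X_\sigma(\omega)\geq X_{\tau^*\wedge t}(\omega)$.

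Next I would analyse this via two trichotomies. On the approximating side, pass to a further subsequence so that one of (i) $\tau_n\geq t$ for all $n$ (whence $\sigma=t$), (ii) $\tau_n<t$ for all $n$ with $X_{\tau_n}(\omega_n)\geq m$, or (iii) $\tau_n<t$ for all $n$ with $X_{\tau_n}(\omega_n)\leq -m$ holds. The dichotomy in (ii)/(iii) is available because if $\tau_n<t$, then $\tau_n$ is the infimum of a non-empty set of hits and hence approached by hits from the right, so continuity of $X(\omega_n)$ forces $X_{\tau_n}(\omega_n)\in[m,\infty)\cup(-\infty,-m]$. On the target side I split according to whether $\tau^*\geq t$ or $\tau^*<t$, the latter further into $X_{\tau^*}(\omega)\geq m$ or $X_{\tau^*}(\omega)\leq -m$; continuity of $X(\omega)$ then forces $X_{\tau^*}(\omega)\in\{-m,m\}$ whenever $s<\tau^*<T$, with the other values only possible when $\tau^*=s$.

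The remaining combinatorial step verifies $X_\sigma(\omega)\geq X_{\tau^*\wedge t}(\omega)$ in each resulting case, using three observations. First, because $\{X_r>m\}$ is open, strict upward values $X_r(\omega)>m$ persist to $X_r(\omega_n)$ and yield $\tau_n\leq r$ for $n$ large; likewise strict $X_r(\omega)<-m$ persists. Second, in case (i) the standing inequalities $X_r(\omega_n)\leq m$ and $X_r(\omega_n)>-m$ for $r\in[s,t)$ pass to the limit to give $X_r(\omega)\in[-m,m]$ on $[s,t]$, which both controls $X_t(\omega)$ and rules out $\tau^*<t$ unless $X_{\tau^*}(\omega)=-m$. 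Third, the incompatible cross-combinations (for instance case (iii) paired with $X_{\tau^*}(\omega)\geq m$, where strict upward hits near $\tau^*$ would force $X_{\tau_n}(\omega_n)\to m$ contradicting $\leq-m$; or case (ii) with $\sigma<\tau^*$, which would need $X_\sigma(\omega)\in(-m,m]$ yet also $\geq m$, forcing the harmless equality $X_\sigma(\omega)=m$) are either ruled out or reduced to equality.

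The main obstacle I expect is the asymmetry between the open upward condition $X_r>m$ and the closed downward condition $X_r\leq -m$: strict upward crossings are stable under uniform perturbation while non-strict downward touches are not. This asymmetry is precisely what ensures lower (rather than full) semicontinuity, and it requires careful bookkeeping of which strict/non-strict versions of each inequality hold along subsequences and at $\tau^*$.
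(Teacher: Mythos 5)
Your proposal is correct and rests on the same mechanism as the paper's proof: the asymmetry between the open condition $X_r>m$ and the closed condition $X_r\le -m$, which in the paper is recorded as $\limsup_n\tau_+(\omega_n)\le\tau_+(\omega)$ and $\liminf_n\tau_-(\omega_n)\ge\tau_-(\omega)$, followed by a sequential case analysis. You merely organize the cases by the subsequential limit $\sigma$ of $\tau(\omega_n)\wedge t$ and the type of stop of the approximating paths, whereas the paper splits according to the value of $X_{\tau(\omega)\wedge t}(\omega)$; the substance is the same.
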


\begin{proof}
  Define $\tau_+:=\inf\{r\geq s: X_r>m \}\wedge T$ and $\tau_-:=\inf\{r\geq s: X_r\leq -m\}\wedge T$,
  and note that $\tau=\tau_+\wedge \tau_-$. Moreover, fix ${\omega}$ and a sequence 
  $({\omega}_n)$ such that $\omega_n\to \omega$.
  
  First, we claim that
  \begin{equation}\label{eq:limsup-liminf-stopping-time}
\limsup_n \tau_+({\omega}_n)\leq\tau_+({\omega})
    \quad\text{and}\quad 
    \liminf_n \tau_-({\omega}_n)\geq\tau_-({\omega}).
  \end{equation}
  
  Indeed, for the first inequality, assume without loss of generality that $r:=\tau_+({\omega})<T$.  Fix any $\varepsilon>0$.
  Then, by definition,
  there is $\delta\in(0,\varepsilon)$ such that
  $X_{r+\delta}({\omega})>m$. Since ${\omega}\mapsto X({\omega})$ is continuous, 
  $X_{r+\delta}({\omega}_n)>m$ for eventually all $n$, 
  showing that $\tau_+({\omega}_n)\leq r+\varepsilon$ for eventually all $n$. 
  As $\varepsilon>0$ was arbitrary, the first inequality of the claim follows. 
  
  To  see the second inequality of the claim, we may assume without loss of generality that $r:=\tau_-({\omega})>s$.
  Then necessarily $X_u({\omega})>-m$ for $u\in[s,r)$. 
  Fix $\varepsilon>0$. By continuity of $t\mapsto X_t({\omega})$, there exists $\delta>0$
  such that  $X_u({\omega})\geq-m+\delta$ for $u\in[s,r-\varepsilon]$.
  Further, due to continuity of ${\omega} \mapsto X({\omega})$, 
  it follows that $X_u({\omega}_n)\geq -m+\delta/2$  for $u\in[s,r-\varepsilon]$
  for eventually all $n$. Therefore $\tau_-({\omega}_n)\geq r-\varepsilon$ for eventually all $n$
  and as $\varepsilon$ was arbitrary, the second part of the claim follows.
  
  Next, to prove the lower semicontinuity of $X^\tau_t$, we distinguish between several cases:
  
  (a) If $X^\tau_t({\omega})>m$, then by the continuity of the paths of $X.(\omega)$, $\tau({\omega})=\tau_+({\omega})=s$ and 
  $X_s({\omega})>m$. By continuity of $X_s(\cdot)$ on $\Omega$, $X_s({\omega}_n)>m$ and $\tau_+({\omega}_n)=s$ 
  for eventually all $n$, hence 
  $\lim_n X^\tau_t({\omega}_n)
  =\lim_n X_s({\omega}_n)
  =X_s({\omega})
  =X^\tau_t({\omega})$.
  
  (b) If $X^\tau_t({\omega})=m$, then either $\tau_+({\omega})<t$ or $\tau_+({\omega})\geq t$.
  In the first case it follows that $\tau_+({\omega})<\tau_-({\omega})$ so that by \eqref{eq:limsup-liminf-stopping-time}
  $\tau_+({\omega}_n)<\tau_-({\omega}_n)$ and $\tau_+({\omega}_n)<t$ 
  for eventually all $n$ 
  and therefore
  \[ \liminf_{n\to \infty} X^\tau_t({\omega}_n)
    =\liminf_{n\to \infty} X_{\tau_+({\omega}_n)}({\omega})
    =  m
    =X^\tau_t({\omega}). \]
  On the other hand, if $\tau_+({\omega})\geq t$, then $X_t({\omega})=m$
  and $X_r({\omega})>-m$ for $r\in[s,t]$.
  This implies that $\tau_-(\omega_n)\geq t$ for eventually all $n$ and therefore
  \[  \liminf_n X^\tau_t({\omega}_n)
    =\liminf_n X_{t\wedge \tau_+({\omega}_n)}({\omega}_n)
    = m 
    =X^\tau_t({\omega}). \]
  
  (c) If $X^\tau_t({\omega})\in (-m,m)$, then either $\tau({\omega})>t$ or $\tau({\omega})=T$ 
  (in which case necessarily $t=T$). 
  In the latter case it follows that $X_r({\omega})>-m$ for $r\in[s,T]$, hence $\tau_-({\omega}_n)=T$ 
  for eventually all $n$ and thus
  \[ \liminf_n X^\tau_t({\omega}_n)
    =\liminf_n X_{t\wedge \tau_+({\omega}_n)}({\omega}_n)
    \geq X_t({\omega})
    =X^\tau_t({\omega}).\]
  If $\tau({\omega})>t$, then again $\tau_-({\omega}_n)>t$ for eventually all $n$ so that the same argument shows
  that $\liminf_n X^\tau_t({\omega}_n)\geq X^\tau_t({\omega})$.
  
  (d) If $X^\tau_t({\omega})=-m$, then $X_s({\omega})\geq -m$. 
  Assume that $\liminf_n X^\tau_t({\omega}_n)<-m$.
  Then there is a subsequence still denoted by $({\omega}_n)$ such that 
  $\tau({\omega}_n)=\tau_-({\omega}_n)=s$ for eventually all $n$. 
  However, this contradicts 
  $\liminf_n X^\tau_t({\omega}_n)
  =\lim_n X_s({\omega}_n)
  =X_s({\omega})\geq -m$.
  
  (e) If $X^\tau_t({\omega})<-m$, then $\tau_-({\omega})=s$ and $X_s({\omega})<-m$. 
  This implies $X_s({\omega}_n)<-m$ and therefore
  $\tau_-({\omega}_n)=s$ for eventually all $n$, so that 
  $\lim_n X^\tau_t({\omega}_n)
  =\lim_n X_s({\omega}_n)
  =X_s({\omega})
  =X^\tau_t({\omega})$.
\end{proof}

\begin{proposition}
\label{lem:polar}
  Let $\ov X$ be either $\ov S$ or $\ov{\mathbb{S}}$ and fix a $\ov{\Q}\in\mathfrak{P}(\ov\Omega)$.
  If $\ov X$ is not a $\ov{\Q}$-$\ov{\F}$-local martingale, then there exists
  $\ov \gamma\in C_b(\ov\Omega)$ and $\ov H\in\ov{\mathcal{H}}$ such that 
  $\ov \gamma\leq (\ov H\cdot \ov X)_T$ and $\E_{\ov{\Q}}[\ov \gamma]>0$.
\end{proposition}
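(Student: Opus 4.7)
The plan is, given the failure of the local martingale property, to construct a simple integrand $\ov H\in\ov{\mathcal{H}}$ whose gain $(\ov H\cdot\ov X)_T$ is bounded, lower semicontinuous on $\ov\Omega$, and has strictly positive $\ov\Q$-expectation, and then to dominate this gain pointwise from below by a bounded continuous function $\ov\gamma$ still of positive expectation.

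\textbf{Step 1: witness of failure.} For $0\le s\le T$ and $m\ge 1$ introduce the asymmetric exit time
\[\tau:=\inf\{r\ge s:\ov X_r-\ov X_0>m\text{ or }\ov X_r-\ov X_0\le -m\}\wedge T,\]
which is the natural stopping time for Lemma~\ref{lem:S.stopped.lsc}. Because $\ov X$ is continuous, $\tau^{m,0}\uparrow T$ as $m\to\infty$; a standard optional-sampling argument (based on the observation that $\ov X^\tau-\ov X_s$ is bounded by $2m$ on $[s,T]$, since on $\{\tau>s\}$ the stopped process stays in $[\ov X_0-m,\ov X_0+m]$ while on $\{\tau=s\}$ one has $\ov X^\tau_t=\ov X^\tau_s=\ov X_s$) yields that $\ov X$ is a $\ov\Q$-$\ov{\mathbb F}$-local martingale if and only if $\E_{\ov\Q}[1_A(\ov X^\tau_t-\ov X^\tau_s)]=0$ for every $s<t$, every $m\ge 1$, and every $A\in\ov{\mathcal F}_s$. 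Failure of this characterization produces some $s<t$, $m$, $A$ with $\E_{\ov\Q}[1_A(\ov X^\tau_t-\ov X^\tau_s)]\neq 0$; if the value happens to be negative, then applying the same analysis to $-\ov X$ (which is also not a local martingale) with its own asymmetric exit time $\inf\{r\ge s:\ov X_r-\ov X_0<-m\text{ or }\ov X_r-\ov X_0\ge m\}\wedge T$ produces a positive witness, and the strategy built below is translated back to $\ov X$ by negating the coefficients. Thus WLOG
\[\E_{\ov\Q}\big[1_A(\ov X^\tau_t-\ov X^\tau_s)\big]>0.\]

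\textbf{Steps 2--3: construction of $\ov H$ and lower semicontinuity.} Lemma~\ref{lem:approx.Ft} supplies continuous $\ov{\mathcal F}_s$-measurable $\tilde h_k\colon\ov\Omega\to[0,1]$ converging $\ov\Q$-a.s.\ to $1_A$; bounded convergence (using $|\ov X^\tau_t-\ov X^\tau_s|\le 2m$) gives $\E_{\ov\Q}[\tilde h_k(\ov X^\tau_t-\ov X^\tau_s)]\to\E_{\ov\Q}[1_A(\ov X^\tau_t-\ov X^\tau_s)]>0$, so fix $\tilde h:=\tilde h_k$ for some $k$ giving positive expectation and set $\ov H:=\tilde h\,1_{(s,t\wedge\tau]}$. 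This belongs to $\ov{\mathcal{H}}$ because $s$ and $t\wedge\tau$ are $\ov{\mathbb F}_+^{\ov\Delta}$-stopping times (the latter since $\tau$ is a hitting time of a closed set by a continuous $\ov{\mathbb F}_+$-adapted process) and $\tilde h$ is bounded and $\ov{\mathcal F}_{s+}^{\ov\Delta}$-measurable; using $s\wedge\tau=s$ one computes $(\ov H\cdot\ov X)_T=\tilde h(\ov X^\tau_t-\ov X^\tau_s)$. Next, the map $\ov\omega\mapsto\ov X(\ov\omega)-\ov X_0(\ov\omega)\in C[0,T]$ is continuous in both cases $\ov X=\ov S$ and $\ov X=\ov{\mathbb{S}}$, so Lemma~\ref{lem:S.stopped.lsc} applied with starting time $s$ and evaluation time $t$ yields that $\ov\omega\mapsto\ov X^\tau_t(\ov\omega)-\ov X_0(\ov\omega)$ is lower semicontinuous; since $\ov X_0$ and $\ov X^\tau_s=\ov X_s$ depend continuously on $\ov\omega$, the increment $\ov X^\tau_t-\ov X^\tau_s$ is lower semicontinuous, and multiplying by the bounded nonnegative continuous function $\tilde h$ preserves this. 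Therefore $(\ov H\cdot\ov X)_T$ is a bounded lower semicontinuous function on the Polish space $\ov\Omega$ with strictly positive $\ov\Q$-expectation.

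\textbf{Step 4: continuous approximation from below.} Define the Moreau-Yosida envelopes
\[\ov\gamma_n(\ov\omega):=\inf_{\ov\omega'\in\ov\Omega}\big\{(\ov H\cdot\ov X)_T(\ov\omega')+n\|\ov\omega-\ov\omega'\|_\infty\big\}.\]
Each $\ov\gamma_n$ is $n$-Lipschitz, uniformly bounded between $\inf(\ov H\cdot\ov X)_T$ and $\sup(\ov H\cdot\ov X)_T$, dominated pointwise by $(\ov H\cdot\ov X)_T$, and increases pointwise to $(\ov H\cdot\ov X)_T$ by lower semicontinuity. Monotone convergence then gives $\E_{\ov\Q}[\ov\gamma_n]\uparrow\E_{\ov\Q}[(\ov H\cdot\ov X)_T]>0$, so any $\ov\gamma:=\ov\gamma_n$ for which $\E_{\ov\Q}[\ov\gamma_n]>0$ is the desired bounded continuous dominator. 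The main technical obstacle is Step~3: securing the lower semicontinuity of $\ov X^\tau_t-\ov X^\tau_s$ despite the difference structure. This is precisely what forces the use of the asymmetric exit time in Lemma~\ref{lem:S.stopped.lsc} with starting time exactly $s$ (so that $\ov X^\tau_s=\ov X_s$ remains continuous and is subtracted from the lower semicontinuous $\ov X^\tau_t$) together with the sign reduction in Step~1 to a nonnegative test function with strictly positive expectation.
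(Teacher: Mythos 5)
Your Steps 2--4 are sound and essentially reproduce the paper's own approximation machinery (Lemma~\ref{lem:approx.Ft} to replace $1_A$ by a continuous $\ov\cF_s$-measurable coefficient, Lemma~\ref{lem:S.stopped.lsc} for lower semicontinuity of the stopped increment, and continuous minorants of a bounded lower semicontinuous function), just run in the forward direction. The genuine gap is Step 1. The ``if'' direction of your claimed characterization (all those expectations vanish $\Rightarrow$ $\ov X$ is a $\ov\Q$-$\ov\F$-local martingale) is not an optional-sampling statement --- optional sampling gives the opposite implication. To get the $\ov\F$-local martingale property one must pass from your asymmetric exit times, which are first entrance times of sets that are neither open nor closed and hence only $\ov\F_+$-stopping times (so they cannot serve as the localizing sequence), to the symmetric hitting times of closed sets, via a monotone approximation of the barrier levels together with bounded convergence. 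That limiting argument is precisely the measure-theoretic half of the paper's proof (the step $\ov\sigma_\varepsilon\uparrow\ov\sigma$ there), so it cannot simply be cited as standard; as written, the core of the lemma is assumed rather than proved.

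More seriously, the sign reduction is a non sequitur. A negative witness $\E_{\ov\Q}[1_A(\ov X^{\tau}_t-\ov X^{\tau}_s)]<0$ is attached to the exit time whose upper barrier is open and lower barrier closed; rewriting it for $Y=-\ov X$ does give a positive expectation, but then the barrier types are reversed for $Y$, Lemma~\ref{lem:S.stopped.lsc} yields upper rather than lower semicontinuity, and your Step 4 (continuous minorants with converging expectations) breaks down. Your alternative --- ``applying the same analysis to $-\ov X$ with its own asymmetric exit time'' --- only produces a nonzero witness of unknown sign; if that sign is again negative you are back where you started, so the claim that it ``produces a positive witness'' is unjustified. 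What is actually needed is the one-sided statement: if $\E_{\ov\Q}[1_A(\ov X^{\tau^+}_t-\ov X_s)]\le 0$ for all $s<t$, $m$, $A\in\ov\cF_s$, and the analogous inequalities hold for $-\ov X$ with its lower-semicontinuity-compatible exit times, then $\ov X$ is a local martingale. This is exactly the supermartingale-plus-submartingale argument by which the paper proves the contrapositive, and until it is supplied, the existence of a positively signed, LSC-compatible witness --- on which your whole construction rests --- is unproved.
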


\begin{proof}
Consider the set
  \begin{equation*}
 \ov{\Gamma}:=\{\ov \gamma\in C_b(\ov\Omega) : \ov \gamma\leq (\ov H\cdot \ov X)_T \text{ for some } \ov H\in\ov{\mathcal{H}}\}.
  \end{equation*}
  We prove that if $\E_{\ov{\Q}}[\ov \gamma]\leq 0$ for all 
  $\ov \gamma\in \ov\Gamma$, 
  then $\ov X$ is a $\ov{\Q}$-$\ov{\F}$-local martingale with localizing sequence 
  \begin{equation*}
  \ov\tau_m:=\inf\{ t\geq 0 : |\ov X_t|\geq m\}\wedge T,
  \end{equation*}
  i.e.~for every $m\in\mathbb{N}$, the stopped process
  \[
    \ov X^{\ov \tau_m}_t:=\ov X_{t\wedge \ov\tau_m}
  \]
  is a $\ov{\Q}$-$\ov{\F}$-martingale. Fix $m\in\mathbb{N}$ and write $\ov\tau:=\ov\tau_m$. First let us show that $\ov X^{\ov \tau}$ is an $\ov \F$-submartingale. To that end, let $0\leq s< t\leq T$, and define 
  \begin{align*}
    \ov\sigma&:=\inf\{ r\geq  s : |\ov X_r|\geq m \}\wedge T,\\
    \ov\sigma_\varepsilon&:=\inf\{ r\geq s :  \ov X_r> m-\varepsilon \text{ or } \ov X_r \leq -m+\varepsilon\}\wedge T
  \end{align*}
  for $0<\varepsilon \leq 1$. 
  Since both $\ov \tau$ and $\ov \sigma$ are  hitting times of a closed set and $\ov X$ is continuous, 
    they are $\ov \F$-stopping times, whereas $\ov\sigma_\varepsilon$ is a $\ov\F_+$-stopping time.
    
    Now, fix an arbitrary $\ov \cF_s$-measurable function $\ov h: \ov \Omega \to [0,1]$. 
    Notice that $\ov \sigma=\ov \tau$ on $\{\ov \tau\geq s\}$, so that 
      $1_{\{\ov\tau\geq s\}}(\ov X_t^{\ov \sigma}-\ov X_s)=\ov X^{\ov \tau}_t-\ov X_s^{\ov \tau}$. Moreover,  $\ov \sigma_\varepsilon$ increases to $\ov \sigma$ as $\varepsilon$ tends to $0$, and therefore $\ov X_t^{\ov \sigma_\varepsilon}\to \ov X_t^{\ov \sigma}$ by continuity of $\ov X$.  Since additionally $|\ov X_t^{\ov \sigma_\varepsilon}-\ov X_s|\leq 2m$, this shows that
      \begin{equation*}
       \E_{\ov{\Q}}[\ov h(\ov X^{\ov \tau}_t-\ov X_s^{\ov \tau})]
      =
      \E_{\ov \Q}[ \ov h\,1_{\{\ov \tau\geq s\}}\,(\ov X_t^{\ov \sigma}-\ov X_s)] 
              =\lim_{\varepsilon\to0}  \E_{\ov{\Q}}[ \ov h\,1_{\{\ov \tau\geq s\}}\,(\ov X_t^{\ov \sigma_\varepsilon}-\ov X_s)].
      \end{equation*}
 Recall that $\ov g:=\ov h1_{\{\ov \tau\geq s\}}\colon\ov\Omega\to[0,1]$ is 
   $\ov{\mathcal{F}}_s$-measurable. By Lemma \ref{lem:approx.Ft}, there exists a sequence of continuous $\ov{\mathcal{F}}_s$-measurable 
     functions $\ov g_k\colon\ov\Omega\to[0,1]$
     which converge $\ov{\Q}$-almost surely to $\ov g$.  By Lemma~\ref{lem:S.stopped.lsc} the function 
  $\ov\omega\mapsto \ov X_{t\wedge \ov\sigma_\varepsilon(\ov\omega)}(\ov\omega)$ is lower semicontinuous 
  for every $\varepsilon$.
  In particular,  for every fixed $k$
  it holds that
  \[   \ov \omega \mapsto (\ov H\cdot \ov X)_T(\ov \omega)\text{ is lower semicontinuous, where }
  \ov H:=\ov g_k 1_{(s,\ov \sigma_\varepsilon\wedge t]}\in\ov{\mathcal{H}}. \] 
  Since additionally $|\ov X_t^{\ov \sigma_\varepsilon}-\ov X_s|\leq 2m$ and $\ov\Omega$ is a Polish space, 
  there exists a sequence of continuous functions $\ov \gamma_n\colon\ov\Omega\to[-2m,2m]$ such that 
  $\ov \gamma_n\leq (\ov H\cdot \ov X)_T$ and $\ov \gamma_n$ increases pointwise to $(\ov H\cdot \ov X)_T$. 
  Therefore $\ov \gamma_n\in \ov \Gamma$, hence by assumption 
  \[   \E_{\ov{\Q}}[ \ov g_k(\ov X_t^{\ov\sigma_\varepsilon}-\ov X_s)] 
  = \E_{\ov{\Q}}[ (\ov H\cdot \ov X)_T] 
  =\sup_n \E_{\ov{\Q}}[\ov \gamma_n] 
  \leq 0.\]
  We conclude that
  \begin{equation*}
  \E_{\ov{\Q}}[\ov h(\ov X^{\ov \tau}_t-\ov X_s^{\ov \tau})]
  =
  \lim_{\varepsilon\to0}  \E_{\ov{\Q}}[ \ov h\,1_{\{\ov \tau\geq s\}}\,(\ov X_t^{\ov \sigma_\varepsilon}-\ov X_s)]
  =
  \lim_{\varepsilon\to0}  \lim\limits_{k \to \infty}\E_{\ov{\Q}}[ \ov g_k\,(\ov X_t^{\ov \sigma_\varepsilon}-\ov X_s)]\leq 0,
  \end{equation*}
  which implies $\ov{\Q}$-almost surely 
  $\E_{\ov{\Q}}[\ov X^{\ov \tau}_t|\ov{\mathcal{F}}_s]\leq \ov X^{\ov \tau}_s$, 
  hence $\ov X^{\ov \tau}$ is a $\ov{\Q}$-$\ov{\F}$-supermartingale. 

  By similar arguments one can also show that $\ov X^{\ov \tau}$ is a 
  $\ov{\Q}$-$\ov{\F}$-submartingale and thus a $\ov{\Q}$-$\ov{\F}$-martingale. 
\end{proof}

\vspace{1em}
{\bf Acknowledgment:}
Part of this research was carried out while the first author was visiting the Shanghai Advanced Institute of Finance and the School of Mathematical Sciences at the Shanghai Jiao Tong University in China, and he would like to thank Samuel Drapeau for his hospitality. 
This author also acknowledges financial support by the Vienna Science and Technology Fund (WWTF) through project VRG17-005 and by the Austrian Science Fund (FWF) under grant Y00782.
{The third author gratefully acknowledges the financial support by the NAP Grant and  by the Swiss National Foundation Grant  SNF~200020$\_$172815.}



\begin{thebibliography}{10}
	
	\bibitem{AcciaioBeiglbockPenknerSchachermayer.16}
	B.~Acciaio, M.~Beiglb{\"o}ck, F.~Penkner, and W.~Schachermayer.
	\newblock A model-free version of the fundamental theorem of asset pricing and
	the super-replication theorem.
	\newblock {\em Math. Finance}, 26(2):233--251, 2016.
	
	\bibitem{AvellanedaLevyParas.95}
	M.~Avellaneda, A.~Levy, and A.~Par{\'a}s.
	\newblock Pricing and hedging derivative securities in markets with uncertain
	volatilities.
	\newblock {\em Appl. Math. Finance}, 2(2):73--88, 1995.
	
	\bibitem{BartlKupperProemelTangpi.17}
	D.~Bartl, M.~Kupper, D.~Pr{\"{o}}mel, and L.~Tangpi.
	\newblock Duality for pathwise superhedging in continuous time.
	\newblock {\em Finance Stoch.}, 23(3):697--728, 2019.

  \bibitem{BartlCheriditioKupper.17}
  D.~Bartl, P.~Cheridito, and M.~Kupper.
  \newblock Robust expected utility maximization with medial limits.
  \newblock {\em Journal of Mathematical Analysis and Applications}, 471(1--2):752--775, 2019.
	
	\bibitem{BeiglbockCoxHuesmannPerkowskiProemel.17}
	M.~Beiglb{\"o}ck, A.~M.~G. Cox, M.~Huesmann, N.~Perkowski, and D.~J.
	Pr{\"o}mel.
	\newblock Pathwise superreplication via {V}ovk's outer measure.
	\newblock {\em Finance Stoch.}, 21(4):1141--1166, 2017.
	
	\bibitem{BurzoniFrittelliMaggis.15}
	M.~Burzoni, M.~Frittelli, and M.~Maggis.
	\newblock Universal arbitrage aggregator in discrete time under uncertainty.
	\newblock {\em Finance and Stochastics}, 20(1): 1--50, 2016.
	
	
	
	\bibitem{cheridito2019martingale}
	P.~Cheridito, M.~Kiiski, D.~J.
	Pr{\"o}mel, H.~M.~Soner.
	\newblock Martingale Optimal Transport Duality.
	\newblock {\em Preprint. arXiv:1904.04644}, 2019.
	
	
	
	\bibitem{DenisMartini.06}
	L.~Denis and C.~Martini.
	\newblock A theoretical framework for the pricing of contingent claims in the
	presence of model uncertainty.
	\newblock {\em Ann. Appl. Probab.}, 16(2):827--852, 2006.
	
	\bibitem{DolinskyNeufeld.16}
	Y.~Dolinsky and A.~Neufeld.
	\newblock Super-replication in fully incomplete markets.
	\newblock {\em Math. Finance}, 28(2):483--515, 2018.
	
	\bibitem{DolinskySoner.12}
	Y.~Dolinsky and H.~M.~Soner.
	\newblock Martingale optimal transport and robust hedging in continuous time.
	\newblock {\em Probab. Theory Related Fields}, 160(1--2):391--427, 2014.
	
	\bibitem{Ellsberg.61}
	D.~Ellsberg.
	\newblock Risk, ambiguity and the savage axioms.
	\newblock {\em Quarterly Journal of Economics}, 75:643--669, 1961.
	
	\bibitem{Follmer.81}
	H.~F\"ollmer.
	\newblock Calcul d'{I}t\^o sans probabilit\'es.
	\newblock In {\em Seminar on {P}robability, {XV} ({U}niv. {S}trasbourg,
		{S}trasbourg, 1979/1980) ({F}rench)}, volume 850 of {\em Lecture Notes in
		Math.}, pages 143--150. Springer, Berlin, 1981.
	
	\bibitem{Haussmann.86}
	U.~G. Haussmann.
	\newblock Existence of optimal markovian controls for degenerate diffusions.
	\newblock {\em Stochastic Differential Systems. 
		Lecture
		Notes in Control and Inform}, 78:171--186, 1986.
	
	\bibitem{Hobson.98}
	D.~Hobson.
	\newblock Robust hedging of the lookback option.
	\newblock {\em Finance Stoch.}, 2(4):329--347, 1998.
	
	\bibitem{HouObloj.15}
	Z.~Hou and J.~Ob{\l}{\'o}j.
	\newblock Robust pricing-hedging dualities in continuous time.
	\newblock {\em Finance Stoch.}, 22(3):511--567, 2018.
	
	\bibitem{Karandikar.95}
	R.~L. Karandikar.
	\newblock On pathwise stochastic integration.
	\newblock {\em Stochastic Process. Appl.}, 57(1):11--18, 1995.
	
	\bibitem{Kellerer84}
	H.~G. Kellerer.
	\newblock Duality theorems for marginal problems.
	\newblock {\em Zeitschrift f{\"u}r Wahrscheinlichkeitstheorie und verwandte
		Gebiete}, 67(4):399--432, 1984.
	
	\bibitem{ElKarouiQuenez.95}
	N.~El Karoui and M.-C. Quenez.
	\newblock Dynamic programming and pricing of contingent claims in an incomplete
	market.
	\newblock {\em SIAM J. Control Optim.}, 33(1):29--66, 1995.
	
	\bibitem{Knight.21}
	F.~H. Knight.
	\newblock {\em Risk, Uncertainty and Profit}.
	\newblock Houghton Mifflin Company, Boston, 1921.
	
	\bibitem{LiuNeufeld.16}
	C.~Liu and A.~Neufeld.
	\newblock Compactness criterion for semimartingale laws and semimartingale
	optimal transport.
	\newblock {\em Transactions of the American Mathematical Society}, 372(1):187--231, 2019.
	
	\bibitem{Lyons.95}
	T.~J. Lyons.
	\newblock Uncertain volatility and the risk-free synthesis of derivatives.
	\newblock {\em Appl. Math. Finance}, 2(2):117--133, 1995.
	
	\bibitem{Mykland.03}
	P.A. Mykland.
	\newblock Financial options and statistical prediction intervals.
	\newblock {\em Ann. Statist.}, 31(5):1413--1438, 2003.
	
	\bibitem{Neufeld.17}
	A.~Neufeld.
	\newblock Buy-and-hold property for fully incomplete markets when
	super-replicating markovian claims.
	\newblock {\em International Journal of Theoretical and Applied Finance}, 21(7):1850051--1--12, 2018.
	
	\bibitem{NeufeldNutz.12}
	A.~Neufeld and M.~Nutz.
	\newblock Superreplication under volatility uncertainty for measurable claims.
	\newblock {\em Electron. J. Probab.}, 18(48):1--14, 2013.
	
	\bibitem{NeufeldNutz.13a}
	A.~Neufeld and M.~Nutz.
	\newblock Measurability of semimartingale characteristics with respect to the
	probability law.
	\newblock {\em Stochastic Process. Appl.}, 124(11):3819--3845, 2014.
	
	\bibitem{NutzSoner.10}
	M.~Nutz and H.~M. Soner.
	\newblock Superhedging and dynamic risk measures under volatility uncertainty.
	\newblock {\em SIAM J. Control Optim.}, 50(4):2065--2089, 2012.
	
	\bibitem{NutzZhang.15}
	M.~Nutz and J.~Zhang.
	\newblock Optimal stopping under adverse nonlinear expectation and related
	games.
	\newblock {\em Annals of Applied Probability}, 25(5):2503--2534, 2015.
	
	\bibitem{Peng.07}
	S.~Peng.
	\newblock {$G$}-expectation, {$G$}-{B}rownian motion and related stochastic
	calculus of {I}t{\^o} type.
	\newblock In {\em Stochastic Analysis and Applications}, volume~2 of {\em Abel
		Symp.}, pages 541--567, Springer, Berlin, 2007.
	
	\bibitem{PerkowskiPromel.16}
	N.~Perkowski and D.~J. Pr{\"o}mel.
	\newblock Pathwise stochastic integrals for model free finance.
	\newblock {\em Bernoulli}, 22(4):2486--2520, 2016.
	
	\bibitem{PossamaiRoyerTouzi.13}
	D.~Possamai, G.~Royer, and N.~Touzi.
	\newblock On the robust superhedging of measurable claims.
	\newblock {\em Electron. Commun. Probab.}, 18(95):1--13, 2013.
	
	\bibitem{SonerTouziZhang.2010bsde}
	H.~M. Soner, N.~Touzi, and J.~Zhang.
	\newblock Wellposedness of second order backward {SDE}s.
	\newblock {\em Probab. Theory Related Fields}, 153(1--2):149--190, 2012.
	
	\bibitem{TanTouzi.11}
	X.~Tan and N.~Touzi.
	\newblock Optimal transportation under controlled stochastic dynamics.
	\newblock {\em Ann. Probab}, 41(5):3201--3240, 2013.
	
	\bibitem{Vovk.12}
	V.~Vovk.
	\newblock Continuous-time trading and the emergence of probability.
	\newblock {\em Finance Stoch.}, 16(4):561--609, 2012.
	
	\bibitem{Vovk.16}
	V.~Vovk.
	\newblock Another example of duality between game-theoretic and
	measure-theoretic probability.
	\newblock {\em Preprint. arXiv:1608.02706}, 2016.
	
\end{thebibliography}

\newcommand{\dummy}[1]{}


\end{document}